\newcommand{\st}{\mathrel : }
\newenvironment{cenv}{\begin{list}{}{%
      \setlength{\labelwidth}{1.5em}%
      \setlength{\leftmargin}{\labelwidth}%
      \addtolength{\leftmargin}{\labelsep}%
      \setlength{\listparindent}{0em}%
      \setlength{\topsep}{10pt}%
      \setlength{\itemsep}{5pt}%
      \setlength{\parsep}{0pt}%
    }
  }{
  \end{list}
}
\newcounter{claimcounter}
\newenvironment{Claim}{
  
  \refstepcounter{claimcounter}
  \begin{cenv}
  \item[{Claim \arabic{claimcounter}.}]
  }{
  \end{cenv}
}
\newenvironment{ClaimProof}[1][]{\noindent{%
\ifthenelse{\equal{#1}{}}{{\sl Proof.\ }}{{\sl #1.\ }}%
}}{\hspace*{1em}\nobreak\hfill$\dashv$\endtrivlist\addvspace{2ex plus
0.5ex minus0.1ex}}
\renewenvironment{proof}[1][]%%
{\setcounter{claimcounter}{0}%
    \ifthenelse{\equal{#1}{}}{\noindent\textit{Proof.
    }}{\noindent\textit{#1. }}%
}%
{\hspace{1em}\nobreak\hfill$\Box$\endtrivlist\addvspace{2ex plus
        0.5ex minus0.1ex}}
\newenvironment{Proof}[1][]%%
{\setcounter{claimcounter}{0}%
    \ifthenelse{\equal{#1}{}}{\noindent\textit{Proof.
        }}{\noindent\textit{#1. }}%
    }%
    {\hspace{1em}\nobreak\hfill$\Box$\endtrivlist\addvspace{2ex 
    plus
            0.5ex minus0.1ex}}
\newenvironment{apthm}[1]{\par\addvspace{3mm}\noindent\textbf {Theorem~\ref{#1}\;}}{\par\addvspace{3mm}}
\numberwithin{equation}{section}
\numberwithin{figure}{section}
\newcommand{\butterfly}{\preceq_b}
\newcommand{\topo}{\preceq_t}
\renewcommand{\st}{\mathrel : }
\newcommand{\N}{{\mathbb N}}
\newcommand{\CCC}{\mathcal{C}}
 \newcommand{\LLL}{\mathcal{L}}
 \newcommand{\PPP}{\mathcal{P}}
\newcommand{\QQQ}{\mathcal{Q}} 
 \newcommand{\TTT}{\mathcal{T}}
\newcommand{\Oof}{\mbox{$\cal O$}}
\newcommandx{\complete}[2][1=]{\todo[linecolor=red,backgroundcolor=red!25,bordercolor=red,#1]{#2}}
\newcommandx{\change}[2][1=]{\todo[linecolor=blue,backgroundcolor=blue!25,bordercolor=blue,#1]{#2}}
\newcommandx{\stephan}[2][1=]{\todo[linecolor=black,backgroundcolor=black!25,bordercolor=black,#1]{#2}}
\newcommandx{\saeed}[2][1=]{\todo[linecolor=green,backgroundcolor=green!25,bordercolor=green,#1]{#2}}
\renewcommand{\phi}{\varphi}
\renewcommand{\epsilon}{\varepsilon}
\newcommand{\dtw}{\textit{dtw}}
\newcommand{\usec}[1]{\par\smallskip\noindent\textbf{#1. }}
\begin{document}

%+Title
\title{The Erd\H os-P\'osa Property for Directed Graphs}
\titlerunning{The Erd\H os-P\'osa Property for Directed Graphs}

\author{Saeed Akhoondian Amiri\inst{1}\thanks{Saeed Amiri's and Stephan
    Kreutzer's research partly
    supported by the European Research Council (ERC) under the European Union’s Horizon
2020 research and innovation programme (grant agreement No 648527). } \and Ken-ichi
  Kawarabayashi\inst{2}\thanks{    Research partly
    supported by Japan Society for the Promotion of Science,
    Grant-in-Aid for Scientific Research,
    by C and C Foundation, by Kayamori Foundation,
    by Inoue Research Award for Young Scientists and
    by JST ERATO Kawarabayashi Project.} \and Stephan Kreutzer\inst{1}
  \and \\ Paul Wollan\inst{3}}
% \author[2]{Ken-ichi Kawarabayashi}
% \author[1]{Stephan Kreutzer}
% \author[3]{Paul Wollan}

\institute{Technical University Berlin,
    \texttt{\{saeed.amiri,stephan.kreutzer\}@tu-berlin.de}.
  \and 
National Institute of Informatics,Tokyo, 
\texttt{k\_keniti@nii.ac.jp} \and
University of Rome "La Sapienza", \texttt{paul.wollan@di.uniroma1.it}}

\maketitle  

% \Copyright{\parbox[t]{12.3cm}{Saeed Akhoondian Amiri, Ken-ichi
%     Kawarabayashi, Stephan Kreutzer, Paul Wollan}}

\pagestyle{plain}

\begin{abstract}
   A classical result by  Erd\H{o}s and P\'osa\cite{EP} states that there is a
 function $f\st \N \rightarrow \N$ such that for every $k$, every graph
 $G$ contains $k$ pairwise vertex disjoint cycles or a set $T$ of at
 most $f(k)$ vertices such that $G-T$ is acyclic.  The generalisation
 of this result to directed graphs is known as 
 Younger's conjecture and was proved by Reed, Robertson,
 Seymour and Thomas in 1996.

This so-called
 Erd\H{o}s-P\'osa property can naturally be generalised to 
 arbitrary graphs and digraphs. Robertson and Seymour proved that a
 graph $H$ has the 
 Erd\H{o}s-P\'osa-property if, and only if, $H$ is planar.

 In this paper we study the corresponding problem for digraphs. 
 We obtain a complete characterisation of the class of strongly
 connected digraphs which have the Erd\H{o}s-P\'osa-property (both for 
 topological
 and butterfly minors). We also generalise this result to classes of
 digraphs which are not strongly connected. In particular, we study the class of
 vertex-cyclic digraphs (digraphs without trivial strong components). For this natural class of
 digraphs we obtain a nearly complete characterisation of the
 digraphs within this class with the Erd\H{o}s-P\'osa-property. In
 particular we give positive and algorithmic examples of digraphs with the Erd\H{o}s-P\'osa-property by using directed tree
 decompositions in a novel way.
\end{abstract}
% \begin{abstract}
%  A classical result of Robertson and Seymour states that the set of
%  graphs containing a fixed graph $H$ as a minor has the so-called
%  Erd\H{o}s-P\'osa property (Erd\H{o}s-P\'osa-property), if and only if $H$ is
%  planar; namely, there exists a function $f$ depending only on a
%  planar graph $H$ such that, for every graph $G$ and every positive
%  integer $k$, either $G$ has $k$ vertex-disjoint subgraphs each
%  containing $H$ as a minor, or there exists a subset $X$ of vertices
%  of G of size at most $f(k)$ such that $G - X$ has no $H$-minor. This
%  is no longer true if $H$ is nonplanar. 

%  In this paper, we prove the corresponding result for digraphs. Namely, the set of graphs containing a fixed graph $H$ as a (butterfly) minor has the Erd\H{o}s-P\'osa property, if and only if $H$ is ``strongly'' planar and does not contain two specific subgraphs.
% \end{abstract}

%\clearpage
%\setcounter{page}{1}
\section{Introduction}
A classical result by  Erd\H{o}s and P\'osa states that there is a
function $f\st \N \rightarrow \N$ such that for every $k$, every graph
$G$ contains $k$ pairwise vertex disjoint cycles or a set $T$ of at
most $f(k)$ vertices such that $G-T$ is acyclic.

In \cite{GMV}, Robertson and Seymour considered a natural generalisation of
this result to arbitrary graphs: a graph $H$ has the Erd\H{o}s-P\'osa property
if there is a function $f\st \N\rightarrow \N$ such that every graph $G$ either
has $k$ disjoint copies of $H$ as a minor or contains a set $T$ of at most
$f(k)$ vertices such that $H$ is not a minor of $G-T$. They showed that a graph
$H$ has the Erd\H{o}s-P\'osa-property in this sense if, and only if, it is
planar. In fact, their proof implies that $H$ has the Erd\H{o}s-P\'osa-property 
for minors if, and only if, there is an integer $h=f(|H|)$ such that a
$(h \times h)$-grid contains $H$ as a minor.  

It follows from the grid theorem (see \cite{GMV}), which says that if 
the tree width of a given 
graph $G$ is at least $f(t)$, then $G$ has a $(t \times t)$-grid as a minor,
that if there is no $(h \times h)$-grid minor, then the tree width of $G$ is at 
most $f(h)$. 
Finally, it is proved in \cite{GMV} that the Erd\H{o}s-P\'osa-property holds 
within any 
graph of 
bounded tree-width. 
These facts imply the above characterization.

In \cite{Younger73}, Younger conjectured the natural generalisation of
Erd\H{o}s and P\'osa's original result to directed graphs and directed
cycles. This conjecture has received considerable 
attention by the research community, and 
it has been open for nearly a quarter of a century. 
Following several partial results, Younger's
conjecture was eventually confirmed by Reed et al. in \cite{ReedRST96}.

In this paper we consider the generalisation of Younger's conjecture to
arbitrary digraphs $H$. Whereas for undirected graphs, there is an agreed 
notion of
minor, for directed graphs there are several competing proposals. Here
we study the Erd\H{o}s-P\'osa property  for two common
notions of directed minors, \emph{topological minors} and \emph{butterfly
	minors}. See Section~\ref{sec:prelims} for details.

\begin{definition}\label{def:ep-property}
  A digraph $H$ has the \emph{Erd\H
    os-P\'osa property} for topological minors if there is a
  function $f \st \N\rightarrow \N$ such that for all $k\geq 0$, every digraph
  $G$ either contains $k$ disjoint subgraphs each containing $H$ as a
  topological minor or there is a set $S\subseteq V(G)$ of at most
  $f(k)$ vertices such that $G - S$ does not contain $H$ as a
  topological minor. The definition for butterfly minors is analogous.
\end{definition}

For both concepts of minors we give a complete characterisation of the strongly connected
digraphs which have the Erd\H{o}s-P\'osa property. It turns out that 
our characterisation is essentially the analogy of the above mentioned 
Robertson-Seymour theorem 
for undirected graphs. We prove that a digraph $H$ has the Erd\H{o}s-P\'osa-property for 
topological minors (butterfly minors),
if, and only if,
there is an integer $h=f(|H|)$ such that
a cylindrical wall (grid) of order $h$  contains $H$ as a 
topological minor (butterfly minor).

Note that if $H$ is a cycle, then this 
is exactly Younger's conjecture. Hence, the first main result of this paper is 
the following (see Section~\ref{sec:prelims} and~\ref{sec:dtw} for details).

\begin{apthm}{thm:ep-sc}
  Let $H$ be a strongly connected digraph.
  $H$ has the Erd\H os-P\'osa
  property for butterfly (topological) minors if, and only if, 
  there is a cylindrical grid (wall) $G_c$, for some constant $c = c(H)$,
  such that $H$ is a butterly (topological) minor of $G_c$.
  
  Furthermore, for every fixed strongly connected digraph $H$ satisfying 
  these conditions and every $k$ there is a polynomial time algorithm which, 
  given a digraph $G$ as input, either computes $k$ disjoint (butterfly or 
  topological) models of $H$ in 
  $G$ or a set $S$ of $\leq h(k)$ vertices such that $G-S$ does not contain a 
  model of $H$.
\end{apthm}

This result is particularly interesting as there is no similar
classification known for undirected graphs in terms of topological minors.

If $H$ is not strongly connected, then our techniques described above
fails. In fact they fail already in the bounded
directed tree width case. Nevertheless, we are able to provide a far reaching characterisation of the 
Erd\H os-P\'osa property for a much larger class of digraphs.
In particular, we study the much more
general class of \emph{vertex-cyclic} digraphs, i.e.~digraphs without trivial
strong components (components consisting of a single vertex only). For this
natural class of digraphs we obtain the following result (see Section~\ref{sec:ep-vc} for details).

\begin{apthm}{thm:ep-vc}
  Let $H$ be a weakly connected vertex-cyclic digraph.
  If $H$ has the Erd\H os-P\'osa property for
  butterfly (topological) minors, then it is  
  ultra-homogeneous with respect to butterfly (topological)
  embeddings, its maximum degree is at most
  $3$ and every strong 
  component of $H$ is a butterly (topological) minor of some 
  cylindrical grid (wall) $G_k$.
\end{apthm}

We also obtain a positive result as an example of a digraph satisfying
the properties in the previous theorem. This theorem is probably the
most challenging result of the paper using directed tree
decompositions algorithmically in a novel way which may be of independent interest.

\begin{apthm}{thm:two-random-cycles}
  Let $H$ be a digraph consisting of two disjoint cycles joined by a
  single edge. There is a function $h:\N\rightarrow\N$ such that
  for every integer $k$ and every graph $G$ either there are $k$
  distinct topological models of $H$ in $G$ or there is a set
  $S\subseteq V(G)$ such that $|S|\le h(|H|+k)$ and $H\not\topo
  G-S$. 

  Furthermore, for every $H$ and $k$ there is a polynomial-time
  algorithm which either finds $k$ distinct
  topological models of $H$ in $G$ or finds a set $S\subseteq G$ of
  vertices of size at
  most $h(|H|+k)$ which hits every topological model of~$H$ in~$G$.
\end{apthm}

% All of our results are algorithmic in the sense that for those digraphs $H$
% with the Erd\H{o}s-P\'osa property we not only get an existential statement,
% but also an algorithm computing either $k$ disjoint copies of $H$ in an input
% digraph $G$ or otherwise compute a set of $f(k)$ vertices such that $G-T$ does
% not contain $H$ as a topological or butterfly minor.

%%%%%%%%%%%%%%%%%%%%%%%%%%%%%%%%%%%%%%
\section{Preliminaries}
\label{sec:prelims}

In this section we briefly recall necessary definitions and fix our notation. 
We denote the set of non-negative integers by $\N$. For $n\in \N$
 we write $[n]$ for
the set of integers $\{1, \dots, n\}$. 

%\usec{Digraphs}
We refer the reader to~\cite{BangJensenG10,Diestel05} for basic concepts of 
graph and digraph theory. 
All graphs and digraphs in this
paper are finite without  loops. We denote the vertex set of $G$ by $V(G)$ and 
its 
edge set by $E(G)$. If $G$ is a digraph and $S\subseteq V(G)$ or 
$S\subseteq E(G)$, then $G[S]$ 
denotes the subgraph of $G$ \emph{induced by $S$}. For $S\subseteq V(G)$ we 
write 
$G-S$ for the subgraph of $G$ induced by $V(G)-S$. For 
vertices $v\in V(G)$, edges $e\in E(G)$ or sets $F\subseteq E(G)$ we define 
$G-v, G-e, G-F$ analogously.
 
The \emph{in-degree} $d_G^-(v)$ of a vertex $v$ in a digraph
$G$ is the number of edges with head $v$ in $G$. The \emph{out-degree} 
$d^+_G(v)$
is the number of edges with tail $v$. By \emph{degree} $d_G(v)$ of $v$ we mean
the sum $d^+_G(v) + d^-_G(v)$. We usually drop the index if $G$ is
clear from the context.

A \emph{strong component} (or component) in a digraph $G$ is a maximal strongly connected 
subgraph. The \emph{block graph} of a digraph $G$ is the digraph obtained from 
$G$ by contracting each strong component into a single vertex.
We call a digraph \emph{weakly connected} if its underlying 
undirected 
graph is connected.

%%%%%%%%%%%%%%%%%%%%%%%%%%%%%%%%%%%%%%%%%%%%%%%%%%%%%%%%%
\section{Directed Minors, Directed Grids and Directed Tree-Width}
\label{sec:dtw}

\usec{Directed Minors}
In this section we introduce two different kinds of
minors, butterfly minors (see \cite{JohnsonRobSeyTho01}) and topological 
minors, and establish same basic properties needed below.

\begin{definition}[butterfly minor]\label{def:butterfly}
  Let $G$ be a digraph. An edge $e = (u,v)\in E(G)$ is
  \emph{butterfly-contractible} if $e$ is the only outgoing edge of
  $u$ or the only incoming edge of $v$. In this case the graph $G'$
  obtained from $G$ by butterfly-contracting $e$ is the graph with
  vertex set $(V(G) - \{u,v\}) \cup \{x_{u,v}\}$, where $x_{u,v}$ is a
  fresh vertex. The edges of $G'$ are the same as the edges of $G$ except for the edges
  incident with $u$ or $v$. Instead, the new vertex $x_{u,v}$ has the
  same neighbours as $u$ and $v$, eliminating parallel edges. A
  digraph $H$ is a \emph{butterfly-minor} of $G$, denoted $H\butterfly G$, if 
  it 
  can be obtained
  from a subgraph of $G$ by butterfly contraction.
\end{definition}

% \begin{figure}
% 	\begin{center}
% 	\includegraphics[height=3cm]{butterfly.pdf}
% 	\end{center}
% 	\caption{Butterfly contracting the dotted edge in the digraph on the left.}
% 	\label{fig:butterfly}
% \end{figure}

% See Figure~\ref{fig:butterfly} for an illustration of butterfly 
% contractions.
We aim at an alternative characterisation of 
butterfly minors. Let $H, G$ be digraphs such that $H\butterfly 
G$. Let $G'$ be a subgraph of $G$ such that $H$ can be obtained 
from $G$ by butterfly contraction and let $F\subseteq E(G')$ be 
the set of edges contracted in $G'$ to form $H$. Then we can 
partition $F$ into disjoint sets $F_1, \dots, F_h$ 
such that the edges in each $F_i$ are contracted to form a 
single vertex. Hence, $h = |V(H)|$ and no two edges $e_1, e_2$ 
from different sets $F_i \not= F_j$ share an endpoint. 
 The edges of $G'$ not in any $F_i$ are in 
one-to-one correspondence to the edges of $H$. Hence, we can also 
define butterfly minors by a map $\mu$ which assigns to every 
edge $e \in E(H)$ an edge $e\in E(G)$ and to every $v\in V(H)$ a 
subgraph $\mu(v) \subseteq G$ which is $G'[F_i]$ for some $i$ as 
above. We call this a \emph{butterfly model} or \emph{image} of $H$ in 
$G$. As shown in the following lemma, we can always choose an 
image such that $\mu(v)$ are particularly simple. 

\begin{lemma}\label{lem:tree-like-butterfly}
	Let $H, G$ be digraphs such that $H\butterfly G$. Then there is a function
	$\mu$ which maps $E(H)$ to $E(G)$ and vertices $v\in
	V(H)$ to subgraphs $\mu(v) \subseteq G$ such that
	\begin{itemize}[nosep]
		\item $\mu(u)\cap \mu(v) =\emptyset$ for any $u\not=v\in E(H)$,
		\item for all $e = (u,v)\in E(H)$ the edge $\mu(e)$ has its tail in
		$\mu(u)$ and its head in $\mu(v)$,
		\item for all $v\in V(H)$, $\mu(v)$ is the union of an in-branching $T_i$ 
		and out-branching $T_o$ which only have their roots in common and such 
		that for every $e\in E(H)$, if $v$ is the head of $e$ then the head of 
		$\mu(e)$ is a vertex in $T_i$ and if $v$ is the tail of $v$ then the tail 
		of $\mu(e)$ is in $T_o$. 
	\end{itemize}
    We call a map $\mu$ as above a \emph{tree-like model} of $H$ 
    in $G$. We define $\mu(H) := \bigcup_{f\in E(H)\cup V(H)}\mu(f)$.
\end{lemma}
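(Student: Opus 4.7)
The plan is to proceed by induction on the length $N$ of a butterfly contraction sequence $G^{(0)} = G' \subseteq G,\, G^{(1)}, \ldots, G^{(N)} = H$ witnessing $H \butterfly G$. To make the induction go through, I would strengthen the hypothesis: at every intermediate digraph $G^{(j)}$, not only does a tree-like model $\mu^{(j)}$ exist, but the decomposition at each branch set $\mu^{(j)}(u) = T_i^u \cup T_o^u$ can be chosen so that $T_o^u$ (respectively $T_i^u$) is a minimum Steiner out-branching (respectively in-branching) from $r_u$ containing all the tails (respectively heads) of the images of $G^{(j)}$-edges incident with $u$. Equivalently, every non-root vertex of $T_o^u$ lies on a directed path from $r_u$ to some such out-port, and similarly for $T_i^u$. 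The base case $N = 0$ is trivial: set $\mu(v) = \{v\}$ and $T_i^v = T_o^v = \{v\}$ for each $v$.

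For the inductive step, let the final contraction merge $\varepsilon = (a, b) \in E(G^{(N-1)})$ into a vertex $v \in V(H)$. I would apply the strengthened inductive hypothesis to $G^{(N-1)}$ to get a tree-like model $\mu'$, writing $\mu'(\varepsilon) = (x^*, y^*)$; by the tree-like property, $x^* \in T_o^a$ and $y^* \in T_i^b$. Set $\mu(u) := \mu'(u)$ for $u \neq v$ and $\mu(v) := \mu'(a) \cup \mu'(b) \cup \{(x^*, y^*)\}$. By symmetry, assume $\varepsilon$ is the unique outgoing edge of $a$ in $G^{(N-1)}$. Then the only out-port of $\mu'(a)$ is $x^*$, so by the minimality part of the strengthened hypothesis, $T_o^a$ is precisely the directed path from $r_a$ to $x^*$ in $\mu'(a)$. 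Consequently $T_i^a \cup T_o^a$ is itself an in-branching of $\mu'(a)$ rooted at $x^*$. I would then set $r_v := r_b$ and define
\[
T_i^v := T_i^a \cup T_o^a \cup \{(x^*, y^*)\} \cup T_i^b, \qquad T_o^v := T_o^b.
\]
A routine verification shows that $T_i^v$ is an in-branching to $r_b$, that $T_o^v$ is an out-branching from $r_b$, and that $T_i^v \cap T_o^v = \{r_b\}$ with $T_i^v \cup T_o^v = \mu(v)$. The external edge conditions follow by tracing the origin of each $H$-edge at $v$: in-edges of $v$ are images of in-edges of $a$ or $b$ in $G^{(N-1)}$, whose heads lie in $T_i^a \cup T_o^a$ or in $T_i^b$, hence in $T_i^v$; and out-edges of $v$ come only from $b$ (since $\varepsilon$ was the only out-edge of $a$), so their tails lie in $T_o^b = T_o^v$.

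The main obstacle is the strengthened inductive hypothesis itself, namely the claim that the decomposition of $\mu'(a)$ can be chosen so that $T_o^a$ reduces to the $r_a$-to-$x^*$ path when $x^*$ is the unique out-port, and that this minimality property survives each inductive step. Preservation of minimality for $T_o^v$ is immediate because $T_o^v = T_o^b$ is inherited directly from the minimal $T_o^b$. For $T_i^v$, minimality reduces to checking that every vertex of $\mu'(a)$ absorbed into $T_i^v$ lies on a directed path from some in-port of $\mu(v)$ to $r_v$: such a vertex is either an in-port itself, a Steiner vertex of the minimal $T_i^a$, or a vertex on the path $T_o^a$ needed to reach the bridge $x^*$, all of which lie on the composite path in-port $\to r_a \to x^* \to y^* \to r_b$. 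This shows the strengthened invariant is maintained, and the symmetric case where $\varepsilon$ is the unique incoming edge of $b$ is handled analogously by re-rooting at $r_v := r_a$.
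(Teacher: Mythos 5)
Your overall route differs from the paper's (which takes a minimal counterexample, contracts a butterfly-contractible edge of $G$ lying inside a branch set, obtains a tree-like model of the contracted graph by minimality, and lifts it back to $G$ via a case analysis), and the core of your construction --- merging $\mu'(a)$, the bridge edge $(x^*,y^*)$ and $\mu'(b)$ into one branch set, using that the side whose only incident edge is $\varepsilon$ collapses to a path --- is sound. However, the induction as you set it up does not close: the verification that the strengthened ``Steiner-minimality'' invariant survives the merge fails in concrete cases. Your argument that every vertex of $\mu'(a)$ absorbed into $T_i^v$ lies on the composite path ``in-port $\to r_a\to x^*\to y^*\to r_b$'' presupposes that $\mu'(a)$ has an in-port and that this in-port is still an in-port of $\mu(v)$. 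If $a$ has no incoming edges in $G^{(N-1)}$ (a perfectly ordinary situation), then $\mu'(a)$ has no in-ports, and the absorbed vertices $r_a,\dots,x^*$ --- as well as those vertices of $T_i^b$ whose presence was justified only via the port $y^*$ of the contracted edge $\varepsilon$, which in general is no longer a port of $\mu(v)$ --- lie on no in-port-to-root path in $T_i^v$; ports lost to parallel-edge identification after the contraction cause the same problem. So the invariant fails for $\mu(v)$, and at a later step in which $v$ is the head of a contracted edge that is the unique in-edge of $v$ in the then-current graph, you can no longer conclude that $T_i^v$ is a path ending in $r_v$; then $T_i^v\cup T_o^v$ need not be a branching rooted at the unique in-port, and the merge construction breaks down.

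The gap is repairable, and in fact the strengthened hypothesis is unnecessary: at the point of use, when $x^*$ is the unique out-port of $\mu'(a)$, simply prune $T_o^a$ to the unique directed $r_a$--$x^*$ path by repeatedly deleting leaves of $T_o^a$ other than $x^*$. The deleted vertices are not out-ports (the only out-port is $x^*$) and not in-ports (heads of images of in-edges of $a$ lie in $T_i^a$, which meets $T_o^a$ only in $r_a$), so the pruned $\mu'$ is still a tree-like model of $G^{(N-1)}$ and your merge then goes through verbatim; alternatively, keep the invariant but re-establish it after every merge by the same non-port leaf pruning. With that repair, your induction along the contraction sequence is correct and arguably more transparent than the paper's minimal-counterexample argument. (Two details you silently pass over, as does the paper: an edge of $H$ at $v$ may have two preimage edges in $G^{(N-1)}$ after parallel-edge elimination, so you must fix one of them as $\mu(e)$; and a possible edge $(b,a)$ would become a loop under the contraction and has to be discarded.)
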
	
\begin{proof}
    Suppose the claim was false. Then there are digraphs $H, G'$ 
    such that $H\butterfly G'$ but $H$ has no tree-like model in 
    $G'$. We call such a pair a counter example. Choose such a 
    pair and fix $H$. Within all $G$ such that $(H, G)$ is a 
    counter example let $G$ be a digraph of minimal order and, 
    subject to this, with a minimal number of edges.
    
    For any model $\mu$ of $H$ in $G$ let us call the complexity 
    of $\mu$ the number of edges that are contracted. Let $\mu$ 
    be an image of $H$ in $G$ of minimal complexity. We prove by 
    induction on the complexity that $\mu$ is tree-like. 
    Clearly, if the complexity is $0$, i.e.~no edges need to be 
    contracted, then $\mu$ is tree-like.
    So suppose the complexity is at least $1$. Let $G' := \mu(H)\subseteq 
    G$ be the minimal subgraph of $G$ containing all of $\mu$. 
    By the choice of $G$, we have $G'=G$.
    Choose an edge $e = (u,v)\in E(\mu(v))$ for some $v\in V(H)$ 
    that is 
    butterfly-contractible in $G$ and let $G^*$ be the digraph 
    obtained from $G$ by contracting $e$. Let $x$ be the 
    new vertex generated by contracting $e$. Then $H\butterfly 
    G^*$ and, as $G^*$ has lower order than $G$, there is a 
    tree-like model $\mu^*$ of $H$ in $G^*$. If $x$ is not in 
    $\bigcup_{v\in V(H)}\mu(v)$, then $\mu^*$ is a  
    model of $H$ in a proper subgraph of $G$, contradicting the 
    choice of $G$. So there is a $z\in V(H)$ such that $x\in 
    \mu^*(z)$. Let $F^* = E(\mu^*(v))$. 
    
    We define a set $F\subseteq E(G)$ as follows. Every edge in 
    $F^*$ is 
    either an edge in $G$ or has $x$ as one endpoint. 
    If $e = (w, x)\in F^*$ then $(w, u) \in E(G)$ or $(w, v)\in 
    E(G)$ (or both). If $(w, u)$ exists, we add it to 
    $F$, otherwise we add $(w, v)$. 
    Similarly, if $(x, w) \in F^*$, for some $w\in 
    V(G^*)$, then at least one of $(u,w)$ or $(v, w)$ is in 
    $E(G)$. If $(v, w)$ exists, we add it to $F$ and 
    otherwise we add $(u,w)$. For all $v'\not= z\in V(H)$ we 
    set $\mu(v') := \mu^*(v')$ and we set $\mu(z) := G[F]$. 
    Finally, for all edges $e\in V(H)$, if $\mu^*(e)$ does not 
    contain $x$ as an endpoint we set $\mu(e) := \mu^*(e)$. If 
    $\mu^*(e) := e'$ with $e' = (w, x)$ then $(w, u)$ or $(w, 
    v)$ exist in $E(G)$. If $(w, u)\in E(G)$, we set $\mu(e) := 
    (w, u)$ and otherwise we set $\mu(e) := (w, v)$. If $e = (x, 
    w)$ we proceed analogously, setting $\mu(e) := (v, w)$ if it 
    exists and otherwise $\mu(e) := (u, w)$. 
    
    We claim that $\mu$ is a tree-like model of $H$ in $G$. 
    Suppose not. We know that $\mu^*$ is a tree-like model. 
    Hence, for every $v'\not= z$, $\mu(v')$ is tree-like. So 
    only $\mu(z)$ may violate the tree-condition. Furthermore, 
    the edges in $\mu^*(z)$ induce a tree-like model, 
    i.e.~$\mu^*(v)$ consists of the union of an in-branching $T_i$ 
    and an out-branching $T_o$ as in the statement of the lemma.
    One of the vertices in $\mu^*(z)$ is the fresh vertex $x$. 
    Suppose first that $x\in V(T_i)\setminus V(T_o)$. 
    If all incoming edges of $x$ in $\mu^*$ have been replaced 
    by edges with head $u$ and the unique out-going edge by an 
    edge with 
    tail $v$, then $\mu(v)$ is tree-like. So at least one 
    incoming edge of $x$ has been replaced by an edge $e_i = (w, 
    v)$ or the unique out-going edge $e^*_o$ of $x$ has been 
    replaced by $(u, w)$, for some $w\in V(G)$. If only the 
    out-going edge has been replaced by $(u, w)$, then $v$ has 
    no incoming and only one out-going edge to $u$, so we can 
    simply delete $v$ from $\mu(z)$ and obtain a  
    model. But this would violate the choice of $G$. Hence, at 
    least one edge $(w, x)$ has been replaced 
    by $(w, v)$. However, if the out-going edge of $x$ has been 
    replaced by an edge $(v, w)$, then we still have a tree-like 
    model. Hence, the only case where $\mu$ is not tree-like is 
    if the out-going edge of $x$ in $\mu^*$ has been replaced by 
    an edge $(u, w)$ and at least one in-coming edge of $x$ has 
    been replaced by $(w', v)$. However, in this case the edge 
    $(u,v)$ would not have been butterfly contractible in $G$ as 
    it would neither be the only out-going edge of $u$ nor the 
    only incoming edge of $v$, contradicting the choice of the 
    edge $(u,v)$. 
    
    The other cases, i.e.~if $x\in V(T_o)\setminus V(T_i)$ or 
    $x$ is the root of $T_i$ and of $T_o$ are similar. This 
    concludes the proof.
\end{proof}

Hence by Lemma~\ref{lem:tree-like-butterfly}, we can from now on
assume that butterfly-models are 
always tree-like as in the previous lemma. We will now define 
the other kind of minors considered in this paper.

\begin{definition}[topological minor]\label{def:topological}
  Let $H$, $G$ be digraphs. $H$ is a \emph{topological minor} of $G$,
  denoted $H \topo G$, if there is a mapping $\mu$ which maps every
  vertex $v\in V(H)$ to a vertex $\mu(v)\in V(G)$ and assigns to every
  edge $e\in E(H)$ a directed path $\mu(e)\subseteq G$ such that
  \begin{enumerate}[nosep]
  \item $\mu(v)\not=\mu(w)$ for all $v\not=w\in V(H)$ and
  \item if $e = (v,w) \in E(H)$ then $\mu(e)$ is a path linking
    $\mu(v)$ to $\mu(w)$ and $\mu(e) \cap \big(\bigcup_{v\in 
    V(H)}
    \mu(v) \cup \bigcup_{e'\not=e\in E(H)} \mu(e')\big) = \{ 
    \mu(v), \mu(w)\}.$
  \end{enumerate}
  We call $\mu$ a \emph{topological model} of $H$ in $G$ and define $\mu(H) := 
  \bigcup_{f\in E(H)\cup V(H)} \mu(f)$. 
\end{definition}
That is, $H$ is a topological minor of $G$ if $H$ is a subdivision of
a subgraph of $G$.
We also need the following result.

\begin{lemma}\label{lem:butterfly-topo}
  Let $H$ be a digraph of maximum degree at most $3$. If $H\butterfly
  G$, for some digraph $G$, then $H\topo G$. 
\end{lemma}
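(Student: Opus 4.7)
The plan is to promote a tree-like butterfly model of $H$ in $G$---which exists by Lemma~\ref{lem:tree-like-butterfly}---into a topological model by collapsing every branch set $\mu(v)=T_i^v\cup T_o^v$ to a single branch vertex $b_v\in V(\mu(v))$ together with pairwise internally disjoint directed sub-paths from $b_v$ to each attachment point of a butterfly edge on $\mu(v)$. Concatenating, for each edge $e=(u,v)$ of $H$, the sub-path inside $\mu(u)$ from $b_u$ to the tail of $\mu(e)$, the butterfly edge $\mu(e)$ itself, and the sub-path inside $\mu(v)$ from the head of $\mu(e)$ to $b_v$, then yields the path $\mu'(e)$ of the topological model. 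Disjointness across different branch sets is automatic, since the $\mu(w)$ are pairwise vertex-disjoint by Lemma~\ref{lem:tree-like-butterfly}, so all the work is local to each $\mu(v)$.

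Let $H_v\subseteq V(T_i^v)$ be the set of heads of incoming butterfly edges at $v$, and $T_v\subseteq V(T_o^v)$ the set of tails of outgoing butterfly edges; the hypothesis $d(v)\le 3$ gives $|H_v|+|T_v|\le 3$, so one of them has size at most~$1$. By symmetry (interchanging in- and out-branchings) I may assume $|T_v|\le 1$, so $T_o^v$ is either $\{r_v\}$ or a directed path from $r_v$ to the unique tail $t_v$. I would then do a case analysis on $|H_v|\in\{0,1,2,3\}$. If $|H_v|\le 1$, choose $b_v:=r_v$ (or the unique head/tail, if the other side is empty): at most two sub-paths, one lying in $T_i^v$ and one in $T_o^v$, are needed, and they meet only at $r_v$. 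If $|H_v|=2$, the two directed paths in $T_i^v$ from the two heads $h_1,h_2$ to $r_v$ first meet at a common vertex $w$; I set $b_v:=w$, use the disjoint ``upper'' branches of $T_i^v$ above $w$ as the two incoming sub-paths, and (if $|T_v|=1$) extend the outgoing sub-path via $w\to r_v\to t_v$ inside $T_i^v\cup T_o^v$. If $|H_v|=3$ (which forces $|T_v|=0$), consider the Steiner subtree of $\{h_1,h_2,h_3\}$ inside $T_i^v$ and pick $b_v$ to be a vertex at which three internally disjoint directed paths from the $h_i$ arrive.

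In each case, internal disjointness of the sub-paths is immediate from the tree structure: the meeting vertex $w$ (resp.\ Steiner vertex) cleanly separates $T_i^v$ into the ``upper'' branches used by the incoming sub-paths and the ``lower'' part which, together with $T_o^v$, is used only by the outgoing sub-path. The main obstacle is the case $|H_v|=3$, because in a directed in-branching the three paths from the heads down to $r_v$ can share non-trivial common suffixes, so the existence of a vertex at which three directed internally disjoint paths \emph{converge from} $h_1,h_2,h_3$ is the subtle point of the argument; here the precise in-branching/out-branching structure supplied by Lemma~\ref{lem:tree-like-butterfly} is exactly what lets one locate the correct Steiner vertex and orient every sub-path consistently with the edge directions of $T_i^v$.
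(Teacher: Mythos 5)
Your reduction to a single branch vertex per branch set, and your treatment of the cases where $|T_v|\le 1$ and $|H_v|\le 2$, are correct and are essentially a more careful version of the paper's own one-line argument (the paper also prunes the branchings and routes all paths through the unique vertex of degree $>2$ of $T_i\cup T_o$). The genuine gap is exactly the case you yourself flag as the crux, $|H_v|=3$: the convergence vertex you postulate need not exist. In an in-branching every directed path starting at a head $h_i$ is an initial segment of the unique path from $h_i$ to the root, so any common target $c$ of directed paths from $h_1,h_2,h_3$ must be a common ancestor of all three; if the paths from $h_1$ and $h_2$ already merge at a vertex $b$ strictly below the deepest common ancestor $a$ of $\{h_1,h_2,h_3\}$, then the paths from $h_1$ and $h_2$ to any admissible $c$ share the entire segment from $b$ to $c$, so no choice of $b_v$ gives three internally disjoint converging paths. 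A concrete such $T_i$ has edges $(h_1,b),(h_2,b),(b,a),(h_3,a)$ with root $a$.

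Moreover, this case cannot be repaired by a cleverer choice, because for it the statement itself fails: let $H$ be the star with edges $(u_1,v),(u_2,v),(u_3,v)$ (maximum degree $3$), and let $G$ consist of the in-branching above together with the edges $(u_i,h_i)$. Each of the four branching edges is, throughout, the only outgoing edge of its tail, so contracting them one by one shows $H\butterfly G$; but every vertex of $G$ has in-degree at most $2$, whereas a topological model would force $\mu(v)$ to have in-degree at least $3$ (the three paths may meet only in $\mu(v)$), so $H\not\topo G$. The paper's proof silently skips this configuration as well (its branch vertex $b$ is not reachable from $h_3$ by a directed path), and the lemma is really only valid -- and only used -- when every vertex of $H$ has in-degree and out-degree at most $2$, as is the case for cylindrical walls. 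Under that additional hypothesis, your cases with one side of size at most $1$ and the other of size at most $2$ already constitute a complete and correct proof; as written, however, the $|H_v|=3$ (and symmetric $|T_v|=3$) case is a step that fails.
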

\begin{proof}
  Let $H\butterfly G$. Hence, there is a tree-like model $\mu$ 
  of $H$ in $G$. Clearly, for $v\in V(H)$, we can choose the 
  in-branching $T_i$ and the out-branching $T_o$ comprising 
  $\mu(v)$ so that there are at most $3$ leaves. For, if a leave 
  of $T_o$ is not the tail of an edge $\mu(e)$, for some $e\in 
  E(H)$, then we can delete it from the model, unless it is the 
  only vertex of $T_o$. Similarly, we can delete leaves of $T_i$ 
  which are not the head of any $\mu(e)$, $e\in E(H)$. But this 
  implies that $T_i\cup T_o$ has only at most $3$ leaves and 
  therefore contains only one vertex $v'$ of degree $>2$. We can 
  therefore map $v$ to $v'$ and edges of $H$ to corresponding 
  paths to obtain $H$ as a topological minor of $G$.
\end{proof}

\usec{Directed Tree-Width}
We briefly recall the definition of directed tree width from~\cite{JohnsonRobSeyTho01}.
By an \emph{arborescence} we mean a directed graph $R$ such that 
$R$
has a vertex $r_0$, called the
\emph{root} of $R$, with the property that for every vertex $r 
\in V (R)$ there is a
unique directed path from $r_0$ to $r$. Thus every arborescence 
arises from
a tree by selecting a root and directing all edges away from the 
root. If
$r, r' \in V (R)$ we write $r' > r$ if $r' \not= r$ and there 
exists a directed path in $R$
with initial vertex $r$ and terminal vertex $r'$. If $e \in 
E(R)$ we write $r' > e$ if
either $r' = r$ or $r' > r$, where $r$ is the head of $e$.

Let $G$ be a digraph and let $Z \subseteq V (G)$.
We say that a set $S \subseteq (V(G)-Z)$
is \emph{$Z$-normal} if there is no directed walk in $G-Z$ with 
the first and the last vertex
in $S$ that uses a vertex of $G-(Z \cup S)$. It follows that 
every $Z$-normal set
is the union of the vertex sets of strongly connected components 
of $G-Z$. It is
straightforward to check that a set $S$ is $Z$-normal if, and 
only if, the vertex sets of the
strongly connected components of $G-Z$ can be numbered $S_1, 
S_2,\dots, S_d$ in such a way
that
\begin{enumerate}[nosep]
    \item
    if $1 \leq i < j \leq d$, then no edge of $G$ has head in 
    $S_i$ and tail in $S_j$,
    and
    \item
    either $S = \emptyset$, or $S = S_i \cup S_{i+1} \cup \dots 
    \cup S_j$ for some integers $i, j$ with
    $1 \leq i \leq j \leq d$.
\end{enumerate}
\begin{definition}\label{def:dtw}
    A \emph{directed tree decomposition} of a digraph $G$ is a 
    triple
    $(T,\beta,\gamma)$, where $T$ is an arborescence,  $\beta 
    \st V(T)
    \rightarrow 2^{V(G)}$ and $\gamma \st E(T) \rightarrow 
    2^{V(G)}$ are
    functions such that
    \begin{enumerate}[nosep]
        \item $\{ \beta(t) \st t\in V(T) \}$ is a partition of 
        $V (G)$  and
        \item if $e \in E(T)$, then $\bigcup \{ \beta(t) \st t 
        \in V (T), t > e\}$ is
        $\gamma(e)$-normal.
    \end{enumerate}
    For any $t\in V(T)$ we define $\Gamma(t) := \beta(t) \cup 
    \bigcup \{
    \gamma(e) \st e \sim t\}$, where $e\sim t$ if $e$ is 
    incident with
    $t$.
    
    The \emph{width} of $(T, \beta, \gamma)$ is the least 
    integer $w$
    such that $|\Gamma(t)| \leq w + 1$  for all $t
    \in V(T)$.  The \emph{directed tree width} of $G$ is the 
    least
    integer $w$ such that $G$ has a directed tree decomposition 
    of width $w$.
\end{definition}

The sets $\beta(t)$ are called the \emph{bags} and the sets
$\gamma(e)$ are called the \emph{guards} of the directed tree 
decomposition. If $t\in V(T)$ we write $T_t$ for the subtree of $T$ rooted at 
$t$ (i.e. the subtree containing all vertices $s$ such that the unique path 
from the root of $T$ to $s$ contains $t$) and we define $\beta(T_t) := 
\bigcup_{s\in V(T_t)} \beta(s)$.
It is easy to see that
the directed tree width of a subdigraph of $G$ is at most the 
directed
tree width of $G$.

We close the section on directed tree-width by the following 
lemma, which we need below.

\begin{lemma}\label{lem:scc-dtw}
    Let $\TTT := (T, \beta, \gamma)$ be a directed tree 
    decomposition of
    a digraph $G$ and let $H$ be a strongly connected subgraph 
    of $G$.
    Let $S \subseteq T$ be the subgraph of $T$ induced by 
    $\beta^{-1}(H)
    := \{ t \in V(T) \st \beta(t)\cap V(H) \not=\emptyset\}$ and 
    let
    $U\subseteq T$ be the (inclusion) minimal subtree of $T$ 
    containing
    all of $S$.
    Then $\Gamma(t)\cap V(H) \not=\emptyset$ for every $t\in 
    V(U)$.
\end{lemma}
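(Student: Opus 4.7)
The proof splits into two cases according to whether $t \in S$ or not. If $t \in S$, then $\beta(t) \cap V(H) \neq \emptyset$ by the definition of $S$, and since $\beta(t) \subseteq \Gamma(t)$ the conclusion is immediate. So the real content is the case $t \in V(U) \setminus S$, where the plan is to exhibit an edge $e$ of $T$ incident to $t$ with $\gamma(e) \cap V(H) \neq \emptyset$.

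For this, I would first exploit the minimality of $U$: every leaf of $U$ must lie in $S$, for otherwise it could be removed from $U$. Hence $t$ has at least two neighbours in $U$, and since an arborescence gives $t$ at most one parent, at least one of these neighbours is a child $c$ of $t$ that belongs to $U$. A short argument on the rooted subtree $U_c \subseteq U$ shows that $U_c$ contains some leaf of $U$ (namely any leaf of $U_c$, whose only possible $U$-neighbour is its parent inside $U_c$), and that leaf is in $S$, so $\beta(T_c) \cap V(H) \neq \emptyset$. Pick $v_1$ in this intersection. A symmetric argument applied to the second $U$-neighbour of $t$ yields a second vertex $v_2 \in V(H)$ lying in a bag disjoint from $\beta(T_c)$: either another child $c'$ gives $v_2 \in \beta(T_{c'})$, or the parent $p \in U$ gives some $v_2 \in V(H)$ whose bag lies outside $\beta(T_t) \supseteq \beta(T_c)$. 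In both sub-cases $v_2 \in V(H) \setminus \beta(T_c)$.

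Now the strong connectivity of $H$ enters. There is a closed directed walk $W$ in $H$ that starts at $v_1$, visits $v_2$, and returns to $v_1$. Suppose for contradiction that $\gamma((t,c)) \cap V(H) = \emptyset$. Since $W \subseteq H$, the walk $W$ avoids $\gamma((t,c))$ entirely, and $v_2 \in V(H)$ forces $v_2 \notin \gamma((t,c))$ as well; so $W$ is a closed directed walk in $G - \gamma((t,c))$ whose endpoints lie in $\beta(T_c)$ and which passes through the vertex $v_2 \in V(G) \setminus (\beta(T_c) \cup \gamma((t,c)))$. This directly contradicts the $\gamma((t,c))$-normality of $\beta(T_c)$ guaranteed by Definition~\ref{def:dtw}(2). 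Hence $\gamma((t,c))$ meets $V(H)$, and since $(t,c) \sim t$ we conclude $\Gamma(t) \cap V(H) \neq \emptyset$.

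I do not expect any serious obstacle here; the proof is essentially a careful unpacking of the tree-decomposition axioms. The one step that needs a little care is the existence of the witnesses $v_1$ and $v_2$, which rests on the structural claim that each component of $U \setminus \{t\}$ contains a leaf of $U$. Everything else reduces to one application of strong connectivity combined with one application of $Z$-normality.
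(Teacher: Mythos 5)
Your proof is correct and takes essentially the same route as the paper's: both locate vertices of $H$ in bags on different sides of the node $t$ (using that every leaf of the minimal subtree $U$ lies in $S$), then use strong connectivity of $H$ to produce a closed directed walk that would violate the $\gamma\bigl((t,c)\bigr)$-normality of $\beta(T_c)$ unless the guard meets $V(H)$. The only difference is presentational: the paper argues by contradiction from $\Gamma(u)\cap V(H)=\emptyset$, while you identify the witnessing guard edge directly.
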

\begin{Proof}
    Let $S$ and $U$ be as defined in the statement of the
    lemma. Towards a contradiction suppose that there is some 
    $u\in
    V(U)$ such that $\Gamma(u) \cap V(H) = \emptyset$. Clearly,
    $u\not\in V(S)$. By construction of $U$ this implies that 
    there are
    vertices $s, t\in U$ and $v, v'\in V(H)$ with 
    $v\in\beta(s)$, $v'\in
    \beta(t)$ and $s,t$ are in different components of $U-u$. 
    Let $P_1,
    P_2$ be two paths in $H$ with $P_1$ linking $v$ to $v'$ and 
    $P_2$
    linking $v'$ to $v$.
    
    As $T$ is a tree at least one of $s,t$ must be in the 
    subtree of $T$
    rooted at a child of $u$. Let $c$ be this child and assume
    w.l.o.g.~that $s$ is in the subtree of $T$ rooted at $c$. 
    But then
    $P_1\cdot P_2$ is a directed walk starting and ending in
    $\beta(T_c)$ which contains a vertex, namely $v'$, not in
    $\beta(T_c)$. Hence, by the definition of directed
    tree-decompositions, $P_1\cdot P_2\cap 
    \Gamma(u)\not=\emptyset$,
    contradicting the assumption that $\Gamma(u)\cap V(H) = 
    \emptyset$.
\end{Proof}

The following theorem follows from \cite{JohnsonRobSeyTho01}, see 
e.g.~\cite{KreutzerO14} for details.
A \emph{linkage} in a digraph $G$ is a set $\LLL$
of pairwise
internally vertex disjoint directed paths. The 
\emph{order} $|\LLL|$
is the number of paths in $\LLL$. 
Let $\sigma :=\{ (s_1, t_1), \dots, (s_k, t_k)\}$ be a set of $k$ pairs of
vertices in $G$. A \emph{$\sigma$-linkage} is a 
linkage $\LLL := \{
P_1, \dots, P_k \}$ of order $k$ such that $P_i$ links $s_i$ to
$t_i$. 
 
 \begin{theorem}\label{thm:dtw:main-algo}
 	Let $G$ be a  digraph and $\TTT := (T, \beta, \gamma)$ be a directed
 	tree-decomposition of $G$ of width $w$. Let $k \geq 1$ and $\sigma$ be a set 
 	of $k$ pairs of
 	vertices in $G$. Then it can be decided in time
 	$\Oof(|V(G)|)^{\Oof(k+w)}$ whether $G$
 	contains a $\sigma$-linkage.
 \end{theorem}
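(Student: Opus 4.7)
The plan is to prove this by a bottom-up dynamic programming algorithm on the given directed tree-decomposition $\TTT = (T,\beta,\gamma)$. Root the arborescence $T$ at its root $r_0$; for each $t \in V(T)$ let $e_t$ denote the edge from the parent of $t$ to $t$ (with $\gamma(e_{r_0}) := \emptyset$), and recall that by Definition~\ref{def:dtw} the set $\beta(T_t) := \bigcup_{s\in V(T_t)}\beta(s)$ is $\gamma(e_t)$-normal, so no directed walk in $G - \gamma(e_t)$ that starts and ends in $\beta(T_t)$ can leave and re-enter it.

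For each $t$ I would compute a set $\mathcal{R}_t$ of \emph{types} describing how a hypothetical $\sigma$-linkage could intersect the subgraph $G[\beta(T_t)\cup \gamma(e_t)]$. A type would record, for each pair $(s_i,t_i)\in\sigma$: (i) whether both, one, or neither of $s_i,t_i$ lie in $\beta(T_t)$ and whether they have already been linked; (ii) for each still "open" partial path associated with index $i$, the ordered list of vertices of $\Gamma(t)$ at which it enters or exits $\beta(T_t)$; and (iii) for pairs whose terminals both lie outside $\beta(T_t)$, the pass-through segments with their entry and exit vertices in $\Gamma(t)$. Since $|\Gamma(t)|\leq w+1$ and $|\sigma|=k$, and normality forces each path to use a vertex of $\gamma(e_t)$ whenever it crosses the boundary of $\beta(T_t)$, the number of admissible types is bounded by $|V(G)|^{\Oof(k+w)}$.

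The DP then runs bottom-up. At a leaf $t$, $\mathcal{R}_t$ is computed by brute-force path search inside the small digraph $G[\beta(t)\cup\gamma(e_t)]$, enumerating every way of routing a bounded number of internally-vertex-disjoint subpaths with endpoints in $\Gamma(t)$ or among the prescribed terminals that lie in $\beta(t)$. At an internal node $t$, I process the children one at a time: starting from the "empty" partial type on $\beta(t)\cup\gamma(e_t)$, I merge in the types of the current child by matching entry/exit vertices of open paths in $\gamma(e_{t'})$ between the child and the interface $\Gamma(t)$, discarding merges that reuse a vertex or produce inconsistent orderings. A $\sigma$-linkage exists if and only if $\mathcal{R}_{r_0}$ contains a type in which every pair is fully linked and no open path remains.

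The main obstacle will be verifying the structural claim that this bounded interface really suffices, i.e.\ that in any $\sigma$-linkage each path decomposes, along the tree, into at most $\Oof(k+w)$ subpaths each contained in some $G[\beta(T_s)\cup\gamma(e_s)]$ and glued at $\Gamma(s)$-vertices. Here the normality condition is crucial: it is exactly what guarantees that once a path of the linkage leaves $\beta(T_s)$ without touching $\gamma(e_s)$ it cannot return, so that its interaction with each interface is bounded by a pattern of the form recorded in the type. Lemma~\ref{lem:scc-dtw} and Definition~\ref{def:dtw} together provide the key structural tool. Once this is in place, the size bound $|V(G)|^{\Oof(k+w)}$ on $|\mathcal{R}_t|$ and a matching bound on the per-merge cost yield the stated running time.
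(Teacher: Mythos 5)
Your overall plan --- dynamic programming over the directed tree-decomposition with ``types'' recording how a linkage meets $\beta(T_t)$ --- is indeed the right family of argument (the paper itself gives no proof but cites Johnson--Robertson--Seymour--Thomas and Kreutzer--Ordyniak, whose argument is a DP of this flavour). However, your proposal rests on a structural claim that is false for directed tree-decompositions: you assume that the paths of a $\sigma$-linkage enter and exit $\beta(T_t)$ at vertices of $\Gamma(t)$ (your type records ``the ordered list of vertices of $\Gamma(t)$ at which it enters or exits $\beta(T_t)$'', and you later say the subpaths are ``glued at $\Gamma(s)$-vertices''), justified by the statement that ``normality forces each path to use a vertex of $\gamma(e_t)$ whenever it crosses the boundary of $\beta(T_t)$''. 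Normality says something weaker: only that a walk which \emph{leaves $\beta(T_t)$ and later returns} must meet $\gamma(e_t)$ somewhere in between. The guards are not separators in the undirected sense; an edge of $G$ may enter or leave $\beta(T_t)$ at an arbitrary vertex. A concrete failure: take $G$ acyclic with the width-$0$ decomposition given by a topological order (singleton bags, empty guards); a single $s$--$t$ path enters each set $\beta(T_{t_i})$ at whatever is its first vertex of index $\ge i$, which in general is not in $\Gamma(t_i)=\beta(t_i)$. A DP whose types only allow crossings at $\Gamma(t)$ would therefore report false negatives, and the child-merging step (``matching entry/exit vertices \dots in $\gamma(e_{t'})$'') is unsound for the same reason.

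The correct version of the interface argument is: for each node $t$ the paths of the linkage decompose into maximal segments inside $\beta(T_t)$, and between two consecutive segments of the same path the excursion outside must contain a vertex of $\gamma(e_t)$ (this \emph{is} what normality gives). Since the paths are internally vertex-disjoint and $|\gamma(e_t)|\le w+1$, the total number of segments over all $k$ paths is $\Oof(k+w)$; but the first and last vertices of these segments are \emph{arbitrary} vertices of $G$, so the type must record them explicitly, which is exactly why the state space, and hence the running time, is $|V(G)|^{\Oof(k+w)}$ rather than fixed-parameter. (Note your own count is internally inconsistent: if all crossing vertices lay in $\Gamma(t)$, the number of types would be roughly $(w+1)^{\Oof(k+w)}$, not $|V(G)|^{\Oof(k+w)}$.) With the types redefined to carry arbitrary segment endpoints, together with the pair index and the order of segments along each path, and with the merge at an internal node allowing segments to pass between different children and $\beta(t)$ via arbitrary edges of $G$, the DP goes through and gives the stated bound; as written, though, the central claim your algorithm relies on does not hold.
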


From this, we obtain the following algorithmic result that will be needed 
later.

\begin{theorem}\label{thm:comp-dtw-top}
    Let $H$ be a fixed digraph. 
    There is an algorithm running in time 
    $|G|^{O(|H|)\cdot w}$ which, given a digraph $G$ of directed tree-width 
    at 
    most $w$ as input, computes a butterfly model (topological model) of $H$ in $G$ or determines 
    that 
    $H\not\butterfly G$. 
    % \item There is an algorithm running in time 
    % $|G|^{O(|H|)\cdot w}$ which, given a digraph $G$ of directed tree-width 
    % at 
    % most $w$ as input, computes a topological model of $H$ in $G$ or 
    % determines that 
    % $H\not\topo G$. 
\end{theorem}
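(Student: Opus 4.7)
The plan is to reduce the search for a model of $H$ to the directed $\sigma$-linkage problem of Theorem~\ref{thm:dtw:main-algo}, preceded by an outer enumeration that fixes the way $H$ sits inside $G$.

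\emph{Topological case.} First enumerate all injective maps $\mu_0\colon V(H)\to V(G)$; there are at most $|V(G)|^{|V(H)|}$ of them. For each $\mu_0$ we want internally vertex-disjoint paths $\{P_e:e\in E(H)\}$ such that $P_{(u,v)}$ joins $\mu_0(u)$ to $\mu_0(v)$ and internally avoids every $\mu_0(w)$. To fit the $\sigma$-linkage template (which requires distinct endpoints for distinct pairs), build an auxiliary digraph $G'$ from $G$ by splitting each $\mu_0(v)$ into one copy per $H$-edge incident with $v$, rerouting the in- and out-edges appropriately. Then $|V(G')|=|V(G)|+O(|H|)$ and the directed tree-width of $G'$ is at most $w+O(|H|)$. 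A single call to Theorem~\ref{thm:dtw:main-algo} with the $|E(H)|$ pairs $\{(\mu_0(u)^{(e)},\mu_0(v)^{(e)}):e=(u,v)\in E(H)\}$ decides whether a topological model extending $\mu_0$ exists, in time $|G|^{O(|H|+w)}$; the total cost is within the stated bound.

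\emph{Butterfly case.} By Lemma~\ref{lem:tree-like-butterfly} we may restrict to tree-like models, so $\mu(v)=T_i(v)\cup T_o(v)$ with a common root $r_v$, and each $\mu(e)$ is a single edge of $G$ whose tail (resp.\ head) is a leaf of the out-branching (resp.\ in-branching) of its source (resp.\ target). Guess the roots $r_v\in V(G)$ and the edges $\mu(e)\in E(G)$, giving $|G|^{O(|H|)}$ possibilities. What remains is to decide whether there exist pairwise vertex-disjoint sets $\{S_v\}_{v\in V(H)}$ in $G$, each containing $r_v$ together with the prescribed in- and out-leaves, such that in $G[S_v]$ every in-leaf can reach $r_v$ and $r_v$ can reach every out-leaf; this is both necessary and sufficient for the required in- and out-branchings inside $G[S_v]$ to exist.

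The main obstacle is that this last check is \emph{not} a pure $\sigma$-linkage problem: within one cluster $S_v$ the connecting paths may share internal vertices (they merge into the branching), while paths from different clusters must be fully vertex-disjoint. I would therefore handle it by a direct dynamic programming on the given decomposition $(T,\beta,\gamma)$ rather than through Theorem~\ref{thm:dtw:main-algo}. At each node $t$ the state records, for every vertex of $\Gamma(t)$, which $v\in V(H)$ owns it and which reachability obligations from or to $r_v$ still have to be fulfilled through this guard. Since $|\Gamma(t)|\le w+1$ and each vertex carries $|G|^{O(|H|)}$ possible labels, the state space has size $|G|^{O(|H|)\cdot w}$ per node. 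Correctness of the merge at each guard rests on the $\gamma(e)$-normality condition in Definition~\ref{def:dtw} (cf.\ Lemma~\ref{lem:scc-dtw}): any branching path that leaves and re-enters a subtree must cross the guard, so the interface information carried by the state suffices to patch partial branchings consistently, completing the search in the claimed time.
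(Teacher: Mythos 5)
Your topological case is fine and is essentially the paper's argument (guess the images of the $|V(H)|$ branch vertices, then answer one $\sigma$-linkage query via Theorem~\ref{thm:dtw:main-algo}); the gap is in the butterfly case. First, your cluster criterion is wrong: requiring only that in $G[S_v]$ every in-leaf reaches $r_v$ and $r_v$ reaches every out-leaf is \emph{not} sufficient for the in- and out-branchings of a tree-like model to exist, because Lemma~\ref{lem:tree-like-butterfly} demands that the two branchings meet \emph{only} in the root. Take $S_v=\{a_1,a_2,x,y,r,b_1,b_2\}$ with edges $(a_1,x),(a_2,y),(x,r),(y,r),(r,x),(r,y),(x,b_1),(y,b_2)$, in-leaves $a_1,a_2$ and out-leaves $b_1,b_2$: every in-leaf reaches $r$ and $r$ reaches both out-leaves, yet for every choice of root in this set any in-branching covering $a_1,a_2$ and any out-branching covering $b_1,b_2$ must share a non-root vertex. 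Embedding this gadget into a digraph $G$ whose only attachments are one external in-edge at each $a_i$ and one external out-edge at each $b_j$, the corresponding $H$ (a vertex of in- and out-degree $2$ with its four neighbours) is not a butterfly minor of $G$, but your test accepts; the correct intra-cluster condition is a disjointness (linkage-type) condition, not plain reachability.

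Second, even with a corrected condition, the dynamic programming you sketch on $(T,\beta,\gamma)$ is asserted rather than proved, and the normality argument you invoke does not carry it. Guards only intercept walks that leave $\beta(T_t)$ and \emph{return}; a partial branching may exit $\beta(T_t)$ through arbitrarily many crossing edges that never come back, and these crossings are not confined to $\Gamma(t)$. Hence a state consisting of labels on the at most $w+1$ vertices of $\Gamma(t)$ does not determine which partial solutions inside $\beta(T_t)$ extend to the outside, and making such a DP sound is essentially the (hard) content of Theorem~\ref{thm:dtw:main-algo} itself. The paper sidesteps this entirely: in a tree-like model each branching of $\mu(v)$ can be pruned to have at most $d_H(v)$ leaves, so $\bigcup_{v\in V(H)}\mu(v)$ has at most $2|E(H)|$ vertices of degree greater than $2$; guessing these together with the at most $2|E(H)|$ endpoints of the edges $\mu(e)$ (so $O(|E(H)|)$ vertices in total, $|G|^{O(|H|)}$ guesses) reduces what remains to a family of pairwise internally disjoint directed paths with prescribed endpoints, i.e.\ a single $\sigma$-linkage instance per guess, answered by Theorem~\ref{thm:dtw:main-algo} within the stated time bound. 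That reduction of the branchings themselves to a linkage problem is the idea your butterfly argument is missing.
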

    
\begin{proof}
	The proof for both minor models is nearly identical. We therefore only consider the 
	more complicated cases of butterfly minors. Let $H$ be given and let $G$ be 
	a digraph. If $H\butterfly G$ then, by Lemma~\ref{lem:tree-like-butterfly}, 
	there is a tree-like model $\mu$ of $H$ in $G$.
	Hence, every edge $e\in E(H)$ is mapped to an edge $\mu(e)\in E(G)$ and 
	every vertex $v\in V(H)$ is mapped to the union $\mu(v)$ of an in- and 
	out-branching $T_i \cup T_o$. Clearly, the branchings can be chosen so that 
	they have at most $d_H(v)$ leaves and therefore they contain at most 
	$d_H(v)$ vertices of degree more than $2$. In total, therefore there are at 
	most $2|E(H)|$ vertices of degree more than $2$ in $\bigcup_{v\in 
	V(H)}\mu(v)$. Hence, any tree-like model of $H$ in $G$ consists of the 
	$2|E(H)|$ endpoints of the edges $\mu(e)$, $e\in E(H)$, of the at most 
	$2|E(H)|$ vertices of degree more than $2$ and a set of directed pairwise 
	disjoint paths connecting them in a suitable way to form a butterfly model. 
	Hence, to determine whether $H\butterfly G$ we can simply iterate over all 
	choices of $4|E(H)|$ vertices as candidates for the endpoints of edges and 
	high degree vertices and then apply the algorithm in 
	Theorem~\ref{thm:dtw:main-algo} to check for suitable disjoint directed 
	paths. Clearly, for any fixed $H$ and fixed value of $w$ this runs in 
	polynomial time.
\end{proof}

%%%%%%%%%%%%%%%%%%%%%%%%%%%%%%%%
\usec{Directed Grids}
A natural dual to directed tree width are cylindrical grids which we
define next.

\begin{definition}[cylindrical grid and wall]\label{def:cyl-grid}
  A \emph{cylindrical grid} of order $k$, for some $k\geq 1$, is a
  digraph $G_k$ consisting of $k$ directed cycles $C_1, \dots, C_k$,
  pairwise vertex disjoint, together with a set of $2k$ pairwise
  vertex disjoint paths $P_1, \dots, P_{2k}$ such that \parsep-10pt
  \begin{itemize}[nosep]%[nosep]
  \item each path $P_i$ has exactly one vertex in common with each
    cycle $C_j$,
  \item the paths $P_1, \dots, P_{2k}$ appear on each $C_i$ in this
    order
  \item for odd $i$ the cycles $C_1, \dots, C_k$ occur on all $P_i$
    in this order and for even $i$ they occur in reverse order $C_k,
    \dots, C_1$.
  \end{itemize}
For $1\leq i\leq k$ and $1\leq j\leq 2k$ let $x_{i,j}$ be the common vertex of  
$P_j$ and $C_i$. 

  A \emph{cylindrical wall} of order $k$ is the digraph $W_k$ obtained
  from the cylindrical grid $G_k$ of order $k$ by splitting every
  vertex $v$ of total degree $4$ as follows: we replace $v$ by two
  fresh vertices $v_t, v_h$ plus an edge $(v_t, v_h)$ so that every
  edge $(w, v)\in E(G_k)$ is replaced by an edge $(w, v_t)$ and every
  edge $(v, w)\in E(G_k)$ is replaced by an edge $(v_h, w)$. 
\end{definition}

We will also need the following result. The second part follows using 
Lemma~\ref{lem:butterfly-topo}.

\begin{theorem}[\cite{KawarabayashiK15}]\label{thm:grid}
  There is a function $f\st\N\rightarrow \N$
  such that every digraph of directed tree width at least $f(k)$
  contains a cylindrical grid of order $k$ as a butterfly minor and a 
  cylindrical 
  wall $W_k$ as topological minor.
\end{theorem}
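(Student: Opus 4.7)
The plan is to establish the two conclusions in sequence: first the butterfly-minor statement (large cylindrical grid), then the topological-minor statement (cylindrical wall) as a corollary via Lemma~\ref{lem:butterfly-topo}. The butterfly-minor part is the genuinely hard one and is the core of \cite{KawarabayashiK15}.

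For the grid as a butterfly minor, I would exploit the duality established in \cite{JohnsonRobSeyTho01} between directed tree-width and havens (equivalently, brambles): if $\dtw(G) \geq w$ then $G$ admits a haven of order roughly $w$, from which one can extract a large \emph{well-linked} set $W \subseteq V(G)$ in the directed sense, i.e.\ for any equal-sized subsets $A, B \subseteq W$ there exist $|A|$ pairwise internally vertex-disjoint directed paths from $A$ to $B$. Well-linkedness is the natural directed analogue of the central notion powering the undirected grid theorem, and obtaining a sufficiently large and ``robust'' well-linked set from high directed tree-width is the first technical layer.

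The second and decisive layer is turning such a well-linked set into the cylindrical grid itself. The approach I would follow is an iterative \emph{path-of-sets} construction adapted to digraphs: one builds a long sequence of pairwise-linked blocks, certifies inside this sequence a long directed cycle $C_1$ using well-linkedness, then routes $2k$ pairwise vertex-disjoint ``vertical'' paths across it in the alternating orientations demanded by Definition~\ref{def:cyl-grid}; peeling off one cycle at a time eventually produces $C_1,\dots,C_k$ together with $P_1,\dots,P_{2k}$. The crux, and the principal obstacle that makes the directed version far harder than the undirected one, is that every contraction used along the way must be butterfly-contractible: the image of each grid vertex has to remain the union of an in-branching and an out-branching sharing only their root (as in Lemma~\ref{lem:tree-like-butterfly}), so the branching geometry must be tracked and repaired throughout. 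The function $f$ that results is typically iterated several times and grows very quickly in $k$.

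Finally, for the wall statement, once a cylindrical grid $G_{k'}$ of sufficiently large order $k'$ is obtained as a butterfly minor of $G$, I would exhibit $W_k$ as a butterfly minor of $G_{k'}$ by selecting $k$ cycles and $2k$ paths of $G_{k'}$ and using the remaining space of $G_{k'}$ to realise the required splitting of each degree-$4$ intersection $x_{i,j}$ into the pair $v_t, v_h$ connected by the edge $(v_t, v_h)$. Since $W_k$ has maximum total degree $3$, Lemma~\ref{lem:butterfly-topo} then upgrades $W_k \butterfly G$ to $W_k \topo G$, finishing the proof. The whole difficulty is concentrated in the grid part; the wall conclusion is essentially a degree-$3$ packaging of it.
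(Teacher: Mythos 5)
This statement is not actually proved in the paper: it is imported as a black box from \cite{KawarabayashiK15} (the Directed Grid Theorem), with only the one-line remark preceding it that the wall part ``follows using Lemma~\ref{lem:butterfly-topo}.'' Judged on its own terms, your text is an outline of the Kawarabayashi--Kreutzer argument rather than a proof. The first layer you describe (haven/bramble duality for directed tree-width, extraction of a large directed well-linked set) is the right starting point, but the second layer --- ``certify a long cycle, route $2k$ alternating vertical paths, peel off cycles one at a time while repairing butterfly-contractibility'' --- is precisely where the entire difficulty of \cite{KawarabayashiK15} lives (their web/fence/splitter machinery, the rerouting lemmas, and the repeated losses in the order of the structure), and in your proposal it is asserted in a sentence. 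A reader could not reconstruct the grid construction from what you wrote, so as a blind proof attempt this is a genuine gap; the honest move here is to cite the theorem, as the paper does.

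The wall step also has a concrete hole. Lemma~\ref{lem:butterfly-topo} converts $W_k\butterfly G$ into $W_k\topo G$, so you must first establish $W_k\butterfly G$, and this does not follow formally from $G_k\butterfly G$: $W_k$ has more vertices than $G_k$, so it is not a minor of $G_k$ itself, and in a tree-like model of a cylindrical grid in $G$ the branch set of a degree-$4$ vertex may be a single vertex of $G$, which cannot be split into the pair $v_t,v_h$ joined by an edge. Your claim that $W_k$ can be realised inside a butterfly model of a larger grid $G_{k'}$ is plausible and is essentially how one argues (work with a grid of roughly twice the order and use alternate cycles/columns so that each wall crossing is carried by two distinct host vertices, or invoke the subdivision form stated in \cite{KawarabayashiK15} directly), but as written it is an unproved assertion, not a derivation. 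The paper's own remark is equally terse on this point, but since you are offering a proof, that step needs to be closed explicitly.
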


Finally, we need the following acyclic variant of a cylindrical 
grid.

\begin{definition}[acyclic grid]\label{def:grid}
  An \emph{acyclic grid} of order $k$ is a pair $(\PPP, \QQQ)$ of sets
  $\mathcal{P}=\{P_1,$ $\dots,P_k\}$,
  $\mathcal{Q}=\{Q_1,\dots,Q_k\}$ of pairwise vertex disjoint paths such that
  \begin{enumerate}[nosep]
  \item for $1 \leq i \leq k$ and $1 \leq j \leq k$, $P_i \cap Q_j$ is
    a single vertex $v_{ij}$,
  \item for $1 \leq i \leq k$, the vertices $v_{i1},\dots,v_{ik}$ are in
    order in $P_i$, and
  \item for $1 \leq j \leq k$, the vertices $v_{1j},\dots,v_{kj}$ are in
    order in $Q_j$.
  \end{enumerate}
\end{definition}

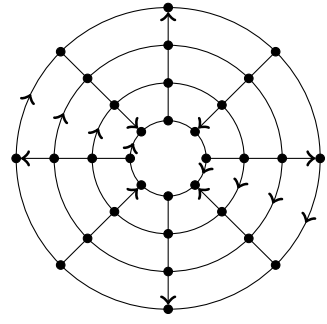
\begin{wrapfigure}{R}{5cm}
	\begin{center}
		\vspace*{-1cm}
		\begin{tikzpicture}[scale=0.5]
		\tikzstyle{vertex}=[shape=circle, fill=black, draw, inner
		sep=.4mm] 
		
		\tikzstyle{emptyvertex}=[shape=circle, fill=white,
		draw, inner sep=.7mm] 
		
		\foreach \x in {1,2,3,4} 
		{ 
			\draw (0,0)        circle (\x); 
			
			\ifnum\x<4 
			\draw[->, very thick] (180-22.5:\x) --
			(180-23.5:\x); \draw[->, very thick] (-22.5:\x) -- (-23.5:\x);
			
			\fi \foreach \z in {0,45,...,350} 
			{ 
				\node[vertex] (n\z\x) at (\z:\x){}; 
			} 
		} 
		\draw[->, very thick] (168-12.25:4) --
		(167-12.25:4); 
		
		\draw[->, very thick] (168+180-12.25:4) --
		(167+180-12.25:4); 
		
		\foreach \z in {0,90,...,350} 
		{ 
			\draw[
			decoration={markings,mark=at position 1 with {\arrow[very
					thick]{>}}}, postaction={decorate} ] (\z:1) -- (n\z4); 
		}
		\foreach \z in {45,135,...,350} 
		{ 
			\draw[
			decoration={markings,mark=at position 1 with {\arrow[very
					thick]{>}}}, postaction={decorate} ] (\z:4) -- (n\z1); 
		}
		\end{tikzpicture}\addtolength{\textfloatsep}{-200pt}
		\caption{Cylindrical grid $G_4$.}\label{fig:grid}%\figlabel{fig:grid}
	\end{center}
\end{wrapfigure}

%%%%%%%%%%%%%%%%%%%%%%%%%%%%%%%%%%%%%%%%%%%%%%%%%
\section{The Erd\H os-P\'osa Property for Strongly Connected Digraphs}

The main result of this section is the following theorem.

\begin{theorem}\label{thm:ep-sc}
   Let $H$ be a strongly connected digraph.
  $H$ has the Erd\H os-P\'osa
  property for butterfly (topological) minors if, and only if, 
  there is a cylindrical grid (wall) $G_c$, for some constant $c = c(H)$,
  such that $H$ is a butterly (topological) minor of $G_c$.
  
  Furthermore, for every fixed strongly connected digraph $H$ satisfying 
  these conditions and every $k$ there is a polynomial time algorithm which, 
  given a digraph $G$ as input, either computes $k$ disjoint (butterfly or 
  topological) models of $H$ in 
  $G$ or a set $S$ of $\leq h(k)$ vertices such that $G-S$ does not contain a 
  model of $H$.
\end{theorem}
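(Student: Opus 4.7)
The plan is to prove the biconditional together with the algorithmic statement by reducing to two regimes via the directed grid theorem (Theorem~\ref{thm:grid}). For the sufficiency direction (``if''), suppose $H$ is a butterfly minor of $G_c$ for some $c = c(H)$. Given input $(G,k)$, set $w := f(ck)$ where $f$ is the function from Theorem~\ref{thm:grid}, and first test whether the directed tree-width of $G$ is at least $w$. If so, $G$ contains $G_{ck}$ as a butterfly minor, and I would extract $k$ pairwise vertex-disjoint copies of $G_c$ inside $G_{ck}$ by an interleaving selection: for each $i \in \{0,\ldots,k-1\}$, the $i$-th copy uses the cycles $C_{i+1}, C_{i+1+k}, \ldots, C_{i+1+(c-1)k}$ and the paths $P_{i+1}, P_{i+1+k}, \ldots, P_{i+1+(2c-1)k}$, with the intermediate path segments between consecutive selected cycles butterfly-contracted (in the induced subgraph, the intermediate crossing vertices have in- and out-degree exactly $1$ and are therefore butterfly-contractible). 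Each such copy contains $H$ as a butterfly minor by assumption, yielding the required $k$ vertex-disjoint models. Otherwise the directed tree-width is less than $w$, and we invoke the bounded-width subroutine.

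In the bounded directed tree-width regime, I would proceed iteratively, building either a $k$-packing or a small hitting set. Compute a decomposition $(T, \beta, \gamma)$ of width $<w$. By Theorem~\ref{thm:comp-dtw-top}, we can decide in polynomial time whether $H \butterfly G$ and, if so, produce a butterfly model $\mu$. Since $H$ is strongly connected, Lemma~\ref{lem:scc-dtw} ensures that $\mu(H)$ is supported on a subtree $U \subseteq T$ in which every node $u$ satisfies $\Gamma(u) \cap V(\mu(H)) \neq \emptyset$; taking $u$ to be the root of $U$, the associated guard set (of size at most $w$) meets $\mu(H)$ and is added to a running hitting set $S$. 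Recursing on $G - S$, the iteration either discovers $k$ disjoint models along the way or terminates with $|S| = O(kw)$ hitting every model. The topological case is entirely analogous, substituting cylindrical walls for grids and invoking Lemma~\ref{lem:butterfly-topo} where needed. Since every step is constructive, the polynomial-time algorithm follows.

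The hard direction is the necessity (``only if''). Assume $H$ is strongly connected and is not a butterfly minor of any cylindrical grid. We must exhibit, for arbitrarily large $m$, a witness digraph $G_m^*$ containing many $H$-models yet admitting no $k$ pairwise disjoint ones and no hitting set of size $< g(m)$, with $g(m) \to \infty$. The natural template is to take a large cylindrical grid $G_n$ (which by hypothesis contains no $H$-model at all) and attach a growing collection of local ``$H$-enabling gadgets'' through a shared narrow interface of the grid, so that every $H$-model in $G_m^*$ must traverse the interface (bounding the disjoint packing number by a constant), while destroying all models forces picking a vertex inside each gadget (forcing the hitting set to grow with $m$). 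The principal obstacle is to design the gadgets and their attachments so that both bounds hold simultaneously: the non-embedding hypothesis on $H$ must be used critically to rule out alternative $H$-models routed exclusively through the grid's backbone, which would otherwise allow larger disjoint packings or smaller hitting sets. The topological version then follows by an analogous construction using walls in place of grids, again invoking Lemma~\ref{lem:butterfly-topo} to relate the two regimes at the low-degree interface.
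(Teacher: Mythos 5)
Your high-level plan (directed grid theorem to split into a large tree-width regime and a bounded tree-width regime, plus a counterexample construction for necessity) matches the paper, but two of the three key steps have genuine gaps. First, your bounded-width argument does not work as stated: repeatedly finding a model $\mu$, taking the root $u$ of the subtree supplied by Lemma~\ref{lem:scc-dtw}, deleting $\Gamma(u)$ and recursing gives neither conclusion of the dichotomy. The models found in successive rounds need not be pairwise disjoint, so after $k$ rounds you have no packing, and the process need not stop after $k$ rounds, so you have no $O(kw)$ bound on the hitting set either. The paper's Lemma~\ref{lem:bd-dtw} avoids this by choosing a node $t$ of \emph{minimal height} with $H\butterfly G[\beta(T_t)]$; then, by the choice of $t$ and Lemma~\ref{lem:scc-dtw}, every model avoiding $\Gamma(t)$ lies entirely inside $G-(\beta(T_t)\cup\Gamma(t))$, so induction on that subgraph with parameter $k-1$ yields either $k-1$ models disjoint from the model inside $G[\beta(T_t)]$ (hence a $k$-packing) or a hitting set to which $\Gamma(t)$ is added. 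A smaller but real error in the same direction: your ``interleaving selection'' of $k$ copies of $G_c$ inside $G_{ck}$ is not vertex-disjoint, because the path segment of copy $i$ between its consecutive selected cycles passes through the intersection vertices with the cycles assigned to the other copies; one must instead take $k$ consecutive blocks (annuli) of $c$ cycles each, together with the path segments inside each block.

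Second, and more importantly, the necessity direction is only a template, and the step you yourself flag as ``the principal obstacle'' is exactly the content of the paper's proof. The paper gives a concrete gadget (Definition~\ref{def:counter}): take $k$ copies $H_1,\dots,H_k$ of $H$, delete in each the edge $e_i=(u_i,v_i)$, delete the outer-cycle edges $(x_{k,2i-1},x_{k,2i})$ of $G_k$, and splice $H_i$ in via the edges $(u_i,x_{k,2i})$ and $(x_{k,2i-1},v_i)$. The two bounds you need then follow from specific arguments you do not supply: (a) any set of at most $f(2)<k$ vertices misses some $C_i\cup P_{2i-1}\cup P_{2i}\cup H_i$, which still contains a model of $H$; and (b) since $H\not\butterfly G_k$ and $H$ is strongly connected while $H_i$ is missing an edge, every model must use both attachment edges of some copy together with a directed grid path from $x_{k,2i}$ to $x_{k,2i-1}$, and in the planar embedding this path separates the grid so that the part containing the rest of the outer cycle is acyclic --- hence there cannot even be a second (disjoint) model. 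Without a concrete gadget and this separation-plus-acyclicity argument, the claim that every $H$-model ``must traverse the interface'' and that the packing number is bounded is unsupported, so the ``only if'' direction is not proved. (The topological-minor case is then handled, as you suggest, by the same construction on the wall $W_k$.)
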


We will split the proof of this theorem into two parts. We first show that 
strongly connected digraphs have the Erd\H{o}s-P\'osa-property
within any class of digraphs of bounded directed tree width. Here, a digraph 
$H$ has the Erd\H os-P\'osa-property within a class $\CCC$ of digraphs if the 
condition of Definition~\ref{def:ep-property} is satisfied for every $G\in 
\CCC$. 
\begin{lemma}%[Erd\H{o}s-P\'osa-Witnesses-in-Bounded-DTW]
\label{lem:bd-dtw}
  Let $\CCC$ be a class of digraphs of bounded directed tree
  width. Then every strongly connected digraph has the Erd\H{o}s-P\'osa-property
  within $\CCC$ with respect to butterfly and topological minors.
\end{lemma}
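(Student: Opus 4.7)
The plan is to prove this by induction on $k$, using $f(k) := k(w+1)$ where $w$ bounds the directed tree-width of digraphs in $\CCC$. The argument is identical for butterfly and topological minors, so write $\preceq$ for either notion. The base case $k=0$ is trivial.

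For the inductive step, fix $G \in \CCC$ together with a directed tree decomposition $(T, \beta, \gamma)$ of width at most $w$. If $H \not\preceq G$, take $S = \emptyset$. Otherwise, choose a node $t \in V(T)$ at maximal depth in $T$ subject to $H \preceq G[\beta(T_t)]$; by this choice, $H \not\preceq G[\beta(T_c)]$ for every child $c$ of $t$. Fix a model $\mu$ of $H$ in $G[\beta(T_t)]$. The image $\mu(H)$ is strongly connected: for topological minors it is literally a subdivision of $H$, and for butterfly minors this follows by passing to a tree-like model via Lemma~\ref{lem:tree-like-butterfly} and tracing cycles of $H$ through the in- and out-branchings making up each $\mu(v)$.

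The key claim is that $V(\mu(H)) \cap \Gamma(t) \not= \emptyset$. Let $U$ be the minimal subtree of $T$ containing $\{u \st \beta(u) \cap V(\mu(H)) \not= \emptyset\}$. Since $V(\mu(H)) \subseteq \beta(T_t)$ and the bags partition $V(G)$, we have $U \subseteq T_t$. If $t \in U$, Lemma~\ref{lem:scc-dtw} applied to the strongly connected subgraph $\mu(H)$ yields $\Gamma(t) \cap V(\mu(H)) \not= \emptyset$. If $t \notin U$, then $U$ lies in $T_t \setminus \{t\}$; by connectivity $U$ sits inside a single $T_c$ for some child $c$ of $t$, forcing $V(\mu(H)) \subseteq \beta(T_c)$ and contradicting the choice of $t$.

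Now set $G'' := G - \beta(T_t)$, which still has directed tree-width at most $w$, and apply the induction hypothesis with parameter $k-1$. If $G''$ contains $k-1$ pairwise disjoint models of $H$, they all avoid $\beta(T_t) \supseteq V(\mu(H))$, so together with $\mu$ they form $k$ pairwise disjoint models. Otherwise there is $S'' \subseteq V(G'')$ with $|S''| \leq f(k-1)$ hitting every model of $H$ in $G''$, and I claim $S := S'' \cup \Gamma(t)$, of size at most $f(k-1)+(w+1) = f(k)$, hits every model of $H$ in $G$. Given any $\mu'$ with $V(\mu'(H)) \cap S = \emptyset$, the argument of the key claim applied to $\mu'$ shows that its minimal subtree $U'$ cannot contain $t$, so $U'$ lies either entirely outside $T_t$ (making $\mu'$ a model in $G''$ disjoint from $S''$, a contradiction) or entirely inside some $T_c$ (so $H \preceq G[\beta(T_c)]$, again contradicting the choice of $t$). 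The chief obstacle is this key claim: unlike in the undirected setting, the bag $\beta(t)$ alone does not suffice as the separator added at each level, and it is precisely Lemma~\ref{lem:scc-dtw}—crucially exploiting strong connectivity of $H$—that enables $\Gamma(t)$ to play this role.
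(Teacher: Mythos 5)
Your proof is correct and follows essentially the same route as the paper's: induction on $k$ with $f(k)=k(w+1)$, picking a deepest node $t$ whose subtree contains a model, using Lemma~\ref{lem:scc-dtw} to show that any model avoiding $\Gamma(t)$ lies either in some child subtree (impossible by the choice of $t$) or entirely outside $\beta(T_t)$, and recursing on the remainder. The only (cosmetic) difference is that you recurse on $G-\beta(T_t)$ and add $\Gamma(t)$ to the hitting set afterwards, whereas the paper recurses on $G-(\beta(T_t)\cup\Gamma(t))$; the bookkeeping and the bound are the same.
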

\begin{Proof}
  We prove the case for butterfly minors, the case for topological minors is 
  almost identical.
    Let $w$ be an upper bound of the directed tree width of all $G\in
  \CCC$. We claim that we can take $f(k) = k\cdot (w+1)$ as function
  witnessing the Erd\H{o}s-P\'osa-property. We provide an algorithm which either
  finds a set $S$ of vertices  of
  size at most $f(k)$ or finds $k$ disjoint copies of $H$ in $G$ as
  a butterfly minor.

  Let $G$ be a digraph such that $\dtw(G)\leq w$ and let $(T, \beta,
  \gamma)$ be a directed tree decomposition of $G$ of width $w$. 
  We prove the claim by induction on $k$. Clearly, for $k=0$ or $k=1$ there 
  is nothing to show. So suppose $k>1$. If $H\not\butterfly G$ then again 
  there is nothing to show. 
  
  So suppose $H\butterfly G$. Let $t\in V(T)$ be a node of minimal height 
  such that $G[\beta(T_t)]$ (see the paragraph following 
  Definition~\ref{def:dtw}) contains $H$ as a butterfly minor.  By the choice 
  of $t$, $G[\beta(T_t)]-\Gamma(t)$ does not 
  contain a model of $H$. Furthermore, by Lemma~\ref{lem:scc-dtw}, no model 
  of $H$ in $G-\Gamma(t)$ can contain a vertex in $\beta(T_t)\setminus 
  \Gamma(t)$ and a vertex of $G-(\beta(T_t)\cup \Gamma(t))$. Hence, all 
  remaining models of $H$ in $G-\Gamma(t)$ must be contained in $G' := 
  G-(\beta(T_t)\cup \Gamma(t))$. By induction hypothesis, either $G'$ 
  contains $k-1$ disjoint models of $H$ as butterfly minor or a set $S$ of 
  $f(k-1)$ vertices such that $G'-S$ does not contain $H$ as a butterfly 
  minor. In the first case we have found $k$ disjoint copies of $H$ as 
  butterfly minor in $G$ and in the second case the set $S' := S\cup \Gamma(t)$ 
  hits every model of $H$. As $|S'|\leq w+1 + f(k-1) \leq k\cdot (w+1) = f(k)$ 
  the 
  claim follows.
\end{Proof}

The next theorem follows from the previous lemma and
Theorem~\ref{thm:grid}.

\begin{theorem}\label{thm:EPPositive}
  Let $H$ be a strongly connected digraph. If there is a $c>0$ such
  that $H \butterfly G_c$ (or $H\topo G_c$), where $G_c$ is the cylindrical 
  grid of order
  $c$, then $H$ has the Erd\H{o}s-P\'osa-property for butterfly minors 
  (resp. topological minors).
  
  Furthermore, for every fixed strongly connected digraph $H$ satisfying 
  these conditions and every $k$ there is a polynomial time algorithm which, 
  given a digraph $G$ as input, either computes $k$ disjoint models of $H$ in 
  $G$ or a set $S$ of $\leq h(k)$ vertices such that $G-S$ does not contain a 
  model of $H$.
\end{theorem}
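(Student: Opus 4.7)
The plan is a dichotomy on the directed tree-width of $G$. Let $f$ be the function from Theorem~\ref{thm:grid}, set $M := kc$, $w := f(M)$ and $h(k) := k\cdot(w+1)$. If $\dtw(G) < w$, then Lemma~\ref{lem:bd-dtw}, applied to the class of digraphs of directed tree-width at most $w$, already delivers either $k$ pairwise disjoint butterfly (resp.\ topological) models of $H$ in $G$, or a set $S\subseteq V(G)$ of at most $k(w+1)=h(k)$ vertices that hits every model. Otherwise $\dtw(G)\ge w$, and by Theorem~\ref{thm:grid} $G$ contains the cylindrical grid $G_{kc}$ (resp.\ wall $W_{kc}$) as a butterfly (resp.\ topological) minor. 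It then remains to show that $G_{kc}$ itself contains $k$ pairwise vertex-disjoint butterfly models of $H$, with the analogous statement in the topological setting.

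The combinatorial core is to exhibit $k$ pairwise disjoint butterfly models of $G_c$ inside $G_{kc}$; since $H\butterfly G_c$, $k$ disjoint models of $H$ then follow by transitivity of $\butterfly$. Partition the $kc$ concentric cycles of $G_{kc}$ into shells $S_i:=\{C_{(i-1)c+1},\ldots,C_{ic}\}$, and in shell $i$ retain only the radial paths $P_1,\ldots,P_{2c}$. For the $i$-th copy of $G_c$, let the vertex $\hat x_{m',j'}$ (with $1\le m'\le c$ and $1\le j'\le 2c-1$) be represented by the single vertex $x_{(i-1)c+m',j'}$, and let the ``wrap-around'' vertex $\hat x_{m',2c}$ be represented by the entire arc $x_{(i-1)c+m',2c},x_{(i-1)c+m',2c+1},\ldots,x_{(i-1)c+m',2kc}$ on cycle $C_{(i-1)c+m'}$. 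Once the radial paths $P_j$ with $j>2c$ are dropped, each internal vertex of such an arc has in- and out-degree $1$, so the cycle edges inside the arc are butterfly-contractible and the arc collapses to a tree-like super-vertex whose external in-/out-edges are exactly the ones required to realise the cycle edges and the two radial edges of $P_{2c}$ at $\hat x_{m',2c}$. Since different shells use disjoint cycles the $k$ models are pairwise vertex-disjoint, and the topological case is completely analogous, working inside $W_{kc}$ and using the hypothesis $H\topo W_c$.

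For the algorithmic statement, an algorithmic version of Theorem~\ref{thm:grid} produces in polynomial time either a directed tree decomposition of $G$ of width at most $w$ or an explicit butterfly (resp.\ topological) model of $G_{kc}$ (resp.\ $W_{kc}$) in $G$: in the second case the explicit shell construction above directly yields $k$ disjoint models of $H$, while in the first case the inductive proof of Lemma~\ref{lem:bd-dtw}, combined with Theorem~\ref{thm:comp-dtw-top} to decide at each tree-decomposition node whether $H$ is still a minor of the current bag-subtree, converts into a polynomial-time recursion producing either the $k$ disjoint models or the hitting set $S$.

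The main obstacle in the argument is the combinatorial shell lemma, and specifically the need to discard the unused radial paths before contracting: without this, the wrap-around arcs contain internal vertices of in- or out-degree $2$, and the cycle edges inside them are no longer butterfly-contractible in the sense of Definition~\ref{def:butterfly}. Once this subtlety is settled, the two main ingredients, namely Lemma~\ref{lem:bd-dtw} in the small tree-width case and the directed grid theorem in the large tree-width case, snap together to yield the claimed function $h$ and the claimed algorithm.
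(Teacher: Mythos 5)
Your argument is correct and follows essentially the same route as the paper: a dichotomy on directed tree-width, using Theorem~\ref{thm:grid} plus a packing of $k$ disjoint models of $G_c$ into the large grid in the high-width case and Lemma~\ref{lem:bd-dtw} in the low-width case, with your explicit shell construction merely spelling out a step the paper asserts without detail and your model composition replacing the paper's trick of running the algorithm of Theorem~\ref{thm:comp-dtw-top} inside the bounded directed tree-width model of $G_c$. The only point to flag is that in the topological case you silently replace the stated hypothesis $H\topo G_c$ by $H\topo W_c$; this is in fact the intended reading (it matches Theorem~\ref{thm:ep-sc}, and the grid version would clash with Lemma~\ref{lem:no-degree-4-vertex} whenever $H$ has a vertex of degree $4$), so you prove the right statement, but the substitution deserves an explicit remark.
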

\begin{proof}
Let $g\st \N\rightarrow \N$ be the function from
  Theorem~\ref{thm:grid} and let $g(k, w) = k\cdot w$ be the
  function as defined in Lemma~\ref{lem:bd-dtw}.
  We claim that the function $h(k) = g(k, f(k\cdot (c+1)))$ witnesses the
  Erd\H{o}s-P\'osa-property for $H$. Towards this aim, let $G$ be a digraph. If 
  $\dtw(G) 
  \geq
  f(k\cdot (c+1))$ then $G$ contains $k$ copies of $G_c$ each of which
  contains $H$ as butterfly minor. Otherwise, $\dtw(G) <
  f(k\cdot (c+1))$ and we can apply Lemma~\ref{lem:bd-dtw}.
  
  Note that, for every fixed $c$, any tree-like butterfly model of $G_c$ in a 
  graph $G$ has directed tree-width 
  bounded by $\Oof(c)$. Hence, we can compute a model of $H$ in 
  any model of $G_c$  in $G$
  by Theorem~\ref{thm:comp-dtw-top}.
\end{proof}

We now show the converse to the previous result.

\begin{theorem}
	\label{thm:EPCounter}
	Every strongly connected digraph $H$ which is not a butterfly minor of some 
	cylindrical grid does not satisfy the Erd\H os-P\'osa property.
\end{theorem}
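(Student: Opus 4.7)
The plan is to prove the contrapositive: suppose $H$ has the Erd\H os-P\'osa property for butterfly minors, witnessed by a function $f\st\N\to\N$; we deduce that $H\butterfly G_c$ for some~$c$. The strategy is to apply the EP hypothesis to a carefully designed family of digraphs $\{W_n\}_{n\ge 1}$ of unbounded directed tree-width, each containing $H$ as a butterfly minor but neither admitting many disjoint copies of $H$ nor a small hitting set. The contradiction then forces the existence of a cylindrical grid $G_c$ containing $H$ as a butterfly minor (using Theorem~\ref{thm:grid} to exhibit the grid inside a digraph of large enough directed tree-width, and pulling $H$ back from the large grid).

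Concretely, I would construct $W_n$ by starting from a cylindrical grid $G_m$ of order $m = m(n)$ sufficiently large relative to $n$ and~$f$, then augmenting $G_m$ with an auxiliary substructure (for instance, a single copy of $H$, or a carefully engineered gadget containing $H$ as a butterfly minor) threaded through the grid along long disjoint paths. The assumption $H\not\butterfly G_m$ is crucial: it guarantees that every butterfly model of $H$ inside $W_n$ must pass through the attached substructure rather than being realised in the grid alone. Choosing the threading so that models can be rerouted through unused grid paths after any small vertex deletion yields the lower bound on the hitting-set size, while the compactness of the attached substructure caps the number of pairwise disjoint models at some constant $k_0 = k_0(H)$. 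Applying the EP property to $W_n$ with $k = k_0 + 1$ and $n > f(k_0+1)$ then produces the desired contradiction, since neither $k_0+1$ disjoint models nor a hitting set of size at most $f(k_0+1) < n$ can exist.

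The main obstacle is the construction of $W_n$ satisfying all three requirements simultaneously: (a) $H\butterfly W_n$, (b) at most $k_0 = k_0(H)$ pairwise disjoint butterfly models of $H$, and (c) every hitting set of size less than $n$ misses some model. Property (b) relies on $H\not\butterfly G_c$ together with the observation that all butterfly models of $H$ must funnel through a common compact substructure outside the grid; property (c) exploits the high vertex-connectivity and the abundance of internally disjoint directed paths in a cylindrical grid to enable redundant rerouting after vertex deletions. This mirrors the spirit of Robertson--Seymour's counterexample construction for undirected non-planar $H$, transposed to the directed setting with the cylindrical grid playing the role of the canonical "planar" universal object for strongly connected digraphs.
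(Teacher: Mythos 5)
Your overall architecture---assume simultaneously the Erd\H os-P\'osa property and $H\not\butterfly G_m$ for all $m$, build a family of ``grid plus attached copy of $H$'' digraphs, and contradict the EP property by showing they admit only few disjoint models but no small hitting set---is the same as the paper's. The gap is in the construction itself: your requirements (b) and (c) are in direct conflict as you argue them. If, as in your justification of (b), every butterfly model of $H$ must funnel through a common \emph{compact} attached substructure (e.g.\ a single copy of $H$ glued to the grid by two edges), then that substructure is itself a constant-size hitting set; in fact already the two attachment vertices suffice, since $H$ is strongly connected, $H$ minus an edge cannot contain $H$ as a butterfly minor, and $H\not\butterfly G_m$ rules out models living in the grid alone, so every model must traverse both attachment edges. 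Hence (c) fails, the EP property is perfectly satisfied on your $W_n$, and no contradiction arises. ``Rerouting through unused grid paths'' cannot repair this, because the deleted vertex can be taken inside the gadget that every model is forced to use.

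The paper's construction removes this tension by attaching $k>f(2)$ \emph{pairwise disjoint} copies of $H$ (each with one edge $e$ deleted) around the outer cycle of $G_k$, the $i$-th copy replacing the edge $(x_{k,2i-1},x_{k,2i})$ (the digraph $G_k^{H,e}$ of Definition~\ref{def:counter}). Then no set of at most $f(2)$ vertices can meet all $k$ disjoint subgraphs $C_i\cup P_{2i-1}\cup P_{2i}\cup H_i$, each of which contains $H$ as a butterfly minor, which gives (c). But then the analogue of your (b) can no longer come from compactness, since a priori two disjoint models could use two different copies $H_i$, $H_j$. The paper's key step, absent from your proposal, is topological: any model using a copy $H_i$ must use both of its attachment edges together with a directed path in the planar grid $G_k$ from $x_{k,2i}$ to $x_{k,2i-1}$; in a planar embedding this path closes a curve separating $G_k$, and the side containing the remainder of the outer cycle is acyclic, so no second disjoint model can exist anywhere in the graph. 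Without this separation argument (or some substitute), bounding the number of disjoint models while simultaneously forbidding small hitting sets does not go through, so as written the proposal does not yield the theorem.
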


To prove the theorem we first define a general construction that will be used 
later on. Let $G_k = (C_1, \dots, C_k, P_1, \dots, P_{2k})$ be a cylindrical 
grid, where the $C_i$ are the concentric cycles (ordered from the inside out in 
a fixed embedding of $G_k$ on the plane) and the $P_i$ are the alternating 
paths, ordered in clockwise order on the cycles $C_j$, so that for odd $i$, the 
path $P_i$ traverses the cycles in order $C_1, \dots, C_k$, i.e.~from the 
inside out, whereas for even $i$ the cycles appear on $P_i$ in the reverse 
order.
For $1\leq i\leq k$ and $1\leq j\leq 2k$ let $x_{i,j}$ be the common vertex of  
$P_j$ and $C_i$. 

Recall that a cylindrical wall $W_k$ is obtained from $G_k$ by 
splitting degree $4$ vertices. Note that the outer cycle $C_k$ 
does not have any degree $4$ vertices, and therefore the 
following construction can also be applied to a wall $W_k$.

\begin{definition}[The digraphs $G^{H, e}_n$ and $W^{H, 
e}_n$]\label{def:counter}
	Let $H$ be a digraph and let $e\in E(H)$ be an edge. 
	The digraph $G_k^{H, e}$ is obtained from the disjoint union of $k$ 
	isomorphic copies of $H$, say $H_1, \dots, H_k$, and the grid $G_k$ as 
	follows. In each copy $H_i$ we delete the edge $e_i = (u_i, v_i)$ 
	corresponding to $e$. Furthermore, in $G_k$ we delete all edges 
	$(x_{k,2i-1}, x_{k, 2i})$, for $1\leq i \leq k$. Finally, for all $1\leq i 
	\leq k$, we add an edge $(u_i, x_{k, 2i})$ and an edge $(x_{2i-1}, v_i)$. 
	We call $G_k^{H, e}$ the \emph{attachment of $H$ to $G_k$} and refer to the 
	graphs $H_i$ with the edge $e_i$ deleted plus the two new edges as the 
	\emph{$i$-th copy of $H$} in $G_k^{H, e}$. 
    
    We can apply the same construction using $W_k$ instead of 
    $G_k$. We denote the resulting graph by $W^{H, e}_k$ and 
    call it the \emph{attachment of $H$ to $W_k$}.
\end{definition}

See Figure~\ref{fig:counter} for a schematic overview of 
$G_k^{H, e}$.
We are now ready to prove Theorem~\ref{thm:EPCounter}.

\medskip
\begin{proof}[Proof of Theorem~\ref{thm:EPCounter}]
	Let $H$ be a strongly connected digraph such that $H\not\butterfly G_k$ for 
	all $k\geq 0$. Let $e\in E(H)$. 
	Towards a contradiction, suppose $H$ had the Erd\H os-P\'osa 
	property, witnessed by a function $f\st \N\rightarrow \N$. Choose a value 
	$k>f(2)$	and let $G := G_k^{H, e}$. 

	We first claim that for any set $S\subseteq V(G)$ of at most $f(2)$ 
	vertices, $G-S$ contains $H$ 
	as a butterfly minor. 
	To prove this, let $S$ be such a set. As $|S|<f(2)$, there is an index 
	$1\leq i \leq k$ such that $S$ does not contain a vertex of $C_i\cup 
	P_{2i-1}\cup P_{2i} \cup H_i$, where $H_i$ is the $i$-th copy of $H$ in 
	$G_k^{H,  e}$. But then, $H\butterfly C_i\cup 
	P_{2i-1}\cup P_{2i} \cup H_i$.
	
	To complete the proof we show next that $G$ does not contain two disjoint 
	butterfly models of $H$. Let $\mu$ be a tree-like model of $H$ in $G$. As 
	$H\not\butterfly G_k$, by assumption, $\mu$ must contain a vertex $v$ in 
	some copy $H_i$ of $H$ in $G_k^{H, e}$. But as $H$ is strongly connected and 
	$H_i$ has fewer edges than $H$, $\mu$ must also use both edges $(u_i, x_{k, 
	2i})$ and $(x_{2i-1}, v_i)$ and a directed path in $G_k$ linking $x_{k, 2i}$ 
	to $x_{k, 2i-1}$. We view $G_k$ as being embedded in the plane. 
	Then this path induces a closed curve from $x_{k, 2i}$ 
	to $x_{k, 2i-1}$ in the plane splitting $G_k$ into two disjoint parts. 
	Furthermore, the part containing the rest of the outer cycle $C_k$ not on 
	the curve is acyclic. Hence, there cannot be a second model of $H$ in 
	$G-\mu(H)$. 
\end{proof}
\begin{wrapfigure}{R}{5cm}
	\begin{center}
		\begin{tikzpicture}[scale=0.5]
		\tikzstyle{vertex}=[shape=circle, fill=black, draw, inner
		sep=.4mm] \tikzstyle{emptyvertex}=[shape=circle, 
		fill=white,
		draw, inner sep=.7mm] 
		
		\tikzset{
			partial ellipse/.style args={#1:#2:#3}{
				insert path={+ (#1:#3) arc (#1:#2:#3)}
			}
		}
		
		\foreach \x in {1,2,3,4} { \draw (0,0)
			circle (\x); \ifnum\x<4 \draw[->, very thick] 
			(180-22.5:\x) --
			(180-23.5:\x); \draw[->, very thick] (-22.5:\x) -- 
			(-23.5:\x);
			\fi 
			
			\foreach \z in {0,45,...,350} { \node[vertex] (n\z\x) 
			at
				(\z:\x){}; } } \draw[->, very thick] (168-12.25:4) 
				--
		(167-12.25:4); \draw[->, very thick] (168+180-12.25:4) --
		(167+180-12.25:4); 
		
		\foreach \z in {0,90,...,350} { \draw[
			decoration={markings,mark=at position 1 with 
			{\arrow[very
					thick]{>}}}, postaction={decorate} ] (\z:1) -- 
					(n\z4); }
		
		\foreach \z in {45,135,...,350} { \draw[
			decoration={markings,mark=at position 1 with 
			{\arrow[very
					thick]{>}}}, postaction={decorate} ] (\z:4) -- 
					(n\z1); }
		
		% \draw[thick] sin(30)*1.5,6) arc (-60:240:1.5);
		%\draw[thick] (0,0) circle (5);

		\draw[red,dashed,decoration={markings,mark=at position 1 
		with {\arrow[very
				thick]{>}}}, postaction={decorate}] (3.6,-3.6) -- 
				(5,-0.1);
		
		\draw[decoration={markings,mark=at position 1 with 
		{\arrow[very
				thick]{>}}}, postaction={decorate} ] (4,0) -- 
				(4.9,0);
		
		\node[vertex] at (5,0){};
		
		\node[draw=none] at (4.3,0.2) {$x_1$};
		\node[draw=none] at (5,0.2) {$v_1$};
		
		\node[vertex] at (3.6,-3.6){};
		\node[draw=none] at (3.5,-3.9){$u_1$};
		\node[draw=none] at (2.8,-2.4){$y_1$};
		
		\draw[thick] (5,0) .. controls (8,2) and (5,-7) .. 
		(3.6,-3.6);
		
		\draw[decoration={markings,mark=at position 1 with 
		{\arrow[very
				thick]{>}}}, postaction={decorate}] (3.6,-3.6) -- 
				(2.9,-2.9);
		\node[draw=none] at (6.1,-3){$H'_1$};
		
		\draw[dashed] (0,-5) arc (-90:-135:5);
		\draw[dashed] (-5,0) arc (-180:-225:5);
		\draw[dashed] (0,5) arc (-270:-45-270:5);
		
		\end{tikzpicture}\addtolength{\textfloatsep}{-200pt}
		\caption{Counter example to EP-property for a graph $H$.  
		Just
			$H'_1$ is shown in the figure. Edge $e=(u_1,v_1)$ from 
			$H$ deleted
			and edges $(u_1,y_1)$ and $(x_1,v_1)$ are added to 
			form a
			connection of $H'_1$ to cylindrical grid. 
			}%\label{fig:grid}%\figlabel{fig:grid}
				\label{fig:counter}
	\end{center}
\end{wrapfigure}
Theorem~\ref{thm:EPCounter} and~\ref{thm:EPPositive} together imply
the proof of Theorem~\ref{thm:ep-sc} for butterfly minors. To prove it
for the topological minors, it is easily seen 
that the same construction as in the Theorem~\ref{thm:EPCounter} where the grid $G_k$ is 
replaced by a wall $W_{k}$ proves that if a strongly connected digraph $H$ is 
not a topological minor of some fixed directed wall $W$, then $H$ does not have 
the  Erd\H os-P\'osa
property for topological minors.

We also obtain the following consequence.
\begin{corollary}
\label{col:EPSCC}
	For strongly connected digraphs, the  Erd\H os-P\'osa
	property (for butterfly and topological minors) is closed under strongly 
	connected subgraphs, i.e.~if a strongly connected graph $H$ does not satisfy 
	the Erd\H os-P\'osa property and $H\butterfly G$ then $G$ does not satisfy
  Erd\H os-P\'osa property.
\end{corollary}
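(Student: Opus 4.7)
The plan is to derive the corollary as an immediate consequence of the characterization in Theorem~\ref{thm:ep-sc} together with transitivity of the two minor relations. Since $H$ is strongly connected and does not satisfy the Erd\H os-P\'osa property, Theorem~\ref{thm:ep-sc} tells us that $H$ is not a butterfly (respectively topological) minor of any cylindrical grid $G_c$ (respectively wall $W_c$). So the entire task reduces to showing that $G$ cannot be a minor of any such grid either.

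More precisely, I would argue by contraposition. Suppose, for contradiction, that $G$ does satisfy the Erd\H os-P\'osa property. Note that $G$ is strongly connected because it contains the strongly connected digraph $H$ as a butterfly (or topological) minor and any such minor of $G$ witnesses a strongly connected subdigraph of $G$; hence Theorem~\ref{thm:ep-sc} applies to $G$ and provides a constant $c = c(G)$ such that $G \butterfly G_c$ (respectively $G \topo W_c$). Combining this with the hypothesis $H \butterfly G$ (respectively $H \topo G$) and using transitivity of the minor relation, we obtain $H \butterfly G_c$ (respectively $H \topo W_c$). Applying Theorem~\ref{thm:ep-sc} in the other direction then yields that $H$ has the Erd\H os-P\'osa property, contradicting the assumption on $H$.

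The only non-trivial point, and thus the main (but mild) obstacle, is to verify transitivity of the two minor notions. For topological minors this is completely routine: given $\mu_1$ witnessing $H \topo G$ and $\mu_2$ witnessing $G \topo G'$, one defines a topological model of $H$ in $G'$ by setting $\mu(v) := \mu_2(\mu_1(v))$ for $v \in V(H)$ and, for each edge $e \in E(H)$ with $\mu_1(e) = v_0 e_1 v_1 \cdots e_\ell v_\ell$, concatenating the paths $\mu_2(e_1), \ldots, \mu_2(e_\ell)$ through the images of the internal vertices $\mu_2(v_i)$. For butterfly minors one can use Lemma~\ref{lem:tree-like-butterfly}: given tree-like models $\mu_1$ of $H$ in $G$ and $\mu_2$ of $G$ in $G'$, the set $\bigcup_{u \in V(\mu_1(v))} \mu_2(u) \cup \bigcup_{e \in E(\mu_1(v))} \mu_2(e)$ is again the union of an in-branching and an out-branching sharing only their roots, because the $\mu_2$-images of an in-branching at a vertex form an in-branching in $G'$ (and analogously for out-branchings), and these can be glued at the common root. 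This produces a tree-like model of $H$ in $G'$, completing the transitivity argument.
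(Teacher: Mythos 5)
Your route is the intended one --- the paper offers no separate proof and treats the corollary as an immediate consequence of Theorem~\ref{thm:ep-sc} together with transitivity of $\butterfly$ and $\topo$ --- but one step of your write-up is genuinely wrong: the claim that $G$ is strongly connected \emph{because} it contains the strongly connected digraph $H$ as a butterfly (topological) minor. Containing a strongly connected minor does not force strong connectivity: $H$ together with a single pendant vertex contains $H$ as a subgraph, hence as a butterfly and as a topological minor, yet is not strongly connected. Since Theorem~\ref{thm:ep-sc} characterises the Erd\H os-P\'osa property only for strongly connected digraphs, you cannot apply it to $G$ unless $G$ is strongly connected, and nothing available at this point of the paper handles an arbitrary $G$ (that is the subject of Section~\ref{sec:ep-vc}, and even there only for vertex-cyclic digraphs and only as a necessary condition). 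The repair is simply to read strong connectivity of $G$ as part of the hypothesis --- that is what ``for strongly connected digraphs \dots closed under strongly connected subgraphs'' is doing --- rather than something to be derived. With that reading the remainder of your argument is exactly the intended one: if $G$ had the property, Theorem~\ref{thm:ep-sc} gives $G\butterfly G_c$ (resp.\ $G\topo W_c$), transitivity gives $H\butterfly G_c$ (resp.\ $H\topo W_c$), and applying the theorem to $H$ contradicts the assumption that $H$ fails the property.

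A smaller point: your transitivity verification for butterfly minors is imprecise as stated. In the composition, the $\mu_2$-image of a single vertex $u$ of $\mu_1(v)$ is itself the union of an in-branching and an out-branching sharing a root, so the $\mu_2$-image of the in-branching of $\mu_1(v)$ is not literally an in-branching. One first has to prune unused leaves (as in the proof of Lemma~\ref{lem:butterfly-topo}), after which each intermediate out-branching degenerates to a directed path towards its unique outgoing connecting edge and the composite does become the union of one in-branching and one out-branching meeting only in the root of the image of $r_v$. This is routine and does not affect correctness, but it needs a sentence rather than the bare assertion.
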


%%%%%%%%%%%%%%%%%%%%%%%%%%%%%%%%%%%%%%%%%%%%%%
\section{The EP-Property for Vertex-Cyclic Digraphs}
\label{sec:ep-vc}
In this section we extend the results of the previous section to the
more general class of \emph{vertex cyclic} digraphs. A digraph is
\emph{vertex cyclic} if it does not contain a trivial strong
component, i.e.~if every vertex lies on a cycle. Clearly, every
strongly connected digraph is vertex cyclic but the converse is not
true. For simplicity, in this section we only consider weakly connected 
digraphs, i.e.~where 
the underlying undirected graph is connected. Many results can be extended to 
the case of not weakly connected digraphs but we leave this for the full 
version of the paper.

Let $G$ be a digraph and let $e\in E(G)$. Let $n\geq 1$. We define $G^n_e$ as 
the digraph obtained from $G$ by subdividing $e$ $n$ times. 
Given digraphs $H$ and $G$, we say that $H$ is \emph{topologically 
  s-embeddable} in $G$ if there is an edge $e\in E(G)$ such that $H\topo
G^{|H|}_e$. We say that  $H$ is \emph{butterfly s-embeddable} in $G$ if 
there is an edge $e\in E(G)$ such that $H\butterfly G^{|H|}_e$.

%\begin{definition}\label{def:ultra-homogeneous}
  A digraph $G$ is \emph{ultra-homogeneous} with respect to topological (or 
  butterfly) minors, if the block graph of $G$ is a simple directed path 
  without parallel edges and any two components of $G$ are pairwise 
  topologically (or butterfly, respectively) s-embeddable into
  each other and furthermore if the length of the block graph is at
  least $3$, then all of the components except the first and the last
  components, w.r.t. topological order,
  have the same size and also none of those has smaller size than
  the first or the last component.

\begin{definition}\label{def:ultra-homogeneous}
 A digraph $G$ is \emph{ultra-homogeneous} with respect to topological (or 
  butterfly) minors, if the block graph of $G$ is a simple directed path 
  without parallel edges and any two components of $G$ are pairwise 
  topologically (or butterfly, respectively) s-embeddable into
  each other and furthermore if the length of the block graph is at
  least $3$, then all of the components except the first and the last
  components, w.r.t. topological order,
  have the same size and also none of those has smaller size than
  the first or the last component.
\end{definition}

Our main classification result of this section is the following.

\begin{theorem}\label{thm:ep-vc}
  Let $H$ be a weakly connected vertex-cyclic digraph.
  If $H$ has the Erd\H os-P\'osa property for
  butterfly (topological) minors, then it is  
  ultra-homogeneous with respect to butterfly (topological)
  embeddings, its maximum degree is at most
  $3$ and every strong 
  component of $H$ is a butterly (topological) minor of some 
  cylindrical grid (wall) $G_k$.
\end{theorem}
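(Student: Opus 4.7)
The plan is to prove the contrapositive: I show that if $H$ violates any one of the three stated necessary conditions, then $H$ fails the Erd\H os-P\'osa property. Every counterexample I build follows the template of Definition~\ref{def:counter}: take $k$ disjoint copies $H_1,\dots,H_k$ of $H$ (or of a suitable part of $H$) and attach them to a cylindrical grid $G_k$ (or wall $W_k$) by rerouting a distinguished edge of each copy through the grid. The outer cycle of the grid can be broken by at most one routing arc, so any two butterfly/topological models of $H$ conflict on the grid, while a hitting set of size less than $k$ always misses some copy $H_i$ entirely, and the $i$-th grid routing together with $H_i$ still realises a model of $H$.

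First, I handle the case that some strong component $C$ of $H$ fails to be a butterfly (topological) minor of any $G_k$ (any $W_k$). Theorem~\ref{thm:EPCounter} already gives a counterexample $G_k^{C,e}$ (resp.\ $W_k^{C,e}$) for $C$; since $H$ is weakly connected and vertex-cyclic, I enlarge each copy $C_i$ in the counterexample to a full copy of $H$ by hanging the remaining strong components of $H$ off $C_i$ along the directed path prescribed by the block graph of $H$. Any model of $H$ in the enlarged digraph restricts to a model of $C$, so the grid-routing argument of Theorem~\ref{thm:EPCounter} still rules out two disjoint models, while the hitting-set bound is untouched because the attached pieces live inside the copies $H_i$ and are disjoint from the grid.

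Second, I handle the case $d_H(v)\ge 4$ for some vertex $v$. In the topological setting the image $\mu(v)$ in any model has total degree at least $d_H(v)\ge 4$, and since $\Delta(W_k)=3$, $H$ cannot be a topological minor of any wall, so I can use the wall-attachment counterexample of the first case, arranging for the copy of $v$ in each $H_i$ to be the only available candidate image. For butterfly minors I apply Lemma~\ref{lem:tree-like-butterfly} to locate, in any tree-like model, the branching $\mu(v)$ with at least four boundary arcs; the counterexample attaches each $H_i$ to $G_k$ in such a way that routing these four boundary arcs through the grid forces the whole model into a single sector of $G_k$, so two disjoint models would require two disjoint sectors and again contradict the outer-cycle argument. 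This is the main obstacle of the proof, since in the butterfly case the degree of $H$ is not directly inherited by the host graph and one must exploit the tree-like structure of Lemma~\ref{lem:tree-like-butterfly} carefully.

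Third, I handle failure of ultra-homogeneity by splitting according to Definition~\ref{def:ultra-homogeneous}. If the block graph has branching or parallel edges, the extra edges between strong components yield multiple competing embeddings of $H$ into a grid-attachment graph; these embeddings still share a common grid bottleneck, so the routing argument once more rules out two disjoint models. If two strong components are not mutually s-embeddable, one of them is rigidly ``larger'' than the other and cannot be substituted for it in a model, and the grid-attachment counterexample of the first case can be applied to the non-embeddable component. The size constraints on internal versus extremal components of the block path are enforced by the same template, this time exploiting the excess size of a violating internal component to cause a second model to collide with the grid routing. In each subcase the combinatorial core is the grid-routing argument of Theorem~\ref{thm:EPCounter}; assembling it correctly, particularly in the butterfly subcase of the second step, is where the real work lies.
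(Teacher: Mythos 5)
There is a genuine gap, and it lies exactly where you place all your confidence: the uniform use of the cylindrical attachment template of Definition~\ref{def:counter}. That template (and the outer-cycle routing argument of Theorem~\ref{thm:EPCounter}) only works when the piece of $H$ being forced into the grid cannot itself be found inside the grid, i.e.\ when $H$ (or the relevant component) is \emph{not} a butterfly/topological minor of any $G_k$ or $W_k$. But the ultra-homogeneity conditions are precisely the conditions that can fail while every strong component of $H$ \emph{does} embed into cylindrical grids (think of $H$ being two cycles joined by two parallel connecting edges, or a cycle with edges to two other cycles). In that situation a large cylindrical grid or wall already contains arbitrarily many pairwise disjoint models of $H$ (disjoint concentric cycles joined by subpaths of the radial paths), and deleting the $k$ outer-cycle edges in $G_k^{H,e}$ does not change this. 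So your constructed digraph does not exclude two disjoint models, and the claim ``any two models conflict on the grid'' is simply false in these cases; the same objection applies to your treatment of parallel edges, branching block graphs, non-s-embeddability, and the size constraints. This is why the paper does not use the cylindrical grid there: Lemma~\ref{lem:biglemma} builds its counterexamples from the \emph{acyclic} grid attachments of Definition~\ref{def:grid-attachment}, whose host contains no cycles at all, so every cycle of a model is forced into a copy $H_i$ or into the single column that replaces the deleted edge $e'$; a delicate argument with minimal tree-like models (the strongly connected cores $\rho(X)$ of Lemma~\ref{lem:tree-like-butterfly}-type images, the choice of a terminal component with fewest edges, the latest non-embeddable component, or a distance-minimal triple for the size condition) then shows each model must consume an entire grid column plus a crossing path, whence any two models intersect. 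None of this machinery, nor any substitute for it, appears in your proposal.

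Two smaller points. First, your reduction of the ``component not embeddable in any grid/wall'' case to Theorem~\ref{thm:EPCounter} needs the extra component-tracking argument (as in the paper's Lemma~\ref{lem:no-degree-4-vertex} and Item~4 of Lemma~\ref{lem:biglemma}), because Theorem~\ref{thm:EPCounter} is stated only for strongly connected $H$; merely ``hanging the remaining components off $C_i$'' does not by itself show that a model of $H$ in the enlarged digraph places the critical component where you want it -- one must argue, via cardinality/reachability of components, that $\rho(C)$ cannot sit in the grid or in a wrong copy. Second, for degree at least $4$ in the butterfly setting you promise a construction forcing ``four boundary arcs'' through the grid; note that the paper explicitly remarks that Lemma~\ref{lem:no-degree-4-vertex} does not necessarily extend to butterfly minors, so this step is not something you can wave at -- as written it is unsupported.
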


The first result we prove is the following.
\begin{lemma}\label{lem:no-degree-4-vertex}
  Let $H$ be a vertex-cyclic digraph. If $H$ contains a vertex 
  of degree at least
  $4$, then $H$ does not have the Erd\H os-P\'osa property for 
  topological minors.
\end{lemma}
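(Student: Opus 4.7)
The plan is to adapt the counterexample from Theorem~\ref{thm:EPCounter}, replacing the cylindrical grid by the cylindrical wall and substituting the degree-$\geq 4$ hypothesis for strong connectivity. Since $H$ is vertex-cyclic, the distinguished vertex $v$ of degree $\geq 4$ lies on a directed cycle $C$ of $H$; fix any edge $e \in E(C)$ and, for each $k$, let $G_k := W_k^{H, e}$ as in Definition~\ref{def:counter}. Two structural observations underpin the construction: since $W_k$ has maximum degree $3$, we have $H \not\topo W_k$; and since the attachment operation both (i)~replaces one outer-cycle edge at each of $x_{k, 2i-1}, x_{k, 2i}$ by one attachment edge, and (ii)~deletes $e_i$ from each copy $H_i$ while re-adding an edge at each of its two endpoints, every wall vertex of $G_k$ still has degree $\leq 3$, while each copy $H_i'$ still carries a vertex of degree exactly $d_H(v) \geq 4$.

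The positive side, namely that for any alleged witness $f\st \N \to \N$ and any $k$ sufficiently large compared to $f(2)$ no set $S \subseteq V(G_k)$ of size $\leq f(2)$ hits every topological model of $H$ in $G_k$, is a direct transcription of the pigeonhole argument in Theorem~\ref{thm:EPCounter}: some index $i$ has $V(H_i') \cup \{x_{k, 2i-1}, x_{k, 2i}\}$ disjoint from $S$, and for $k$ large enough a directed $x_{k, 2i}$-to-$x_{k, 2i-1}$ path survives in $W_k - S$; combined with $H_i'$ and the two attachment edges this reconstructs $H$ as a topological minor of $G_k - S$.

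The negative side---$G_k$ contains no two disjoint topological models of $H$---is where the degree hypothesis does the work. For any topological model $\mu$, the fact that $\mu$ preserves in- and out-degrees at vertex-images forces $\mu(v)$ to be a vertex of degree $\geq 4$ in $G_k$, hence $\mu(v) \in V(H_i')$ for some $i$. But a simple edge count shows $H \not\topo H_i' \cong H - e$ (a topological model of $H$ requires $|E(H)|$ edge-disjoint paths while $H_i'$ has only $|E(H)| - 1$ edges), so $\mu(H)$ must leave $H_i'$ via an attachment edge; following $\mu$ along the images of the edges of the cycle $C$ through $e$ then forces $\mu$ to use both attachment edges $(u_i, x_{k, 2i})$ and $(x_{k, 2i-1}, v_i)$ together with a directed path $P$ in $W_k$ from $x_{k, 2i}$ to $x_{k, 2i-1}$. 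From here the planar argument at the end of the proof of Theorem~\ref{thm:EPCounter} applies essentially verbatim: the closed curve formed by $P$, the two attachment edges, and the corresponding $H_i'$-arc from $v_i$ to $u_i$ splits the planar embedding of $G_k$ into two regions, one of which becomes acyclic after removing $\mu(H)$; the other is a subgraph of $W_k$ together with the remaining copies $H_j'$, each of whose only ``escape'' through the wall is now also blocked by $\mu$, and therefore cannot contain $H$ topologically.

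The main technical subtlety is verifying that the planar/acyclicity step from Theorem~\ref{thm:EPCounter}, originally proved under the stronger hypothesis of strong connectivity of $H$, carries over to the merely vertex-cyclic setting. Happily, strong connectivity was used there only to guarantee that every vertex of any putative disjoint second model of $H$ must lie on a directed cycle of $G_k - \mu(H)$; vertex-cyclicity of $H$ already delivers exactly this property, which is precisely what rules out such a second model on the acyclic side of the planar cut and completes the argument.
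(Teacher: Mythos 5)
Your overall strategy (attach $H$ to a wall via Definition~\ref{def:counter}, use the degree bound of $W_k$ to trap the image of the degree-$4$ vertex inside a copy, then run the planar-cut argument of Theorem~\ref{thm:EPCounter}, with the pigeonhole argument for the hitting-set side) is the same as the paper's, but the central step of the negative direction has a genuine gap. From ``$\mu(v)$ lies in some copy $H_i'$'' and ``the whole model cannot sit inside $H_i'$ (edge count)'' you conclude that \emph{following the images of the edges of the cycle $C$ through $e$} forces $\mu$ to use both attachment edges $(u_i,x_{k,2i})$, $(x_{k,2i-1},v_i)$ and a wall path from $x_{k,2i}$ to $x_{k,2i-1}$. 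This does not follow: a topological model need not map $v$ to the copy of $v$ --- $\mu(v)$ is merely \emph{some} vertex of degree $\geq 4$ in $H_i$ --- and $\mu(C)$ is just some directed cycle through $\mu(v)$, which may avoid the location of the deleted edge $e_i$ entirely. In particular $\mu(C)$, and even the image of the whole strong component $A_s$ of $H$ containing $v$, could a priori be contained in $H_i-e_i$, for instance inside the copy $A_j^i$ of a \emph{different} strong component $A_j$ of $H$ into which $A_s$ embeds topologically; in that case no attachment edge of copy $i$ is forced and your planar cut never materialises. Knowing only that ``some part of the model leaves the copy'' is also not enough: it could leave through one attachment edge only (say the image of a cross-component edge exits via $(u_i,x_{k,2i})$ and the downstream components are realised in the wall or in other copies), again without producing the $x_{k,2i}\to x_{k,2i-1}$ path your argument needs.

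The paper closes exactly this gap with two steps you omit: first, a cardinality argument over the topologically ordered strong components $A_1,\dots,A_l$ of $H$ showing that $\mu(v)$ must lie in the copy of $v$'s \emph{own} component $A_s$ (not in some other $A_j^i$); second, the observation that $\mu(A_s)$ is strongly connected and uses at least $|E(A_s)|$ edges, while (with $e$ chosen inside $A_s$, as in your choice of $e\in E(C)$) the copy of $A_s$ in $H_i$ has only $|E(A_s)|-1$ edges, so $\mu(A_s)$ must leave and, by strong connectivity, re-enter the copy --- and the only exit and entrance are the two attachment edges, which yields the wall path and lets the planar argument of Theorem~\ref{thm:EPCounter} take over. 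Your final remark that vertex-cyclicity substitutes for strong connectivity on the ``acyclic side'' of the cut is fine in spirit, but as written your proof establishes the forcing step only under the unstated assumption that $\mu$ maps the copy of $H$ essentially identically onto $H_i'$, which is precisely what must be proved.
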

\begin{proof}
  Let $H$ be vertex-cyclic and let $v\in V(H)$ be a vertex of 
  degree 
  at least $4$ in $H$. Furthermore, let $e$ be an incident edge 
  of $v$. 
  Towards a contradiction suppose $H$ had 
  the Erd\H os-P\'osa property witnessed by a function $f\st 
  \N\rightarrow \N$. As in the proof of 
  Theorem~\ref{thm:EPCounter}, let $k>f(2)$. Let $W_k^{H, e}$ be 
  the digraph defined in 
  Definition~\ref{def:counter}. 
  
  We show first that  $W_k^{H, e}$ 
  does not contain two disjoint topological models of $H$. Let 
  $A_1, \dots, A_l$ be the strong components of $H$ in 
  topological order, i.e.~there is no edge from $A_i$ to $A_j$ 
  whenever $j<i$, and let $A_s$ be the component containing $v$. 
  Let $\mu$ be a topological model of $H$ in $W$. Note that in 
  $W_k^{H, e}$ no vertex $w\in V(W_k)$ has degree $\geq 4$. 
  Hence $\mu(v)$ must be in some copy $H'$ of $H$ in $W_k^{H, 
  e}$. More precisely, $\mu(v)$ must be in the strong component 
  of $H'$ corresponding to $A_s$. For, suppose  $\mu(v)$ was in 
  a strong component corresponding to some $A_i$ with $i<s$ such 
  that $A_s$ is reachable from $A_i$ in $H$. Then for every 
  $w\in V(H)$ from which $v$ is reachable in $H$, $\mu(w)$ must 
  be in a component $A_j$ in the copy $H'$ such that $A_i$ is 
  reachable from $A_j$. But this is impossible for cardinality 
  reasons. Similarly, we can show that $\mu(v)$ cannot be 
  in any other component except for $A_s$. It follows that every 
  edge and every vertex of $A_i$ must be mapped to either the 
  copy of $A_i$ in $H'$ or to some vertex of the wall or $A_i$ 
  in another copy of $H$. In any case, $\mu(A_i)$ is 
  strongly connected and therefore $\mu(A_i)$ contains both 
  edges connecting $H'$ to the wall and a directed path between 
  them. We can therefore argue as in the proof of 
  Theorem~\ref{thm:EPCounter}.
  
  To conclude the argument, we can argue as in the proof of 
  Theorem~\ref{thm:EPCounter} that for every set 
  $S\subseteq V(W^{H, e}_k)$ of order $< k$ the graph $W^{H, 
  e}_k-S$ contains $H$ as topological minor.
\end{proof}

Note that this result does not necessarily extend to butterfly minors.
We now introduce a construction that will frequently be applied
below. 
\begin{definition}\label{def:grid-attachment}
	Let $A_k := \big((P_1, \dots, P_k), (Q_1, \dots, Q_k)\big)$ be the 
	acyclic grid of order $k$ as defined in 
	Definition~\ref{def:grid}. Recall that $V(P_i)\cap V(Q_j) = \{v_{i,j}\}$.  
	Let $H$ be a digraph and let $C_i$ and $C_j$ be
	distinct non-trivial strong components of $H$ so that there is
	an edge $e=(u,v)$ for $u\in V(C_i)$ and $v\in V(C_j)$. 
	\begin{enumerate}[nosep]
		\item Let $e_2 \in 
		E(C_2)$ be an edge incident to $v$. 
		The \emph{left 
		acyclic attachment graph $A^{n, H, C_1, C_2}_{e,e_2}$ of $H$ through $e$ 
		and $e_2$ of 
		order $n$} is defined as follows. Take a copy of $A_n
              = (\PPP, \QQQ)$ and 
		$n$ disjoint copies $H_1, \dots, H_n$ of $H$. For every $v\in V(H)$ we 
		write $v^i$ for its isomorphic copy in $H_i$ and likewise we write $e^i$ 
		for the copy of an edge $e$ in $H_i$. For all $1\leq i \leq n$, we delete 
		the edges $e^i = (u^i, v^i)$ and $e_2^i = (x, y)$ and instead add the 
		edge $(u^i, v_{i, 1})$ and identify the topmost vertex $v_{1,i}$ of the 
		$i$-th column of $A_k$ with $x$ and the last vertex $v_{k, i}$ of this 
		column with $y$. 
		\item Now let $e_1\in E(C_1)$ be an edge incident to $u$. The \emph{right 
			acyclic attachment graph $\hat{A}^{n, H, C_1, C_2}_{e,e_1}$ of $H$ 
			through $e$ 
			and $e_1$ of order $n$} is defined analogously, but now the edge $e^i$ 
			in the copy $H_i$ is replaced by an edge $(v_{i, k}, v^i)$ and the end 
			vertices of the edge $e_1 = (x,y)$ are identified with $v_{1, i}$ and 
			$v_{k,i}$ respectively. 
	\end{enumerate}
\end{definition}

It is easily seen that for all $n>1$, $H\topo A^{n, H, C_1, C_2}_{e, e_2}$ and 
hence $H\butterfly A^{n, H, C_1, C_2}_{e, e_2}$, for 
all choices of $C_1, C_2, e, e_2$ as in the definition and 
furthermore, for all $S\subseteq V(A^{n, H, C_1, C_2}_{e, e_2})$ of
order $<n$, $H\topo 
A^{n, H, C_1, C_2}_{e, e_2} - S$ and hence $H\butterfly 
A^{n, H, C_1, C_2}_{e, e_2} - S$.  However, for some 
choices of $H, C_1, C_2$, $A^{n, H}_{e, e_2 C_1, C_2}$ may contain many 
disjoint models of $H$, 
for instance if $H$ only consists of two cycles $C_1, C_2$  connected by an 
edge.

\begin{lemma}\label{lem:biglemma}
Let $H$ be a vertex cyclic digraph and let $\mathcal{C}$ be
the set of its components. If $H$ satisfies any of the following
conditions, then it does not have the Erd\H os-P\'osa property neither
for butterfly nor for topological minors. 
\begin{enumerate}
\item There are $C,C_1,C_2$, all distinct, and edges $e_1,e_2$
	such that $e_l$ links $C$ to $C_l$, for $l=1,2$, or $e_l$
	links $C_l$ to $C$, for $l=1,2$. \label{lem:3components}
\item $H$ contains two components $C$ 
  and $C'$ with two distinct edges linking $C$ to $C'$. \label{lem:2edges}
\item $H$ 
  contains two 
  distinct 
  components $C, C'$ such
  that $C$ is not embeddable into $C'$ (with respect to 
  topological minor). \label{lem:vc-no-nonembed-comp}
   \item  $H$ contains a 
	strong component $C$ such that for all $k\geq 1$, 
	$C\not\butterfly G_k$ (resp. $C\not\topo W_k$). \label{lem:noembed}

\item $H$ has three strongly connected
components $C,C',C''$ such that there is a path from $C$ to $C'$
and a path from $C'$ to $C''$ and either $|C'|<|C''|$ or
$|C'|<|C|$. \label{lem:3cycles}
\end{enumerate}
\end{lemma}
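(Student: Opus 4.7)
The plan is to handle each of the five cases separately by exhibiting, for every $k\in\N$, a counter-example digraph $G_k$ that (i) contains $H$ as a minor of the appropriate kind, (ii) still contains $H$ after the deletion of any fewer than $k$ vertices, yet (iii) admits only a bounded number of pairwise vertex-disjoint models of $H$ (where the bound depends on $H$ but not on $k$). Each $G_k$ will be built from the acyclic grid attachments of Definition~\ref{def:grid-attachment}, possibly combined with the cylindrical grid (or wall) counter-example of Definition~\ref{def:counter}. Choosing $k$ larger than the bound asserted in~(iii) then contradicts any hypothesized witness function $f$ for the Erd\H{o}s--P\'osa property.

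Part~4 is immediate: if some strong component $C$ of $H$ satisfies $C\not\butterfly G_n$ for every $n$, then $C$ fails the Erd\H{o}s--P\'osa property by Theorem~\ref{thm:EPCounter}, and since $C$ is a strongly connected subgraph (hence a butterfly minor) of $H$, Corollary~\ref{col:EPSCC} transfers the failure to $H$; the topological case is analogous via walls. For parts~1 and~2 I would take $G_k := A^{n,H,C,C'}_{e,e'}$ with $n>f(k)$: property~(i) holds because each copy $H_i$ together with the grid row $P_i$ and column $Q_i$ realizes $H$, and property~(ii) holds because any set of fewer than $n$ vertices misses some column. Property~(iii) is where the hypothesis enters: in part~1 the third component $C_2$ (with its own edge incident to $C$) forces every model of $H$ to carry a second routing through the acyclic grid, and acyclicity of the grid bounds the number of mutually disjoint such routings; in part~2 the two parallel edges between $C$ and $C'$ play the same role. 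For part~3 I would use an acyclic grid attachment that multiplies only $C'$ while leaving a single ``anchor'' copy of $C$; since $C$ cannot be topologically recovered from $C'$-material, every model of $H$ must hit this anchor, giving a constant bound. For part~5 I would nest two acyclic attachments so that the smaller intermediate component $C'$ acts as a bottleneck: each model of $H$ must realize $C'$ via a routing that competes for the few ``thin'' copies available, while the condition $|C'|<|C''|$ (or $|C'|<|C|$) rules out substituting $C'$ by a piece of the grid.

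The hard part will be verifying property~(iii) in each case. The common engine is the observation that a pairwise vertex-disjoint system of directed paths in an acyclic digraph is governed by the topological order of its endpoints, so two disjoint $H$-models competing for the same grid of attachments cannot both close all required cycles and connecting edges of $H$. Making this rigorous in each of parts~1, 2, 3 and~5 requires a case analysis of where each strong component of a hypothetical model lives (inside one copy $H_i$, inside several copies glued by grid paths, or, for butterfly minors, inside a tree-like contraction spanning several copies), together with a ``planarity-style'' routing argument to rule out the existence of a second disjoint routing. Once (iii) is established with a constant bound $k_0 = k_0(H)$, setting $k := k_0+1$ and $n>f(k)$ in each construction yields the desired contradiction, so $H$ cannot have the Erd\H{o}s--P\'osa property in any of the five cases.
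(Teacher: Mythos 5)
Your overall strategy (attachment constructions on acyclic grids, deletion-robustness by counting columns, and an acyclicity/routing argument to forbid two disjoint models) is the same engine the paper uses, but the proposal stops exactly where the work is: you explicitly defer the verification of property (iii), and it is precisely this case analysis --- tracking where the strongly connected core $\rho(X)$ of each component of a minimal tree-like model can live, after first using Items~\ref{lem:3components} and~\ref{lem:2edges} to reduce to the case that the block graph of $H$ is a simple path --- that constitutes the proof. Moreover, several of your concrete construction choices would not survive that verification. For Item~\ref{lem:3components}, attaching through $C$ and one of its out-neighbours, as you propose, does not in general bound the number of disjoint models: the paper itself notes, right after Definition~\ref{def:grid-attachment}, that $A^{n,H,C_1,C_2}_{e,e_2}$ can contain many disjoint models of $H$ (e.g.\ when $H$ is two cycles joined by an edge). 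The paper avoids this by attaching through a terminal component $T$ chosen with a minimal number of edges (and the component $S$ preceding it), which is what forces every model to occupy an entire grid column. For Item~\ref{lem:vc-no-nonembed-comp}, your construction with a \emph{single} anchor copy of $C$ destroys property (ii): since $C$ embeds neither into the $C'$-copies (by hypothesis) nor into the acyclic grid, deleting one vertex of the anchor already kills every model of $H$, so no contradiction with a witness function $f$ arises. The paper instead multiplies whole copies of $H$ and attaches them through $C'$, choosing $C'$ latest in the block graph.

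Item~4 as you argue it also has a gap: Corollary~\ref{col:EPSCC} (via Theorem~\ref{thm:ep-sc}) concerns strongly connected digraphs, so the failure of the Erd\H os--P\'osa property for the component $C$ does not transfer to the non-strongly-connected digraph $H$ by that corollary; a counterexample family for $C$ need not be one for $H$. One has to redo the attachment construction of Theorem~\ref{thm:EPCounter} with full copies of $H$ (the paper omits the details but points to that construction, and Lemma~\ref{lem:no-degree-4-vertex} shows the kind of cardinality and reachability bookkeeping needed to force a model's image of $C$ into a copy of $H$ rather than the grid or wall). Finally, for Item~\ref{lem:3cycles} your ``nested attachments'' sketch is too vague to check; the paper's construction deletes both edges $e_1,e_2$ in each copy and wires them across $A_{2k}$ so that the two forced subpaths of any two models must cross. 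In short, the plan points in the right direction, but the missing routing arguments and the flawed choices in Items~1, 3 and 4 mean it does not yet constitute a proof.
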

\begin{proof}
We prove the cases for butterfly minors, the cases of topological minors are analogous.
Towards a contradiction, suppose that $H$ has the
Erd\H os-P\'osa property witnessed by a function $f\st
\N\rightarrow\N$.
For each item we construct a counterexample $A$ such that after
deleting $f(2)$ vertices from $A$, it still has a model
of $H$. 
\\\textbf{Proof of Item~\ref{lem:3components}: }
We first consider the case where there is a component 
	$C$ of $H$ and two other components $C_1, C_2$ with an edge from $C$ to 
	$C_1$ and 
	from $C$ to $C_2$. 	
	A terminal component of $H$ is a strong component without any outgoing 
	edges. Let $T$ be a terminal component with a minimal number of edges and 
	among these with a minimal number of vertices. 
	Let $C$ be a component of $H$ with edges to two distinct components and such 
	that $T$ is reachable from $C$ by a path $P$ ($C$ exists as the block graph 
	of $H$ is not a path, by assumption, and $G$ is weakly connected) and let 
	$S$ be the unique 
	component of $H$ such that $P$ contains an edge $e = (s,t)\in E(H)$ with 
	tail $s\in V(S)$ and head $t\in V(T)$. Let $e' = (w, t) 
	\in E(T)$ be an edge with head $t$, which exists as $T$ is not a trivial 
	component. 
	Let $k>f(2)$ and let $A := A^{k, H, S, T}_{e, e'}$ be the left acyclic 
	attachment as defined in Definition~\ref{def:grid-attachment}. See 
	Figure~\ref{fig:3components} for an illustration. 
	
	\begin{wrapfigure}[15]{r}{7cm}
		\begin{center}
			\includegraphics[height=3.7cm]{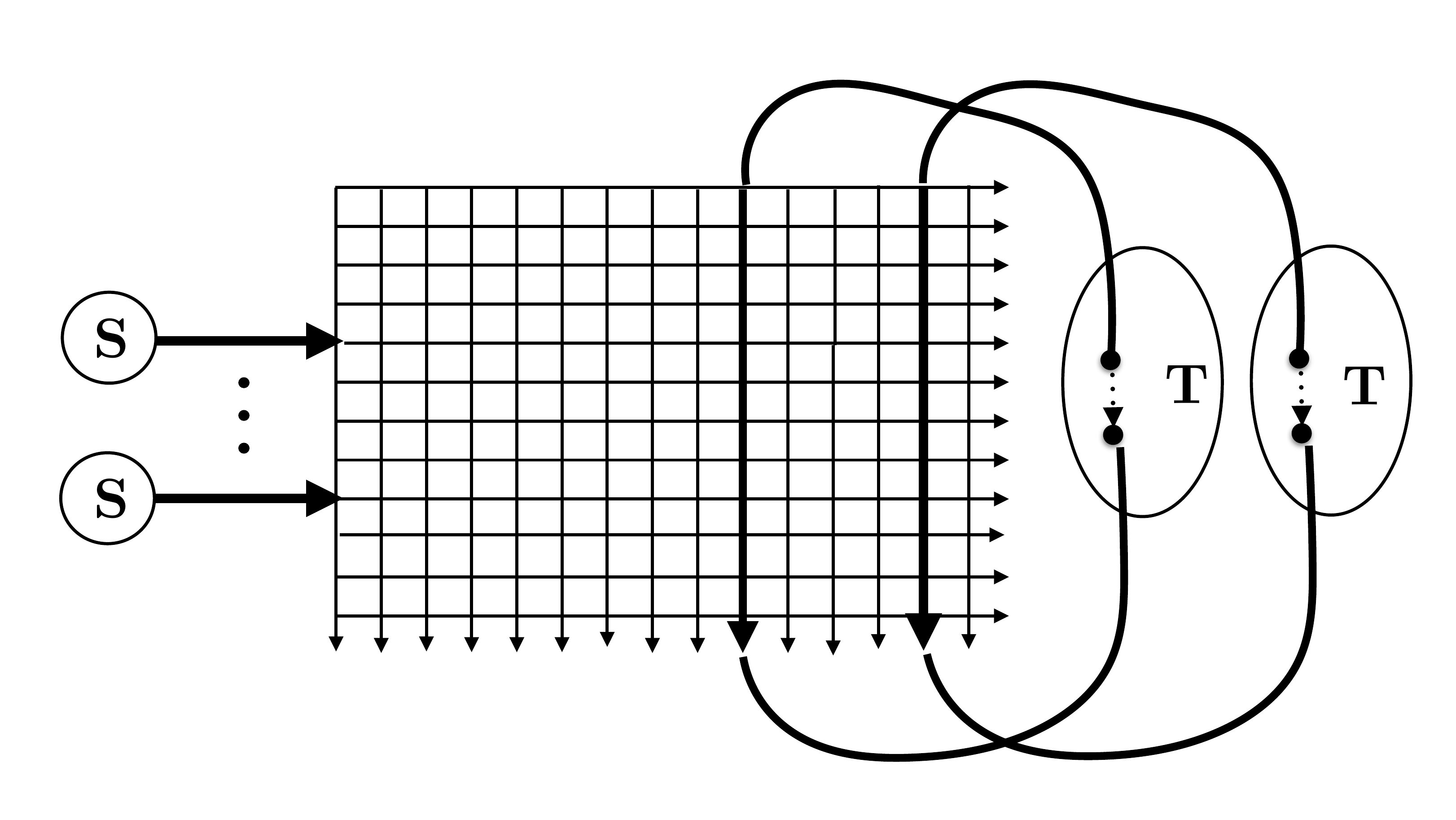}
		\end{center}
		\caption{Illustration of the construction in the proof of 
		Item~\ref{lem:3components} in Lemma~\ref{lem:biglemma}.}\label{fig:3components}
	\end{wrapfigure}
	
	Let $H_1, 
	\dots, H_k$ be the copies of $H$ in $A$. For each vertex $v\in V(H)$ and 
	edge $e\in E(H)$ we write $v^i, e^i$ for the corresponding vertex or edge in 
	$H_i$. Note that $e^i, e'^i$ do not exist as they were deleted in the 
	construction of $A$. We denote by $T^i$ the copy of $T$ in $H_i$ with the 
	edge $e'$ removed and by $\hat{T^i}$ the copy of $T$ in $H_i$ plus the path 
	$Q_i$ of the grid by which $e'$ was replaced.
	
	As noted above, after deleting a set $D$ of $f(2)$ vertices from $A$, 
	$H\butterfly A-D$. Hence it suffices to show that there are no two distinct 
	butterfly models of $H$ in $A$.
	
	Let $\mu$ be a minimal tree-like butterfly model of $H$ in $A$, i.e.~a 
	tree-like model such that no proper subgraph of $\mu(H)$ contains a model of 
	$H$. As $\mu$ is tree-like, every $\mu(v)$ is the union of 
	an in-branching 
	and an out-branching which only share their root $r_v$. 
	As $\mu$ is minimal, if $X\subseteq H$ is strongly 
	connected, then  $\mu(X)$ 
	contains a maximal strongly connected subgraph $\rho(X)$ 
	which contains 
	every root $r_v$ for $v\in V(X)$. It follows that for no 
	component $X$ of 
	$H$ we 
	have 
	$\mu(X)\subseteq T^i$, as $T$ was the component with the minimal number of 
	edges and $T^i$ has one edge less. This implies that if for 
	some vertex $v\in V(H)$, $\mu(v)\cap T^i \not=\emptyset$, for some $1\leq i 
	\leq k$, then 
	$\hat{T^i} \subseteq \mu(X_v)$ where $X_v$ is the component 
	of $H$ 
	containing 
	$v$.
	
	Now consider $\mu(C)$. As $\mu(C)$ is strongly connected, if $\mu(C)$ 
	contains a vertex of the acyclic grid $A_k$ contained in 
	$A$, then $\rho(C) 
	= \hat{T^i}$ for 
	some $1\leq i \leq k$. Let $C_1, C_2$ be two components of $H$ such that $H$ 
	contains an edge $e_1$ from $C$ to $C_1$ and an edge $e_2$ from $C$ to $C_2$.
	But as $T$ was chosen minimal, $\rho(C)\cup 
	\rho(C_l)\not\subseteq 
	\hat{T^i}$, 
	for $l\in \{1,2\}$. Hence, $\rho(C_1) \subseteq 
	\hat{T^{j_1}}$ and 
	$\rho(C_2)\subseteq \hat{T^{j_2}}$ for some $j_1\not=j_2$ 
	different from 
	$i$, 
	as otherwise there was no path from $\mu(C)$ to $\mu(C_1)$ and $\mu(C_2)$ in 
	$A$. But as each of $\mu(C), \mu(C_1), \mu(C_2)$ contains an entire column 
	of the acyclic grid $A_k$ in $A$, this is impossible. 
	
	It follows that $\rho(C)$ must be contained in some 
	$H_i\setminus 
	\hat{T_i}$. 
	As we cannot have $\mu(H)\subseteq H_i\setminus \hat{T_i}$, it 
	follows that for some $j$, $\hat{T_j}\subseteq \mu(H)$ and therefore 
	$\mu(H)$ also includes the edge from $H_i$ to the vertex $x_{i,1}$ of the 
	grid and a path $L_i$ from $x_{i,1}$ to $\hat{T_j}$. 
	
	Now suppose $\mu'$ is a second model of $H$ in $A$, which again we assume to 
	be minimal and tree-like. By the same argument, $\mu'(H)$ must contain an 
	entire column $Q_{j'}$ and path $L_{i'}$ from some vertex $x_{i',1}$ to 
	$Q_{j'}$. But then, if $j'<j$, then $Q_{j'}$ has a non-empty intersection 
	with $i'$ and if $j<j'$ then $Q_j$ has a non-empty intersection with 
	$L_{i'}$. Hence, $\mu$ and $\mu'$ are not disjoint.

	This concludes the case where $H$ contains a component $C$ with two outgoing 
	edges to two distinct other components. The case where there is a component 
	$C$ with incoming edges from two other distinct components is analogous, 
	using the right acyclic attachment instead of the left acyclic
        attachment.
\\\textbf{Proof of Item~\ref{lem:2edges}: } 
Let $C$ and $C'$ be as in the statement of the Item~\ref{lem:2edges} chosen so that from $C'$ no 
	component $X$ of $H$ is reachable such that $X$ has two edges to another 
	component $Y$. Let $e_1 = (s_1, t_1)$ and 
	$e_2 = (s_2, t_2)$ be two distinct edges with tail in $C$ and head in $C'$.

	By Item~\ref{lem:3components} we can assume that the block graph of 
	$H$ is a directed path with parallel edges between components.

	Let $k > f(2)$ and let $A_{2k} = \big( (P_1, \dots, P_{2k}), (Q_1, \dots, 
	Q_{2k})\big)$ be the acyclic grid of order $2k$. Again, $V(P_i)\cap V(Q_j) = 
	\{x_{i,j}\}$.
	Let $G_k$ be the graph obtained from $A_{2k}$ by adding $k$ disjoint copies 
	$H_1, \dots, H_k$ of $H$. For $v\in V(H)$ or $e\in E(H)$ let $v^i$ and $e^i$ 
	be the vertex or edge corresponding to $v$ and $e$ in the copy $H_i$, 
	respectively. For all 
	$1\leq i \leq k$ we delete the edges $e_1^i$ and $e_2^i$ and add edges 
	$(s_1^i, x_{2i-1,1})$, $(s_2^i, x_{2i, 1})$ and $(x_{2k, 2i-1}, t_1^i)$, 
	$(x_{2i, 2k}, t_2^i)$. See Figure~\ref{fig:2edges} for an
        illustration.
	
	\begin{figure}
		\begin{center}
			\includegraphics[height=4cm]{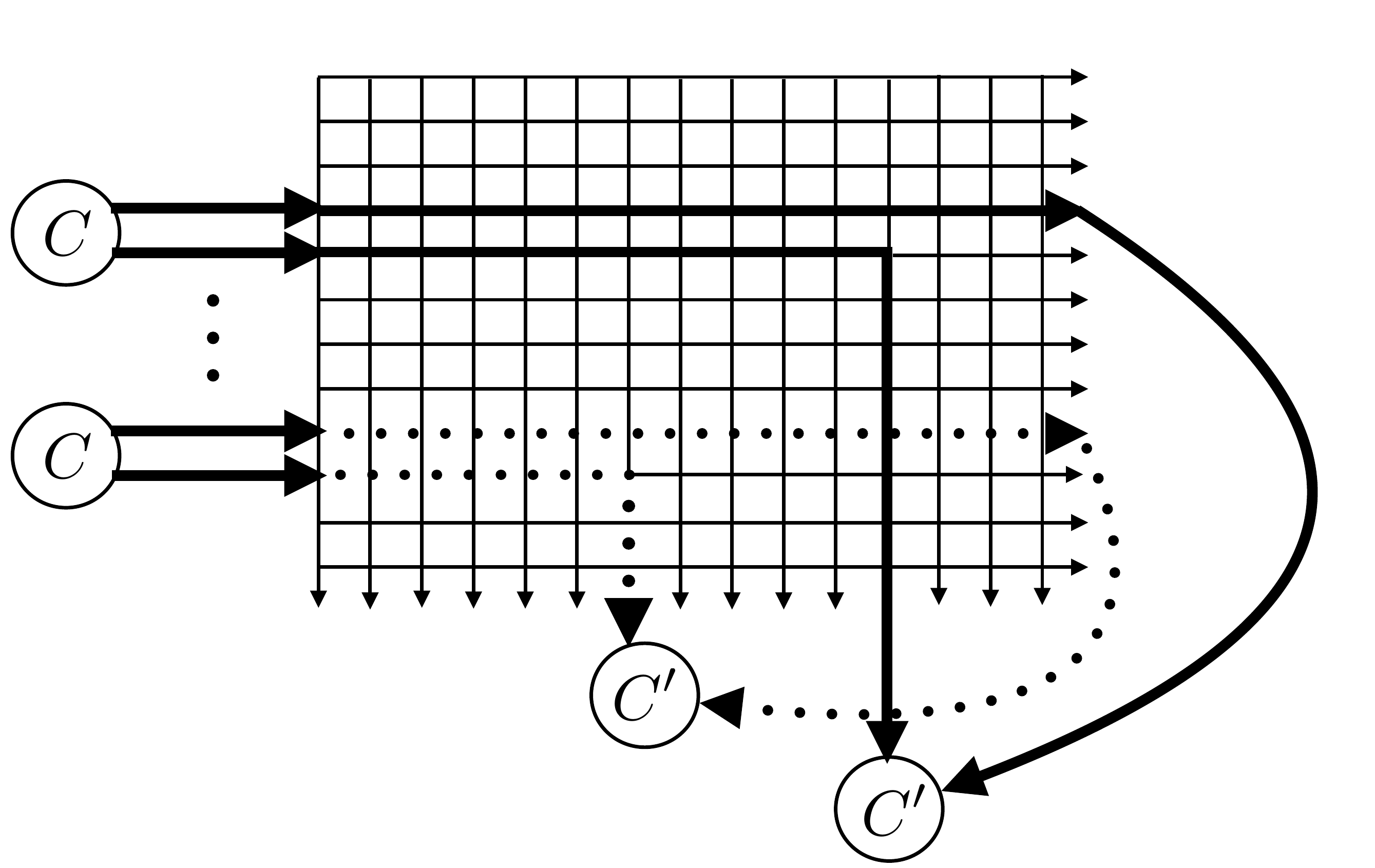}
		\end{center}
		\caption{Illustration of the construction in the proof of 
		Item~\ref{lem:2edges} in Lemma~\ref{lem:biglemma}.}\label{fig:2edges}
	\end{figure} 
	
	As $A_{2k}$ contains two disjoint paths $P^i_1$ linking $x_{2i-1,1}$ to 
	$x_{2k, 
	2i-1}$ and $P^i_2$ linking $x_{2i, 1}$ to $x_{2i, 2k}$, $G_k$ contains $H$ 
	as a butterfly minor. Furthermore, it is easily seen that $H\butterfly G-S$ 
	for every set $S\subseteq G_k$ of order $<k$.
	
	Hence, we only need to show that $G_k$ does not contain two distinct 
	butterfly models of $H$. 
Let $\mu$ be a minimal tree-like butterfly model of $H$ in $A$, i.e.~a 
	tree-like model such that no proper subgraph of $\mu(H)$ contains a model of 
	$H$. As $\mu$ is tree-like, every $\mu(v)$ is the union of 
	an in-branching 
	and an out-branching which only share their root $r_v$. 
	As $\mu$ is minimal, if $X\subseteq H$ is strongly 
	connected, then  $\mu(X)$ 
	contains a maximal strongly connected subgraph $\rho(X)$ 
	which contains 
	every root $r_v$ for $v\in V(X)$. Let $X_1, \dots, 
	X_l$ be the components of $H$ reachable from $C'$ 
	in topological order. By the choice of $C$ and $C'$, between $C'$ and $X_1$ 
	and between $X_i$ and $X_{i+1}$, for all $i<l$, there is exactly one edge. 

	Now, $\mu(C)$ contains a maximal 
	strongly connected subgraph $\rho(C)$ that contains every root $r_v$ for 
	$v\in V(C)$. As $C$ has two outgoing edges, it follows that for all $1\leq i 
	\leq k$, $\rho(C)\cap V(C'^i\cup X_1^i \cup 
	\dots \cup X_l^i)=\emptyset$.  Clearly, $\rho(C)\cap A_{2k} = \emptyset$. 
	Hence, there is an 
	$1\leq i \leq k$ such that $\rho(C)$ is entirely contained in $H_i - 
	V(C'^i\cup X_1^i \cup 
	\dots \cup X_l^i)$. But then, $\mu(H)$ must contain the edges $(s_1^i, 
	x_{2i-1,1})$, $(s_2^i, x_{2i, 1})$ and $(x_{2k, 2i-1}, t_1^i)$, 
	$(x_{2i, 2k}, t_2^i)$ and two disjoint paths $P_1$ linking $x_{2i-1,1}$ to 
	$x_{2k, 
		2i-1}$ and $P_2$ linking $x_{2i, 1}$ to $x_{2i, 2k}$. 
	
	Now let $\mu'$ be another minimal tree-like model of $H$ in $G_k$. By the 
	same argument there must be an index $j$ such that  $\mu'(H)$ contains the 
	edges $(s_1^j, 
	x_{2j-1,1})$, $(s_2^j, x_{2j, 1})$ and $(x_{2k, 2j-1}, t_1^j)$, 
	$(x_{2j, 2k}, t_2^j)$ and two disjoint paths $P'_1$ linking $x_{2j-1,1}$ to 
	$x_{2k, 
		2j-1}$ and $P'_2$ linking $x_{2j, 1}$ to $x_{2j, 2k}$. 
	But clearly, $(P_1\cup P_2) \cap (P'_1\cup P'_2) \not=\emptyset$ and hence 
	the models are not disjoint.
\\\textbf{Proof of Item~\ref{lem:vc-no-nonembed-comp}: } 
Let $H$ and $C, C'$ be as in the statement of the 
	Item~\ref{lem:vc-no-nonembed-comp}.
	By Item~\ref{lem:3components} and~\ref{lem:2edges}, we can 
	assume that the 
	block graph of $H$ is a simple directed path without 
	parallel edges.
	
	Choose $C$ and $C'$ such that $C$ does not embed into $C'$ 
	with respect to 
	butterfly embeddings or vice versa and among all such pairs 
	choose $C'$ so 
	that it is the latest such component in the block graph of 
	$H$, i.e.~no 
	component $C''$ which is part of such a pair is reachable 
	from $C'$. 
	
	We assume that $C$ has no butterfly embedding into $C'$ as 
	defined above. 
	The other case is analogous using right acyclic attachments 
	instead.
	
	Let $S\not=C'$ be the component of $H$ such that $H$ 
	contains an edge $e = 
	(s, 
	t)$ with $s\in V(S)$ and $t\in V(C')$. Let $e' = (w, t)$ be 
	any edge in $C'$ 
	with head $t$, which must exist as $C'$ is not trivial.
	Now let $k>f(2)$ and let $A = A^{k, H, S, C'}_{e, e'}$ be 
	the left acyclic 
	attachment as defined in  
	Definition~\ref{def:grid-attachment}. As before, 
	$H\butterfly A-D$ for any 
	set $D$ of $<k$ vertices. We will show that $H$ has no two 
	disjoint 
	butterfly models in $A$.
	
	Let $\mu$ be a minimal tree-like butterfly model of $H$ in 
	$A$. Let $H_1, 
	\dots, H_k$ be the disjoint copies of $H$ in $A$ and as 
	before we write 
	$v^i, e^i$ for the copy of a vertex $v\in V(H)$ or edge 
	$e\in E(H)$ in the 
	$i$-th copy. Furthermore, as in the previous proofs, as 
	$\mu$ is tree-like 
	and minimal, every $\mu(v)$ is the union of two branchings 
	sharing only 
	their root $r_v$ and for every strongly connected subgraph 
	$X\subseteq H$ 
	the model $\mu(X)$ contains a maximal strongly connected 
	subgraph $\rho(X)$ 
	which contains all roots $r_v$ of $v\in X$
        Let $C'^i$ be 
	the copy of $C'$ 
	in $H_i$ with the edge $e'^i$ removed and let $\hat C'^i$ be 
	the copy of 
	$C'$ in $H_i$ where the edge $e'$ is replaced by the column 
	$Q_i$ of the 
	grid $A_k = \big((P_1, \dots, P_k, Q_1, \dots, Q_k)\big)$ 
	used to construct 
	$A$. 
	As $C$ has no butterfly 
	embedding 
	in $C'$, $\rho(C)$ cannot be contained in $\hat C'^i$ for 
	any $1\leq i \leq 
	k$ and therefore $\rho(C) \subseteq H_i - V(C'^i \cup X^i_1 
	\cup \dots \cup 
	X^i_l)$, for some $1\leq i \leq k$, where $X_1, \dots, X_l$ 
	are the 
	components of $H$ 
	reachable from $C'$.
	On the other hand, $\mu(H)\not\subseteq H_j - V(C'^j \cup 
	X^j_1 \cup \dots 
	\cup 
	X^j_l)$, for any $1\leq j \leq k$. Hence, $\mu(C)$ must 
	contain $\hat C'^j$ 
	for some $j$ and a path $L_i$ from $x_{i, 1}$ to a vertex on 
	$Q_j$, where 
	$x_{i,j}$ is the unique vertex in $V(P_i)\cap V(Q_j)$, for 
	all $1\leq 
	i,j\leq k$.
	
	Now let $\mu'$ be another butterfly model of $H$ in $A$. By 
	the same 
	argument, $\mu'(H)$ must contain a column $Q_{j'}$ and a 
	path $L_{i'}$ from 
	$x_{i', 1}$ to a vertex on $Q_{j'}$. But then $\mu$ and 
	$\mu'$ are not 
	distinct.
\\\textbf{Proof of Item~\ref{lem:noembed}: } A construction very similar to the construction in the 
proof of 
Theorem~\ref{thm:EPCounter} shows that this case holds and we omit the
details.
\\\textbf{Proof of Item~\ref{lem:3cycles}: }
Let $C_1,C_2,C_3$
be a triple of strongly connected components as stated in Item~\ref{lem:3cycles}.
We prove Item~\ref{lem:3cycles} in the case that $|C_2|<|C_3|$ the other
case is analogous. By
items~\ref{lem:3components}, ~\ref{lem:2edges}, we can 
	assume that the 
	block graph of $H$ is a simple directed path without 
	parallel edges.

We can assume that the distance between
$C_1,C_2$ and $C_3$ in the block graph is minimized among all triples
of components satisfying conditions of Item~\ref{lem:3cycles}.

As the distance is minimized one can easily show by a simple case
distinction that $C_1,C_2$ and $C_3$ are three consecutive vertices in
the block graph of $G$.

Let $k>f(2)$. Construct a graph $A^H_k$ as follows. Let $e_1=(u,v)$ be
the edge from $C_1$ to $C_2$ and $e_2=(x,y)$ the edge from $C_2$ to $C_3$. 
Let $H_i=(V(H),E(H)\setminus \{e_1,e_2\})$ for all $i\in [k]$.
	\begin{figure}
		\begin{center}
			\includegraphics[height=4cm]{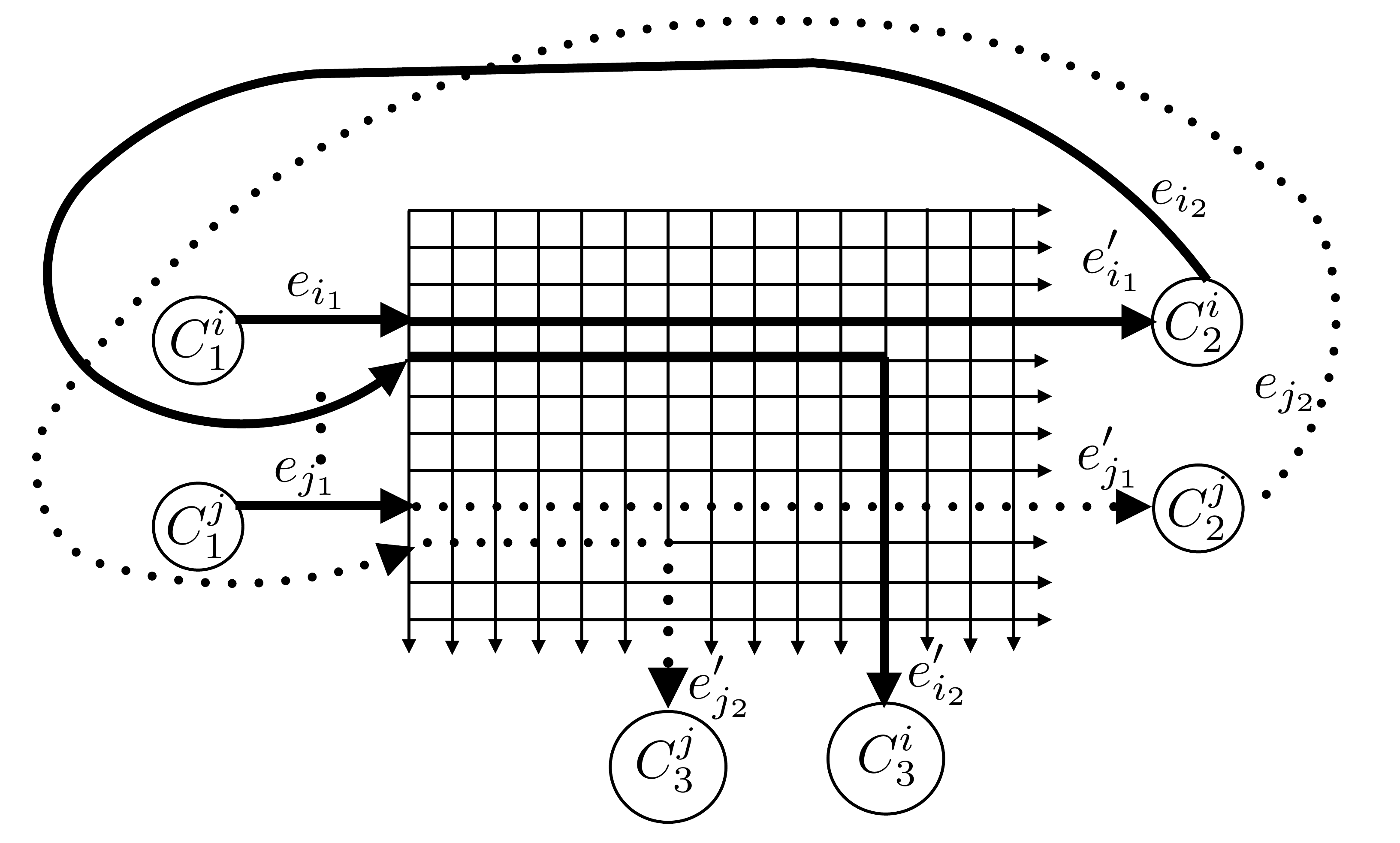}
		\end{center}
		\caption{Illustration of the construction in the proof of 
		Item~\ref{lem:3cycles} in Lemma~\ref{lem:biglemma}.}\label{fig:3cycles}
	\end{figure} 
For $i\in[k]$ attach $H_i$ to $A_{2k}$ (recall that $A_{2k}$ is an
acyclic grid of order $2k$) by adding edges
$e_{i_1}=(u,v_{2i-1,1})$, $e_{i_2}=(v_{2k,2i-1},y)$,
$e'_{i_1}=(v_{2i-1,2k},v)$, $e'_{i_2}=(x,v_{2i,1})$ to obtain a graph $A$. See
Figure~\ref{fig:3cycles} for illustrations. It is easy to see
that for any set of vertices $S\subseteq V(A)$ of size at most $f(2)$
we have $H\butterfly A - S$. In the rest we show that there are no two
distinct models of $H$ in $A$ as a butterfly minor.

Let $H'=\mu(H)$ be a minimal tree-like butterfly model of $H$ in $A$, i.e.~a 
	tree-like model such that no proper subgraph of $\mu(H)$ contains a model of 
	$H$. As $\mu$ is tree-like, every $\mu(v)$ is the union of 
	an in-branching 
	and an out-branching which only share their root $r_v$ and for every strongly connected subgraph 
	$X\subseteq H$ 
	the model $\mu(X)$ contains a maximal strongly connected 
	subgraph $\rho(X)$ 
	which contains all roots $r_v$ of $v\in X$.

Let $C_1^i,C_2^i,C_3^i$ be 
	the copies of $C_1,C_2,C_3$ 
	in $H_i$ 
	used to construct 
	$A$. 

	As $|C_2| < |C_3|$, $\rho(C_3)$ cannot be contained in $C_2^i$ for 
	any $1\leq i \leq 
	k$. Also for all $j\in [k]$, $\rho(C_3)$ cannot appear in any $X^j_1,\ldots,X^j_l$
        where $X^j_1, \dots, X^j_l$ 
	are the components of $H^j$ 
	such that $C^j_2$ is reachable from them (w.r.t. reachability
        of $C_2$ in $H$). 
        Therefore $\rho(C_3) \subseteq H_i - V(X^i_1 \cup \dots \cup
        X^i_l)$, for some $1\leq i \leq k$. On the other hand, $\rho(C_2) \subseteq
        \bigcup_{j\in [k]} H_j - V(Y^i_1 
	\cup \dots \cup Y^j_{l'}$, where $Y^j_1,\ldots,Y^j_{l'}$ are
        strong components
        of $H^j$ which are reachable from $C_2$. Note that the
        difference between the two cases is that model of $C_2$ can be obtained
        by going back and forth through arbitrarily many $C^j_2$'s as
        there are directed paths which connects them together.

Analogously we have $\rho(C_1) \subseteq
        \bigcup_{j\in [k]} H_j - V(C_j) \cup Y^i_1 
	\cup \dots \cup 
	Y^j_{l'}$.
There is a directed path $P_1$ in $H'$ which connects $\rho(C_1)$ to $\rho(C_2)$ and
there is a directed path $P_2$ in $H'$ which connects $\rho(C_2)$ to
$\rho(C_3)$. Considering the structures of
$\rho(C_1),\rho(C_2),\rho(C_3)$ as explained, $P_1,P_2$ will go through the acyclic grid in
$A$ (maybe they go through some of $C_2^i$'s as well) and they will cut
the acyclic grid into different regions. The path $P_2$ has a
subpath $P^{H'}_2$ which starts at some vertex $v_{i,1}$ and ends at
a vertex $v_{2k,j_2}$ for $2\le i \le 2k$ and $j_2\in
[2k]$. Also $P_1$ has a subpath $P^{H'}_1$ which starts at the
vertex $v_{i-1,1}$ and ends at the vertex $v_{j_1,2k}$ for some
$j_1\in [2k]$. Let $H''$ be another model of $H$ in $A$. 
Similar to $H'$ we have subpaths $P^{H''}_1,P^{H''}_2$. We have
$P^{H'}_1\cap P^{H''}_2 \neq \emptyset$ or $P^{H'}_2\cap P^{H''}_1 \neq \emptyset$.
Hence every two models of $H$ in $A$ will intersect.
\end{proof}

Proof of Theorem~\ref{thm:ep-vc} follows from Lemma~\ref{lem:biglemma}
and~\ref{lem:no-degree-4-vertex}.

% \medskip

% \begin{proof}[Proof of Theorem~\ref{thm:ep-vc}]
% 	Let $H$ be a weakly connected vertex-cyclic digraph. By the Lemma~\ref{lem:biglemma}, 
% 	every strong component of $H$ must be a topological or 
% 	butterfly minor of a 
% 	fixed wall or grid. If $H$ has the Erd\H 
% 	os-P\'osa property for butterfly or topological minors, then 
% 	by 
% 	Lemma~\ref{lem:biglemma}, $H$ must be 
% 	 for 
% 	butterfly or topological minors, respectively. This proves 
% 	Part $(1)$ of the 
% 	theorem. Furthermore, for topological embeddings, by 
% 	Lemma~\ref{lem:no-degree-4-vertex}, $H$ must have maximum 
% 	degree $3$ in the 
% 	case of topological minors, which shows Part $(2)$. 
% \end{proof}

\subsection{Positive Instance for Erd\H os-P\'osa Property in Vertex
  Cyclic Graphs}
We close the section by giving a positive result, i.e.~we provide a 
vertex-cyclic ultra-homogeneous digraph that has the Erd\H 
os-P\'osa 
property, but it is not strongly connected. 

\begin{theorem}\label{thm:two-random-cycles}
   Let $H$ be a digraph consisting of two disjoint cycles joined by a
  single edge. There is a function $h:\N\rightarrow\N$ such that
  for every integer $k$ and every graph $G$ either there are $k$
  distinct topological models of $H$ in $G$ or there is a set
  $S\subseteq V(G)$ such that $|S|\le h(|H|+k)$ and $H\not\topo
  G-S$. 

  Furthermore, for every $H$ and $k$ there is a polynomial-time
  algorithm which either finds $k$ distinct
  topological models of $H$ in $G$ or finds a set $S\subseteq G$ of
  vertices of size at
  most $h(|H|+k)$ which hits every topological model of~$H$ in~$G$.
\end{theorem}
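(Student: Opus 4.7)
The plan is the standard win/win dichotomy on directed tree-width used in Theorem~\ref{thm:ep-sc}. Let $c = c(H)$ be the smallest integer such that $H \topo W_c$; since $H$ consists of only two cycles joined by one edge, a small constant suffices (two concentric rings together with one alternating path of the wall realise $H$ as a topological minor). Let $f$ be the function from Theorem~\ref{thm:grid} and set $w(k) := f(ck)$. If $\dtw(G) \geq w(k)$, extract a topological model of $W_{ck}$ in $G$ and cut it into $k$ vertex-disjoint ``slabs'' of consecutive rings, each of which contains a copy of $H$. This produces the required $k$ disjoint models in polynomial time by the algorithmic form of Theorem~\ref{thm:grid}.

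Otherwise $\dtw(G) < w(k)$. Compute a directed tree-decomposition $\TTT = (T,\beta,\gamma)$ of $G$ of width $< w(k)$. The structural observation driving the argument is that any topological model $\mu$ of $H$ decomposes into two vertex-disjoint directed cycles $D_1, D_2$ and a directed path $P$ from $D_1$ to $D_2$ internally disjoint from $D_1 \cup D_2$. Each cycle $D_j$ is strongly connected, so by Lemma~\ref{lem:scc-dtw} it is confined to a connected subtree of $T$ whose nodes all have $\Gamma(t) \cap V(D_j) \neq \emptyset$. In particular each cycle has a well-defined ``root node'' of $T$ at which it is first completely covered. The path $P$, on the other hand, may traverse arbitrarily many bags of $\TTT$, and this is the point at which the argument deviates from the strongly connected case.

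I then run a dynamic program on $T$ bottom up. At each node $t$, given $G_t := G[\beta(T_t)]$, the DP maintains: (i) a number $k_t < k$ of pairwise disjoint complete topological $H$-models found inside $G_t$; (ii) a hitting set $S_t$ of size bounded by a function of $k$ and $w(k)$ for all $H$-models contained entirely in $G_t$; (iii) for each of a bounded number of ``interface patterns'' on $\Gamma(t)$, partial configurations of three possible shapes: a single cycle in $G_t$ together with an internally disjoint path inside $G_t$ from this cycle to a prescribed vertex of $\Gamma(t)$; the symmetric shape in which the path enters $G_t$ from $\Gamma(t)$ and ends at the cycle; and a pure path fragment with both endpoints in $\Gamma(t)$ that may eventually be used to connect two externally rooted cycles. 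Since $|\Gamma(t)| \leq w(k)+1$, the number of interface patterns is bounded by a function of $k$, so the state space is bounded. Testing whether a requested set of internally disjoint directed paths exists inside $G_t$ for a given interface pattern is done via Theorem~\ref{thm:dtw:main-algo}. At an internal node the DP combines the children's records by closing matching cycle-plus-path fragments into full $H$-models or forwarding remaining fragments upward through the guard $\gamma(e)$ on the edge $e$ from $t$ to its parent. At the root we either have $k$ disjoint models or the union of the accumulated $S_t$ along one root-to-leaf branch yields a hitting set of size at most $h(|H|+k)$.

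The main obstacle is the bookkeeping for the connecting path $P$: because $H$ is not strongly connected, $P$ can span arbitrarily many bags, and naively the DP would have to remember unbounded information about its interior. This is where the tree-width bound is essential. Each time $P$ crosses from one child subtree to another or enters the guard of an edge of $T$, it passes through a vertex of the corresponding $\gamma(e)$, and $|\gamma(e)| \leq w(k)+1$. Consequently the number of ``crossing signatures'' along $P$ that need to be tracked at any fixed node $t$ is bounded by a function of $w(k)$, i.e.\ of $k$. Together with Lemma~\ref{lem:scc-dtw} (which localises the two cycles of $H$ into single subtrees of $T$), this bound is what allows the DP to terminate with either $k$ disjoint models or a hitting set of size at most $h(|H|+k)$, and to run in polynomial time for every fixed $H$ and $k$.
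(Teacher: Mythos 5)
Your high-tree-width half matches the paper (Theorem~\ref{thm:grid} plus cutting the wall into disjoint pieces each containing $H$), but the bounded directed tree-width half --- which is the actual content of the theorem, the paper's Lemma~\ref{lem:main} --- is not established by your dynamic program. Two concrete problems. First, your boundedness claim for the connecting path rests on the assertion that whenever $P$ moves between child subtrees or leaves $\beta(T_t)$ it must pass through the corresponding guard $\gamma(e)$. That is false for directed tree decompositions: normality only constrains walks that \emph{start and end} in $\beta(T_t)$, so a path may exit one subtree and enter another without ever touching a guard (this is exactly why the argument of Lemma~\ref{lem:bd-dtw} breaks for non-strongly-connected $H$, as the paper notes). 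Lemma~\ref{lem:scc-dtw} localises the two cycles, but nothing localises $P$, and your ``crossing signature'' bound has no valid justification.

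Second, and more fundamentally, the deliverable is a duality statement: if there are no $k$ disjoint models, there must exist a hitting set of size bounded by a function of $k$ and $w$ \emph{only}, independent of $|G|$. Merely deciding whether $k$ disjoint models exist needs no DP at all (apply Theorem~\ref{thm:comp-dtw-top} to the disjoint union of $k$ copies of $H$); the hard part is producing the small hitting set in the negative case, and your proposal simply asserts it. Maintaining ``a hitting set $S_t$ for all models inside $G_t$'' does not compose at an internal node, because a model can place its two cycles in two different children and be hit by neither child's set, and the number of children is unbounded; and ``the union of the accumulated $S_t$ along one root-to-leaf branch'' is neither of bounded size (a branch can have $\Omega(|G|)$ nodes) nor a hitting set (it misses models living in other branches). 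The paper closes exactly this gap with machinery your sketch has no analogue of: special directed tree decompositions, the single-cycle Erd\H{o}s--P\'osa bound $f_1(2)$ inside strongly connected pieces, the notions of $l$-clusters, $l$-transit cycles and $l$-cluster bipartite graphs, and two applications of Menger's theorem in Lemma~\ref{lem:bipartition}, which is where a separator of size $O(k)$ (blown up by $f_1(2)$ or $l$ per vertex) is converted into the bounded hitting set. Without an argument of this kind, invoking the Erd\H{o}s--P\'osa conclusion inside your DP invariant is circular.
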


In the rest of this section whenever we refer to $H$, it means the $H$
as stated in Theorem~\ref{thm:two-random-cycles}. Before providing
a proof for the theorem we need some lemmas and
definitions. An \emph{$l$-cycle} is a cycle of length at least $l$. 
An \emph{$l$-cluster} in a graph $G$ is a maximal subgraph of $G$
consisting of $l$-cycles such that every two of them intersect each
other. For a set 
$\mathcal{C}$ of
$l$-clusters in $G$ we write $G[\mathcal{C}]$ to denote
the subgraph of $G$ induced by the set of vertices occurring in 
an $l$-cluster in $\mathcal{C}$.

For three disjoint $l$-cycles $C_1,C_2,C_3 \subsetneq G$, the cycle $C_1$
is an \emph{$l$-transit} cycle of $C_2, C_3$ if 
there is a path $P$ in $G$ which connects
$C_2$ to $C_3$ and $P\cap C_1 \neq
\emptyset$. A cycle $C$ is an \emph{$l$-transit cycle} if it is 
an $l$-transit cycle for some pair $C_2, C_3$ of disjoint $l$-cycles.
A set $\mathcal{C}$ of $l$-clusters in $G$ is \emph{bipartite} 
if all $C\in \mathcal{C}$ are pairwise vertex disjoint and there is no $l$-transit cycle 
in $G[\mathcal{C}]$. A graph $G$ is \emph{$l$-cluster bipartite graph} if all of
its $l$-clusters together form a bipartite set of $l$-clusters. 
By Theorem~\ref{thm:EPPositive}, a single $l$-cycle has the Erd\H
os-P\'osa property, witnessed by some
function $f_1\colon \N\rightarrow \N$. In
   particular $f_1(2)$ means that in a given graph~$Z$ either there
   are two disjoint cycles of length at least $l$ or there is a set $S_1$ of
   size at most $f_1(2)$ such that there is no cycle of length at least $l$ in 
   $Z-S_1$.

The following lemma is required for some algorithmic aspects of
Theorem~\ref{thm:two-random-cycles}.

\begin{lemma}\label{lem:lclusterfinder}
There is an algorithm which for a given $l$-cluster bipartite graph $G$ of directed tree-width at most
$w$, finds all of its $l$-clusters in time
and space $|G|^{O(w+l)}$. 
\end{lemma}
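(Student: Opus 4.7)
The plan is to reduce the task to repeated applications of Theorem~\ref{thm:dtw:main-algo} (linkage on digraphs of bounded directed tree-width), together with a simple post-processing step. First, I would compute (or assume given) a directed tree-decomposition of $G$ of width $\Oof(w)$, using the standard polynomial-time approximation algorithm for directed tree-width.

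The workhorse step is the following subroutine. For a vertex $v \in V(G)$ and an ordered tuple $(v_2,\ldots,v_l)$ of $l-1$ distinct vertices different from $v$, Theorem~\ref{thm:dtw:main-algo} decides in time $|G|^{\Oof(l+w)}$ whether there is a $\sigma$-linkage for $\sigma=\{(v,v_2),(v_2,v_3),\ldots,(v_{l-1},v_l),(v_l,v)\}$. The union of the paths in such a linkage is exactly an $l$-cycle (i.e.\ of length at least $l$) traversing $v,v_2,\ldots,v_l$ in this order. Enumerating the $|V(G)|^{l-1}$ possible tuples and invoking the linkage algorithm for each, we can therefore determine in total time $|G|^{\Oof(l+w)}$ the set
\[
  V_l \;:=\; \{\, v\in V(G) : v \text{ lies on some } l\text{-cycle}\,\},
\]
together with, for each $v\in V_l$, a concrete witnessing $l$-cycle $C_v\subseteq G$ through $v$. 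By the same guess-and-link scheme we can also decide, for every pair $u,v\in V_l$, whether $u$ and $v$ lie on a common $l$-cycle; this costs $|G|^{\Oof(l+w)}$ per pair and $|G|^{\Oof(l+w)}$ in total.

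Next, I would define the relation $u\sim v$ on $V_l$ to hold if $u$ and $v$ lie on a common $l$-cycle, and take its reflexive-transitive closure $\approx$. Computing $\approx$ and its equivalence classes $V^1,\ldots,V^m$ takes polynomial time. For each class $V^j$ we output the subgraph $D_j\subseteq G$ consisting of the union of all witnessing cycles $C_v$ with $v\in V^j$ (more precisely, of every $l$-cycle through a vertex of $V^j$, which can be accumulated by sweeping over the enumerations already performed). I claim that, under the $l$-cluster-bipartite assumption, $D_1,\ldots,D_m$ are exactly the $l$-clusters of $G$. The easy direction is that every $l$-cluster $\mathcal D$ is contained in a single class, since the $l$-cycles in $\mathcal D$ pairwise intersect and hence any two of their vertices are $\sim$-related.

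The main obstacle, and the step I would be most careful about, is the converse: showing that the bipartite hypothesis prevents $\approx$ from collapsing two distinct $l$-clusters into one class. Suppose, for contradiction, that $u\in\mathcal D_1$ and $v\in\mathcal D_2$ lie in the same $\approx$-class of two distinct $l$-clusters $\mathcal D_1\neq \mathcal D_2$. Then there is a chain of $l$-cycles $C^{(0)},C^{(1)},\ldots,C^{(r)}$ with $u\in C^{(0)}$, $v\in C^{(r)}$ and consecutive cycles sharing a vertex. Walking along this chain from $\mathcal D_1$ toward $\mathcal D_2$, and using the vertex-disjointness of clusters, there must be an index $i$ and an $l$-cycle $C^{(i)}$ lying outside $\mathcal D_1\cup\mathcal D_2$ such that $C^{(i)}$ meets a path from an $l$-cycle in $\mathcal D_1$ to an $l$-cycle in $\mathcal D_2$ (the path can be extracted by concatenating suitable sub-arcs of the chain cycles); this makes $C^{(i)}$ an $l$-transit cycle in $G[\mathcal C]$, contradicting bipartiteness. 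This argument, together with maximality in the definition of an $l$-cluster, gives $\mathcal D_j = D_j$. The overall time and space are dominated by the $|G|^{\Oof(l+w)}$ linkage calls, as required.
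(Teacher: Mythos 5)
Your proposal is correct and follows essentially the same route as the paper: both detect, for each vertex, an $l$-cycle through it by guessing an $l$-tuple and invoking Theorem~\ref{thm:dtw:main-algo}, and then group vertices into clusters via cycle intersections, using the $l$-cluster-bipartite hypothesis (pairwise disjoint clusters, no $l$-transit cycles) to justify that this grouping recovers exactly the $l$-clusters. Your extra pairwise ``common cycle'' test and transitive closure are a minor variation of the paper's simpler ``corresponding cycles intersect'' criterion, and your correctness argument is at the same level of rigor as the paper's.
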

\begin{proof}
We first observe that we can check whether a vertex $v\in V(G)$ lies
in an $l$-cycle or not, and if it is in some $l$-cycle find at least
one of those cycles, namely, its \emph{corresponding $l$-cycle}. To see this
for a vertex $v\in V(G)$ we guess $l-1$ distinct other vertices which together
with $v$ form a model of $l$-cycle. By
Theorem~\ref{thm:dtw:main-algo} we can check if they form an $l$-cycle
in time and space $|G|^{O(w+l)}$. So we can find corresponding
$l$-cycle of each vertex $v\in V(G)$. We put two vertices
$u,v\in V(G)$ in
one $l$-cluster if their corresponding $l$-cycles intersect. Recall
that in an $l$-cluster bipartite graph it is impossible
to have three $l$-cycles $C_1,C_2,C_3$ such that $C_1\cap C_2 \neq
\emptyset$ and $C_2\cap C_3\neq \emptyset$ and $C_1\cap C_3 =
\emptyset$, because then $C_2$ is an $l$-transit cycle.
\end{proof}

We use the following essential lemma in the rest of this section.

\begin{lemma}
\label{lem:bipartition}
Given an integer $k$ and $l$-cluster bipartite graph $G$. 
Either there are $k$ disjoint minors of $H$ in $G$ or there is a set $S_k
\subseteq V(G)$ such that it hits every model of $H$ in
$G$. Furthermore $|S_k| \le 2(k-1)\cdot f_1(2) +
(l-1)k(\max\{f_1(2),l\}) + k-1$.
\end{lemma}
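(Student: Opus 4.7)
The plan is to proceed by induction on $k$, using two structural consequences of $l$-cluster bipartiteness. First, every topological model of $H$ in $G$ consists of two $l$-cycles lying in distinct clusters---an $l$-cluster cannot contain two disjoint $l$-cycles, by its maximality as a pairwise-intersecting family---together with a connecting path whose internal vertices avoid every other cluster, since otherwise an $l$-cycle of that third cluster would be an $l$-transit cycle, contradicting bipartiteness. Second, Theorem~\ref{thm:EPPositive} applied cluster-by-cluster produces for each cluster $\mathcal{K}$ a single-cycle hitting set $T_{\mathcal{K}}$ of size at most $f_1(2)$ meeting every $l$-cycle of $\mathcal{K}$; moreover, in a bipartite $l$-cluster graph every $l$-cycle through a vertex of $\mathcal{K}$ lies entirely in $\mathcal{K}$, so $T_{\mathcal{K}}$ in fact hits every $l$-cycle that uses any vertex of $\mathcal{K}$.

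For the base case $k=1$, if $G$ has no topological model of $H$ we set $S_1=\emptyset$; otherwise fix one model $M$ and take $S_1$ to be a separator $R$ of size at most $(l-1)\max\{f_1(2),l\}$ of the type constructed below---this accounts for the non-iterative part of the formula. For the inductive step $k\to k+1$, assume the claim for $k$ and suppose $G$ contains no $k+1$ pairwise disjoint topological $H$-models. Pick any model $M=(\mu(C_a),\mu(e),\mu(C_b))$ in $G$ (if no such model exists, $S_{k+1}=\emptyset$ works). Let $\mathcal{K}_a,\mathcal{K}_b$ be the clusters containing $\mu(C_a)$ and $\mu(C_b)$, and define
\[
  B \;:=\; T_{\mathcal{K}_a}\,\cup\,T_{\mathcal{K}_b}\,\cup\,\{v\}\,\cup\,R,
\]
where $v$ is an internal vertex of $\mu(e)$ and $R$ is a separator of size at most $(l-1)\max\{f_1(2),l\}$ isolating $V(M)$ from $V(G)\setminus V(M)$ in $G-(T_{\mathcal{K}_a}\cup T_{\mathcal{K}_b}\cup\{v\})$. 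Then $|B|\le 2f_1(2)+(l-1)\max\{f_1(2),l\}+1$, and $G-B$ remains $l$-cluster bipartite because any $l$-transit cycle in $G-B$ would already be one in $G$.

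Now I would apply the induction hypothesis to $G-B$. Either $G-B$ contains $k$ pairwise disjoint $H$-models---in which case, since $R$ separates $V(M)\setminus B$ from its complement in $G-B$ and $V(M)\setminus B$ contains no surviving $l$-cycle (the $l$-cycles of $\mathcal{K}_a,\mathcal{K}_b$ are all destroyed by $T_{\mathcal{K}_a},T_{\mathcal{K}_b}$), these new models must lie entirely in $V(G)\setminus V(M)$ and thus combine with $M$ to yield $k+1$ disjoint models in $G$, contradicting our assumption---or there is a hitting set $S'$ of size at most $|S_k|$ for $G-B$, whence $B\cup S'$ hits every $H$-model of $G$ and has size at most $|S_k|+2f_1(2)+(l-1)\max\{f_1(2),l\}+1=|S_{k+1}|$, as required.

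The main technical obstacle will be constructing the separator $R$ of the claimed size. I plan to argue via Menger's theorem that the minimum $(V(M),V(G)\setminus V(M))$-separator in $G-(T_{\mathcal{K}_a}\cup T_{\mathcal{K}_b}\cup\{v\})$ equals the maximum number of internally disjoint paths between these two sets, and then to show that any family of $(l-1)\max\{f_1(2),l\}+1$ such paths would force either two disjoint $l$-cycles inside a single cluster or an $l$-transit cycle, both contradicting the bipartite $l$-cluster hypothesis---the factor $l-1$ reflecting that an alternative connection must traverse at most $l-1$ ``cluster interfaces'', and the $\max\{f_1(2),l\}$ factor coming from a single-cycle Erd\H{o}s--P\'osa bound applied at each interface. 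Algorithmically the construction is uniform: Lemma~\ref{lem:lclusterfinder} locates the clusters, Theorem~\ref{thm:EPPositive} produces each $T_{\mathcal{K}}$, and $R$ is computed by a standard polynomial-time max-flow.
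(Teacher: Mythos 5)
Your proposal has two genuine gaps. The first is structural: you assert that every topological model of $H$ in $G$ consists of two $l$-cycles lying in distinct clusters. But $H$ is two cycles of lengths $l\ge s$ joined by an edge, so the image of the smaller cycle need only have length at least $s$; it can be a cycle of length at most $l-1$ that lies in no $l$-cluster at all, and a model can also have both of its cycles inside the subgraph of a single cluster (a cluster is a maximal family of pairwise intersecting $l$-cycles, which does not preclude its union from containing $H$). These are exactly the cases your induction never hits: your set $B$ only destroys $l$-cycles of the two clusters of the chosen model $M$ and one path vertex, so after the induction there remain unhit models using short cycles or a single cluster. The paper deals with these by a three-stage construction: a Menger argument between a super-source attached to one side of a bipartition of the clusters and a super-sink attached to the other side (each separator vertex being blown up to an $f_1(2)$-size hitting set of its cluster), then a hitting set for the at most $k-1$ clusters that individually contain $H$, and finally a second Menger argument between the surviving clusters and a maximum family of disjoint short cycles (length $\le l-1$), where $(l-1)k$ disjoint paths are greedily thinned to $k$ disjoint models because each short cycle can serve at most $l-1$ paths; this is also where the factors $l-1$ and $\max\{f_1(2),l\}$ in the bound actually come from.

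The second gap is the separator $R$. There is no reason a set of size $(l-1)\max\{f_1(2),l\}$ separating $V(M)$ from the rest of $G$ exists: $G$ could contain arbitrarily many vertex-disjoint paths entering and leaving $V(M)$ while the complement of $M$ is, say, acyclic, so these paths create neither two disjoint $l$-cycles in a cluster nor an $l$-transit cycle, and your claimed Menger contradiction never materialises. Without $R$, models of $H$ in $G-B$ may reuse vertices of $M$ (as connecting-path vertices or as a short second cycle), so you can neither combine them with $M$ nor conclude the inductive bound. This local ``peel off one model and isolate it'' strategy is precisely what the paper's global super-source/super-sink Menger arguments are designed to replace: separator vertices found by Menger are expanded into hitting sets of whole clusters or whole short cycles, which is what makes the hitting-set side of the dichotomy work.
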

\begin{proof}
We break the proof into three steps. First we either find $k$ disjoint
models of $H$ which have $2$ cycles of length at least $l$ or a set
$S_1\subseteq G$ of size
at most $k-1+(k-1)\cdot f_1(2)$ which hits every model of $H$ in $G$ which
has two $l$-cycles. We know every such model has its cycles in two
different $l$-clusters.

Let $\mathcal{C}$ be the set of all $l$-clusters in $G$.
If there is an element in $\mathcal{C}$ which does not
have a path to (or from) any other element of $\mathcal{C}$ in $G$
then it does not participate in any minor of $H$ as required in the
above so we can ignore them. We partition the
rest of $\mathcal{C}$ into two
partitions $\mathcal{C}_1,\mathcal{C}_2$ such that for any element in
$\mathcal{C}_1$ there is a path to some element in $\mathcal{C}_2$. As
non of the elements has an $l$-transit cycle, we always have this bi-partition.

We add a vertex $v_1$ to $G$ and for each $C'\in \mathcal{C}_1$ an edge from 
$v_1$ to a vertex in $C'$.
Similarly, add a vertex $v_2$ and for each $C''\in \mathcal{C}_2$ an edge from 
one vertex of $C''$ to  $v_2$.

By Menger's theorem, either there are $k$ disjoint paths from $v_1$ to
$v_2$ or there is a set of vertices of size at most $k-1$ which disconnects
$v_1$ from $v_2$. 

First we claim that if there are $k$ disjoint paths from $v_1$ to $v_2$,
then we have $k$ disjoint copies of $H$ as required. It is clear that any path
from $v_1$ to $v_2$ corresponds to a model of $H$ in $G$ with both cycles in 
$G[\mathcal{C}]$. On the
other hand, two models from two
disjoint paths may intersect only if they go through each others
components in $\mathcal{C}_1$ or $\mathcal{C}_2$. But this cannot happen, as otherwise we
have an $l$-transit cycle in $\mathcal{C}$.

Similarly, if there is a vertex set $W$ of size at most $k-1$ that disconnects $v_1$ and
$v_2$, then for every $v\in W \cap V(G[\mathcal{C}_1\cup\mathcal{C}_2])$ let
$S_v$ be the set of vertices of size at most $f_1(2)$ which hits every
cycle of length at least $l$ in the $l$-cluster that
$v$ belongs to. Let $S_1 = W\cup
\bigcup_{v\in W} S_v$. Then $S_1$ is a hitting 
set of every model of $H$ in $G$ which obtained from $2$ disjoint $l$-cycles. But 
size of $S_1$ is at most $k-1+(k-1)\cdot f_1(2)$ as claimed.

Now in the second step we consider each $l$-cluster in $G-S_1$. Each
$l$-cluster is strongly connected. Suppose there are $t$ disjoint $l$
clusters $\mathcal{C}_1,\ldots,\mathcal{C}_t$ such that for all
$i\in[t] : H\topo
\mathcal{C}_i$. If $t\ge k$ then we have $k$
disjoint models of $H$ in $G$. Otherwise for all $i\in [t]$ we can
choose a set $S'_i\subseteq V(\mathcal{C}_i)$ of size at most $f_1(2)$
vertices such that $\mathcal{C}_i - S'_i$ has no $l$-cycle. Let
$S_2=\bigcup_{i=1}^t S'_i$. We have $|S_2| \le (k-1) f_1(2)$. In
$G-S_1-S_2$ there is no model of $H$ which has both of its cycles in
one $l$-cluster.

In the third step we proceed on $G-S_1-S_2$. Take a set $\mathcal{C}'$
of all $l$-clusters in $G-S_1-S_2$. Take a set of \emph{corresponding small
cycles $\mathcal{C}''$} of
maximum size which consisting of disjoint cycles of size at least $s$ in
$G-S_1-S_2-\mathcal{C}'$. By our choice of $S_1,S_2,\mathcal{C}'$ it
is clear that every corresponding small cycle has length at most
$l-1$. Like
a first step, add a vertex $v_1$ to $G-S_1-S_2$
and for each $C'\in \mathcal{C}'$ an edge from 
$v_1$ to a vertex in $C'$.
Add a vertex $v_2$ and for each $C''\in \mathcal{C}''$ an edge from 
one vertex of $C''$ to  $v_2$. By Menger theorem either there are
$(l-1)k$ internally vertex disjoint
paths $\mathcal{P}$ from $v_1$ to $v_2$ or there is a hitting set of size at most
$k(l-1)-1$ which hits every path from $v_1$ to $v_2$. 

In the first
case we can find $k$ disjoint
models of $H$ as follows. For $P=\{v_1,u_1,\ldots,u_n,v_2\} \in
\mathcal{P}$, we say $u_1$ is the start point and $u_n$ is the end
point of the path $P$. We know that each path in $\mathcal{P}$ denotes a model of
$H$. Furthermore, by the first step (choice of $S_1$) start point of each
two paths are on two disjoint
$l$-cycles $c_1,c_2$ and there is no path between
$c_1,c_2$.

Each corresponding small cycle can route at most $l-1$ paths. We give
the following recursive algorithm to find a set $\mathcal{H}$ of at
least $k$ disjoint models of $H$ in
$G-S_1-S_2$. Take a path $P\in \mathcal{P}$ and let $c$ be its endpoint
corresponding small cycle. Suppose $P'_1,\ldots,P'_t\in \mathcal{P}$
intersecting $c$. We know that $t\le l-1$ as
the size of $c$ is at most $l-1$.
Put the corresponding model of $H$ w.r.t. $P$
in $\mathcal{H}$. Set $\mathcal{P}:=\mathcal{P}\setminus
\{P'_1,\ldots,P'_t\}$ and recurse. 

In each step, the algorithm finds a
model of $H$ which is disjoint from any other model which are already
in $\mathcal{H}$, so at the end $\mathcal{H}$ consists of disjoint
models of $H$. Furthermore in each step algorithm eliminates at most
$l-1$ paths from $\mathcal{P}$. So algorithm will run for at least $k$
steps, that follows $\mathcal{H}$ has at least $k$ disjoint
models of $H$.

If there is a hitting set $S$, then for
every $v\in S$ we create a set $S_v$ as follows. We set $S_v:=\{v\}$. If $v\in S\cap
\mathcal{C}'$ set the $S_v\subseteq V(G)$ of size at most
$f_1(2)$ which hits every $l$-cycle in strongly connected component of
$v$. If a vertex $v\in S\cap
\mathcal{C}''$ then $v\in c$ for some $c \subseteq
\mathcal{C}''$ and we set $S_v := V(c)$, in this case we have $|S_v|\le l-1$. Set
$S_3:=\bigcup_{v\in S} S_v$. We claim $S_3$ hits every model of $H$ in
$G-S_1-S_2$. 

Suppose there is a model $H'$ of $H$ in
$G-S_1-S_2-S_3$ consisting of cycles $c_1,c_2$ with a path from $c_1$
to $c_2$. In our construction, there is no path between $v_1,v_2$ by the choice of
$S_3$, so either the $c_1$ has no incoming edge from $v_1$ or the
$c_2$ has no edge to $v_2$. 

We claim either $c_1$ or $c_2$ does not exist so there is no such
$H'$ at all. Suppose $c_1$ exists. We know that
$c_2$ is not in any $l$-cluster of $G-S_1-S_2$ (recall the choice of
$S_2$). So $c_2$ is a cycle disjoint from any $l$-cluster and
therefore either is in $\mathcal{C}''$ or intersects $c'\in
\mathcal{C}''$. As $c_1$ exists, it means we did not take any vertex
from its $l$-cluster into $S_3$, so there is a path from $v_1$ to
$c_1$ and therefore to
$c_2$. In order to destroy connections from $v_1$ to $v_2$ we chose a
vertex $v\in c'$ by Menger
algorithm and therefore $V(c')\cap S_3 = V(c')$, but then $c_2\cap S_3 \neq
\emptyset$, so $c_2$ does not exist.

The size of $S_3$ is at most
$(l-1)(k-1)(\max\{f_1(2),l\})$. 
There is no model of $H$ in
$G-S_1-S_2-S_3$, we set $S_k=S_1\cup S_2\cup S_3$. The size of $S_k$
is at most $2(k-1)\cdot f_1(2) +
(l-1)k(\max\{f_1(2),l\}) + k-1$ as claimed.
\end{proof}

\begin{lemma}
\label{lem:main}
   	There is  a 
   function $f(k, 
   w)\colon \N\times\N\rightarrow \N$ such that for every $k, 
   w\geq 0$, every 
   digraph $G$ of directed 
   tree-width at most $w$ either contains $k$ disjoint 
   topological models of $H$ or a set of at most $f(k, w)$ 
   vertices 
   hitting every model of $H$. 
\end{lemma}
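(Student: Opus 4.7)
The plan is to adapt the tree-decomposition induction from Lemma~\ref{lem:bd-dtw} while handling the fact that $H$ is not strongly connected, and to invoke Lemma~\ref{lem:bipartition} as the base case for $l$-cluster bipartite digraphs. Set $l := |H|+1$ so that $l$-cycles are large enough for Lemma~\ref{lem:bipartition}, and proceed by induction on $k$. The base case $k = 0$ is trivial. For the inductive step, fix a directed tree decomposition $(T, \beta, \gamma)$ of $G$ of width $w$.

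If $G$ is $l$-cluster bipartite, Lemma~\ref{lem:bipartition} yields either $k$ disjoint topological models of $H$ or a hitting set whose size depends only on $k$, $l$ and $f_1(2)$, so we are done. Otherwise $G$ contains either two $l$-clusters sharing a vertex or an $l$-transit cycle; in either case two disjoint $l$-cycles admit a directed path between them and therefore witness a topological model of $H$. Since every $l$-cluster is strongly connected, Lemma~\ref{lem:scc-dtw} confines it to a subtree of $T$. I would then pick a node $t \in V(T)$ of minimal height such that $G[\beta(T_t)]$ already hosts such a non-bipartite witness; the witness directly provides one topological model $M$ of $H$ residing in $G[\beta(T_t)]$.

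The induction hypothesis is then applied to $G - \Gamma(t)$, which still has directed tree-width at most $w$ and strictly smaller order, with parameter $k - 1$. One either obtains $k - 1$ additional disjoint $H$-models (which combined with $M$ yield $k$ in total), or a hitting set $S'$ of size at most $f(k-1, w)$; the set $S' \cup \Gamma(t)$ then hits every $H$-model of $G$ and has size at most $f(k-1, w) + w + 1$. Iterating this recursion, augmented with the bipartite base contribution from Lemma~\ref{lem:bipartition}, produces the required function $f(k, w)$.

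The main obstacle is verifying that deleting $\Gamma(t)$ indeed destroys every $H$-model meeting $\beta(T_t) \setminus \Gamma(t)$, despite $H$ being only weakly connected. Each cycle of an $H$-model is strongly connected and hence, by Lemma~\ref{lem:scc-dtw}, confined to a subtree of $T$; the $\Gamma(t)$-normality property of directed tree decompositions then forces the joining directed path of any $H$-model whose cycles lie in distinct child subtrees of $t$ to pass through $\Gamma(t)$, while an $H$-model contained entirely inside a single child subtree of $t$ would violate the minimality of $t$ (the witness would already be found in a smaller subtree). This more delicate case analysis, replacing the clean strongly-connected argument used in Lemma~\ref{lem:bd-dtw}, is the technical heart of the proof.
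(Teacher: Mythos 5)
There is a genuine gap, and it sits exactly at the point you call the ``technical heart.'' Your inductive step recurses on $G-\Gamma(t)$ and claims that $S'\cup\Gamma(t)$ hits every model, but $\Gamma(t)$ does \emph{not} hit every $H$-model meeting $\beta(T_t)\setminus\Gamma(t)$ when $H$ is only weakly connected. Normality of $\beta(T_t)$ only forbids directed walks that \emph{start and end} in $\beta(T_t)$ while leaving it; a model of $H$ consisting of one cycle inside $\beta(T_t)\setminus\Gamma(t)$, one cycle outside $\beta(T_t)\cup\Gamma(t)$, and a single one-directional connecting path can cross the boundary once and avoid $\Gamma(t)$ entirely. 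Likewise, two sibling subtrees $T_{c_1},T_{c_2}$ may be joined by a direct edge from $\beta(T_{c_1})$ to $\beta(T_{c_2})$ that meets no guard, so even your ``cycles in distinct child subtrees'' case is not forced through $\Gamma(t)$; and a model whose cycles lie in two different children of $t$ is inside $\beta(T_t)$ but in no single child subtree, so it neither contradicts the minimality of $t$ nor is found by the recursion. (Your appeal to Lemma~\ref{lem:scc-dtw} also overstates it: the lemma says every node of the connecting subtree has a bag or guard meeting the strong subgraph; it does not confine a cycle to one child subtree.) These ``straddling'' models are precisely why the paper's proof is long: it uses \emph{special} directed tree-decompositions (so that subtrees induce strong components), chooses a lexicographically minimal tuple $F$ of children with $H\topo F(t)$, recurses on $G-F(t)$ (which also fixes the disjointness problem in your step --- recursing on $G-\Gamma(t)$ can return $k-1$ models intersecting your model $M\subseteq G[\beta(T_t)]$), and then handles the remaining models by partitioning into $G_0,G_1,G_2$ and invoking Lemma~\ref{lem:bipartition} three separate times, with Claim~1 (via Lemma~\ref{lem:scc-dtw}) ruling out three long cycles chained through $G_2$.

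A secondary error: setting $l:=|H|+1$ breaks the correspondence between $H$-models and $l$-clusters. The cycles of $H$ have length less than $|H|$, so a topological model of $H$ may use cycles shorter than $|H|+1$; such models live outside every $l$-cluster and are not hit by the cluster-based hitting sets of Lemma~\ref{lem:bipartition}. The parameter $l$ must be the length of the longer cycle of $H$ (with $s$ the shorter), as in the paper. Finally, note that ``$G$ not $l$-cluster bipartite'' is not needed as a dichotomy at the level of the whole graph: the paper instead arranges, through the choice of $F$ and the sets $S(t)$, $S$, that the graphs to which Lemma~\ref{lem:bipartition} is applied are $l$-cluster bipartite, which is where the real work lies.
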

\begin{proof}
For the proof 
    of the lemma we 
    need a special form of directed tree decompositions. 
    A directed tree-decomposition $(T, \beta, \gamma)$ is 
    \emph{special}, if 
    \begin{enumerate}[nosep]
    \item for all $e = (s, t) \in E(T)$ the set 
        $\beta(T_t) := \bigcup_{t \preceq_T t'} \beta(t')$ 
        is a
        strong component of $G-\gamma(e)$ and 
        \label{def:dtw:1}
        \item $\bigcup_{t \prec_T t'}\beta(t') \cap 
        \bigcup_{e \sim t} \gamma(e)=\emptyset$ for every $t 
        \in V(T)$.\label{def:dtw:2}
    \end{enumerate} 
    It was shown in \cite{KreutzerO14} that every digraph of 
    directed tree width $w'$ has a special directed tree 
    decomposition of width at most $5w'+10$. 

    We set $f(0, w) = f(1, w) = 0$ and for $k>1$ we set $f(k,w) 
   := 5w+10+2f_1(2) + f(k-1,w) + 3|S_k|$, where $S_k$ is as provided
   in the Lemma~\ref{lem:bipartition}.

   Let $G$ be a digraph of directed tree-width at most $w$ and 
   let $(T, \beta, \gamma)$ be a directed tree-decomposition of 
   $G$ of width $w$.  For $t\in T$
   let $G_t:= G[\beta(T_t)]$. We prove the lemma by induction on $k$. 
   Clearly, for $k=0$ or $k=1$ there is nothing to show.
   So let $k>1$. If $H\not\topo G$, then again there is nothing 
   to show. Otherwise, let $t$ be a node in $T$ of minimal 
   height such that $H \topo G_t$. By definition of special 
   directed 
   tree-decompositions, for every successor $c$ of $t$ the
   digraph $G_c$ is strongly connected.  So if $c$ is a 
   successor of $t$ then 
   $G_c$ does not contain 
   two
   disjoint cycles of length at least $l$ as otherwise $H \topo 
   G_c$ contradicting the choice of $t$. So there is a hitting 
   set $S_c$ of size at most $f_1(2)$
   such that $G_c-S_c$ has no cycle of length at least
   $l$.

    Let $\sqsubseteq$ be a linearisation of the topological 
    order of the children of $t$. Let $F=
   (c_1,\ldots,c_m)$ be the tuple of children of $t$  satisfying 
   the following conditions:

    \begin{enumerate}[nosep]
    \item $H\topo  F(t) := \Gamma(t) \cup\bigcup_{c\in F} G_c$.
    \item Subject to 1, $F$ is the
      lexicographically smallest tuple w.r.t. $\sqsubseteq$. 
    \end{enumerate}

    It is easy to see that there are no $3$ distinct nodes 
    $c^1,c^2,c^3 \in
   F$  such that there is a cycle of length at least $l$ in
   $G_{c^1} - \Gamma(t), G_{c^2} - \Gamma(t),G_{c^3} - \Gamma(t)$, as
   otherwise we could choose a smaller set $F$ satisfying the 
   conditions, contradicting the fact that $F$ satisfies the 
   second condition.

    So suppose there are at most two nodes $c^1,c^2$ in
    $F$ containing  a
    cycle of length at least $l$ in $G_{c^1} - \Gamma(t)$ and 
    $G_{c^2} -
    \Gamma(t)$, respectively. Let $S(t) := \Gamma(t) \cup 
    S_{c^1} 
    \cup S_{c^2}$. By construction, $S(t)$ hits every cycle of 
    length at least $l$ in $F(t)$.  Hence, in $G_0 := F(t) - 
    S(t)$
    there is no minor of $H$ but there is a minor of $H$ in 
    $F(t)$. 

    If $G-F(t)$ contains $k-1$ disjoint 
   topological models of $H$ then this implies that $G$ has 
   $k$ disjoint models of $H$ and we are done. 
   Otherwise, by induction hypothesis, there is a set 
   $S\subseteq V(G-F(t))$ of order at most
   $f(k-1,w)$ such that $H\not\topo G-F(t)-S$. 
   
    Note that every model of $H$ 
    in $G - S - S(t)$ must contain vertices of $G_0$ and also
   vertices of $G - S - S(t)-F(t)$.
    Let $G_1 := (G - S - S(t)-F(t)) \cap G[\beta(T_t)]$ and $G_2 
    := (G - S - S(t)-F(t))- G_1$.

    In the rest of the proof, we will first construct a hitting 
    set for every model of $H$ in $G_{0,1}:= G_0\cup G_1$,
then construct a hitting set of models of $H$ which have both of 
their cycles
in $G_2$ connected by a path containing vertices of $G_1\cup 
G_0$ and finally find a hitting set of models of $H$ which have 
one cycle
in $G_2$ and the other in $G_1$. 
In any of the three cases, if we fail to find the required
hitting set, we output $k$ disjoint models of $H$. As no other 
choice
of any model of $H$ remains, we are done with the proof.

By construction there is no cycle of length at least $l$
in $G_0$. Also by construction there is no minor of $H$ in each of
$G_0,G_1,G_2$. 
By Lemma~\ref{lem:scc-dtw} there is no path $P$ in $G[G_0\cup G_1 \cup G_2]$ with start and
end point in $G_1$ such that $P \cap G_2 \neq \emptyset$. 

If $H\topo G_{0,1}$, then let 
$\mathcal{C}_{G_{0,1}}$ be the set of all
$l$-clusters in $G_{0,1}$. As $G_0$ does not contain any cycle 
of length at least $l$, the clusters in $\mathcal{C}_{G_{0,1}}$ 
are all contained in $G_1$. Furthermore, no two distinct clusters can 
share a vertex  as otherwise there would be a minor of $H$ in
$G_1$. Finally, in $G_{0,1}$ there cannot be an $l$-transit 
cycle as otherwise %\stephan{Das hier ist nopch seltsam}
the choice of $F$ would not have been minimal w.r.t. $\sqsubseteq$. For, 
suppose there was an 
$l$-transit cycle $C_1$ in $G_{0,1}$, i.e.~there are $l$-cycles $C_1, C_2, C_3$ 
in $G_{0,1}$ and a path from $C_2$ to $C_3$ containing a vertex of $C_1$.
As $G_0$ does not contain any $l$-cycle, $C_1, C_2, C_3$ are all in $G_1$. But 
as $G_1$ does not contain $H$ as a topological minor, the subpath of $P$ from 
$C_2$ to $C_1$ and also the subpath of $P$ from $C_1$ to $C_3$ must contain a 
vertex of $G_0$. But this implies that $F$ does not satisfy the second
condition.
 
So $G_{0,1}$ is a $l$-cluster bipartite graph. The following
Lemma~\ref{lem:bipartition} shows
that in any $l$-cluster bipartite graph either there are $k$ disjoint models
of $H$ or a small set of vertices are a hitting set for all models of $H$. So in $G_{0,1}$
either we find $k$ disjoint minors of $H$ or there is a set
$S_{G_{0,1}}$ that hits every minor of $H$ in $G_{0,1}$. In the first
case we are done, so suppose we have the set $S_{G_{0,1}}$.
Now we have to consider all $l$-clusters in $G_2$.
\begin{Claim}
\label{clm:no3cycleG2}
There are no $3$ cycles
$c_1,c_2,c_3$ of length at least $l$
in $G_2$ such that there is a
path $P_1$ from $c_1$ to $c_2$ and a path $P_2$ from $c_2$ to $c_3$ in
$G-S-S(t)$.
\end{Claim}
\begin{ClaimProof}
We know that there is no minor of $H$ in $G_2$. If there are $3$
cycles as stated in the claim, then both paths $P_1, P_2$ must contain a vertex 
of
$G_0 \cup G_1$. But then there is a path between two vertices of 
$G[\beta(T_t)]-\Gamma(t)$ which
does not go through $\Gamma(t)$ but intersects $G-G[\beta(T_t)]$. But this is  
impossible by
Lemma~\ref{lem:scc-dtw}.
\end{ClaimProof}

By Claim~\ref{clm:no3cycleG2} and Lemma~\ref{lem:bipartition} and the
fact that all $l$-clusters in $G_2$ are vertex disjoint, 
either there are $k$ disjoint
models of $H$ in $G - S - S(t) -S_{G_{0,1}}$ such that both of their cycles are in $G_2$ or
there is a set of vertices $S_{G_2}\subseteq V(G - S - S(t))$ such that there is no
model of $H$  in $G' = G - S - S(t) -S_{G_{0,1}} - S_{G_2}$ with both of its cycles in
$G_2 - S_{G_{0,1}} - S_{G_2}$. In the first case we are done. So suppose we
have $S_{G_2}$ as in Lemma~\ref{lem:bipartition}.

Any model of $H$ in $G'$ must map one cycle of $H$ to  
   $G_{0,1} - S_{G_{0,1}} - S_{G_2}$ and the other to the $G_2 - S_{G_{0,1}} - S_{G_2}$.

Let $\mathcal{C}$ be the set of clusters in $G - S - S(t) - S_{G_{0,1}} -
S_{G_2}$. All $l$-clusters in $\mathcal{C}$ are
pairwise vertex disjoint.

There is no path between two clusters $c_1,c_2 \in G[\mathcal{C}]\cap
G_1 - S_{G_{0,1}} - S_{G_2}$ because there is no such path fully in $G_{0,1}$ 
and it
cannot go through a vertex $v \in G_2- S_{G_{0,1}} -
S_{G_2}$ by Lemma~\ref{lem:scc-dtw}. There is no cluster $c$ in
$G_{0,1}-S_{G_{0,1}} - S_{G_2}$ such that it has a path to a cluster $c'\in
G_2-S_{G_{0,1}} - S_{G_2}$ and a path from a cluster $c'' \in G_2 -
S_{G_{0,1}} - S_{G_2}$, as otherwise there is a path between $c',c''$ in
$G'$. So $\mathcal{C}$ is a bipartition and by
Lemma~\ref{lem:bipartition} either there are $k$ disjoint minors of
$H$ in $G'$ or
there is a hitting set $S_{G'}$ in $G'$.
So either there are $k$ disjoint minors of $H$ in $G$ or a set $S_G
= (G-G') \cup S_{G'}$ of size at most
$w+2f_1(2) + f(k-1,w) + 3|S_k|$ which hits every minors of $H$ in $G$.
\end{proof}

\begin{proof}[Proof of Theorem~\ref{thm:two-random-cycles}]
   Let $H$ be as in the statement of the theorem with two cycles
   $C_1,C_2$. Let $l,s$ be the
   length of $C_1,C_2$ resp. such that $l\ge s$. W.l.o.g suppose there
   is an edge from $C_1$ to $C_2$. 
    % \saeed{fixme}Before proving the claim let us first see how the lemma 
    % follows from the claim.
	
	Let $g \st \N \rightarrow \N$ be the function as defined in 
	Theorem~\ref{thm:grid}. We claim that $h\st \N\rightarrow \N$ 
	defined by $h(k) := f(k, g((k+1)\cdot l))$ witnesses the Erd\H os-P\'osa 
	property of $H$. To see this, let $G$ be any digraph and let $k \geq 1$. If 
	the directed tree-width of $G$ is at least $g((k+1)\cdot l)$, then by 
	Theorem~\ref{thm:grid}, $G$ contains the 
	cylindrical wall 
	$W_{(k+1)\cdot l}$ of order $(k+1)\cdot l$ as topological minor, which 
	contains $k$ disjoint copies of $H$ as topological minor. Otherwise, i.e.~if 
	the directed tree-width of $G$ is $<g((k+1)\cdot l)$, then by
        Lemma~\ref{lem:main}, $G$ contains $k$ disjoint topological
        models of $H$ or a set $S$ of 
	at most $f(k, g((k+1)\cdot l))$ vertices such that $H\not\topo
        G-S$.
\end{proof}

% By Lemma~\ref{lem:biglemma} if the
% graph $H$ is consists of $3$ cycles each  together such that the
% last cycle is bigger than the middle cycle then $H$ does not have the
% Erd\H os-P\'osa property. But by the Theorm~\ref{lem:two-random-cycles} we saw that every
% pair of two cycles connected by a single edge together have the
% Erd\H os-P\'osa property. If one find a counter example for $3$ cycles of the same
% size chained together then the case of vertex cyclic graph can be
% closed.\saeed{not quite}

\section{Conclusion}

In this paper we studied the generalised Erd\H os-P\'osa property for directed 
graphs with respect to topological and butterfly minors. We provided an exact 
generalisation of Robertson and Seymour's classification of undirected graphs 
with the Erd\H os-P\'osa property to strongly connected digraphs. Furthermore, 
for the natural and much larger class of vertex-cyclic digraphs we obtained an 
almost exact characterisation. We also provide a novel approach to
prove the  Erd\H os-P\'osa property holds in the special case of vertex cyclic
graphs.

We believe that the techniques developed here 
will provide the tools to give a complete characterisation of the vertex-cyclic 
digraphs with the Erd\H os-P\'osa property but we leave this to future research.

% \clearpage

% %\bibliographystyle{plain}
\bibliography{dep}

\begin{thebibliography}{1}

\bibitem{BangJensenG10}
J.~Bang-Jensen and G.~Z. Gutin.
\newblock {\em Digraphs - Theory, Algorithms and Applications}.
\newblock Springer, 2nd edition, 2010.

\bibitem{Diestel05}
R.~Diestel.
\newblock {\em Graph Theory}.
\newblock Springer-Verlag, 3rd edition, 2005.

\bibitem{EP}
P.~Erd\H{o}s and L.~P\'osa.
\newblock On independent circuits contained in a graph.
\newblock {\em Canadian Journal of Mathematics}, 1965.

\bibitem{JohnsonRobSeyTho01}
T.~Johnson, N.~Robertson, P.~D. Seymour, and R.~Thomas.
\newblock Directed tree-width.
\newblock {\em J. Comb. Theory, Ser. B}, 82(1):138--154, 2001.

\bibitem{KawarabayashiK15}
{K.} Kawarabayashi and S.~Kreutzer.
\newblock The directed grid theorem.
\newblock In {\em ACM Symp. on Theory of Computing (STOC)}, 2015.

\bibitem{KreutzerO14}
S.~Kreutzer and S.~Ordyniak.
\newblock Width-measures for directed graphs and algorithmic applications.
\newblock In {\em Quantitative Graph Theory}. CRC Press, 2014.

\bibitem{ReedRST96}
B.~A. Reed, N.~Robertson, P.~D. Seymour, and R.~Thomas.
\newblock Packing directed circuits.
\newblock {\em Combinatorica}, 16(4):535--554, 1996.

\bibitem{GMV}
N.~Robertson and P.~D. Seymour.
\newblock Graph minors {V}. {E}xcluding a planar graph.
\newblock {\em Journal of Combinatorial Theory, Series B}, 41(1):92--114, 1986.

\bibitem{Younger73}
D.~H. Younger.
\newblock Graphs with interlinked directed circuits.
\newblock In {\em Proc. of the Midwest Symosium on Circuit Theory}, volume~2,
  pages XVI 2.1 -- XVI 2.7, 1973.

\end{thebibliography}

% \clearpage

\end{document}